\DeclareMathAlphabet{\pazocal}{OMS}{zplm}{m}{n}
\newif\ifshowchanges
\newcommand{\added}[1]{%
  \ifshowchanges
    \textcolor{blue}{#1}%
  \else
    #1%
  \fi
}
\newcommand{\deleted}[1]{%
  \ifshowchanges
    \ifmmode
      \text{\textcolor{red}{\sout{$#1$}}} % HORIZONTAL strike in MATH
    \else
      \textcolor{red}{\sout{#1}}        % HORIZONTAL strike in TEXT
    \fi
  \fi
}
\theoremstyle{thmstyleone}%
\newtheorem{theorem}{Theorem}%  meant for continuous numbers
\newtheorem{proposition}[theorem]{Proposition}
\newtheorem{lemma}[theorem]{Lemma}
\newtheorem{definition}[theorem]{Definition}
\newtheorem{assumption}[theorem]{Assumption}
\newtheorem{corollary}[theorem]{Corollary}
\newcommand\undersetbrace[2]{\underset{#1}{\underbrace{#2}}}
\newcommand\defeq\triangleq
\newcommand\defequiv\coloneqq
\begin{document}

\title[Article Title]{Co-Investment under Revenue Uncertainty Based on Stochastic Coalitional Game Theory}

%%=============================================================%%
%% GivenName	-> \fnm{Joergen W.}
%% Particle	-> \spfx{van der} -> surname prefix
%% FamilyName	-> \sur{Ploeg}
%% Suffix	-> \sfx{IV}
%% \author*[1,2]{\fnm{Joergen W.} \spfx{van der} \sur{Ploeg} 
%%  \sfx{IV}}\email{iauthor@gmail.com}
%%=============================================================%%\orcidlink{0009-0000-9145-534X}

%\author*[1]{\fnm{Amal} \sur{Sakr}  }\email{amal.sakr@telecom-sudparis.eu} 
\author*[1]{\fnm{Amal} \sur{Sakr}}
\email{amal.sakr@telecom-sudparis.eu}

\author[1]{\fnm{Andrea} \sur{Araldo}}\email{andrea.araldo@telecom-sudparis.eu}
%\equalcont{These authors contributed equally to this work.}

\author[1]{\fnm{Tijani} \sur{Chahed}}\email{tijani.chahed@telecom-sudparis.eu}
%\equalcont{These authors contributed equally to this work.}
\author[2]{\fnm{Daniel} \sur{Kofman}}\email{daniel.kofman@telecom-paristech.fr}

%\affil*[1]{\orgname{Télécom SudParis, Institut Polytechnique de Paris, }, \orgaddress{\street{19 Place Marguerite Perey}, \city{Palaiseau}, \postcode{91120}, \country{France}}}

\affil*[1]{SAMOVAR, Télécom SudParis, Institut Polytechnique de Paris, 91120 Palaiseau, France}

\affil[2]{Télécom Paris, Institut Polytechnique de Paris, 91120 Palaiseau, France}

%\affil[2]{\orgname{Télécom Paris, Institut Polytechnique de Paris}, \orgaddress{\street{19 Place Marguerite Perey}, \city{Palaiseau}, \postcode{91120}, \country{France}}}

%\affil*[1]{\orgdiv{SAMOVAR}, \orgname{Télécom SudParis, , Institut Polytechnique de Paris}, \orgaddress{\street{19 Place Marguerite Perey}, \city{Palaiseau}, \postcode{91120}, \country{France}}}
%%==================================%%
%% Sample for unstructured abstract %%
%%==================================%%

\abstract{The introduction of new services, such as Mobile Edge Computing (MEC), requires a massive investment that cannot be assumed by a single stakeholder, for instance the Infrastructure Provider (\textit{InP}). Service Providers (SPs) however also have an interest in the deployment of such services. We hence propose a co-investment scheme in which all stakeholders, i.e., the \textit{InP} and the SPs, form the so-called grand coalition composed of all the stakeholders with the aim of sharing costs and revenues and maximizing their payoffs. The challenge comes from the fact that future revenues are uncertain. We devise in this case a novel stochastic coalitional game formulation which builds upon robust game theory and derive a lower bound on the probability of the stability of the grand coalition, wherein no player can be better off outside of it. 
In the presence of \deleted{some correlated} \added{highly dependent} fluctuations of revenues however, stability can be too conservative. In this case, we make use also of profitability, in which payoffs of players are non-negative, as a necessary condition for co-investment\added{, and we derive a lower bound on the probability that co-investment is profitable}.
The proposed framework is showcased for MEC deployment, where computational resources need to be deployed in nodes at the edge of a telecommunication network. Numerical results show high lower bound on the probability of stability when the SPs' revenues are of similar magnitude, even with high levels of uncertainty. In the case where revenues are highly variable however, the lower bound on stability can be trivially low whereas co-investment is still profitable.}

%%================================%%
%% Sample for structured abstract %%
%%================================%%

\keywords{Co-investment, Uncertainty, Coalitional game, Stability, Profitability, \added{Mobile} Edge Computing}
%\footnotetext{A preliminary version of this work has been accepted for presentation at the IEEE International Conference on Communications (ICC) 2025~\citep{sakr2025co}.}
%%\pacs[JEL Classification]{D8, H51}

%%\pacs[MSC Classification]{35A01, 65L10, 65L12, 65L20, 65L70}

\maketitle
\section{Introduction}\label{section: intro}

Several examples exist of technologies that, from a purely technical point of view, would be ready to emerge or scale up, but whose development is hindered by a big economic barrier.
This is the case of technologies requiring the deployment of a massive infrastructure, for which massive investments need to be engaged upfront, and whose precise future profits remain uncertain.
We call Infrastructure Provider (\textit{InP}) the actor responsible for deploying, maintaining and operating the physical infrastructure. We consider that, on top of such an infrastructure, third-party Service Providers (SPs)  run their applications and collect revenues from end users. 

An example of this setting is the electric vehicle charging infrastructure~\citep{badia2019investment}, where the electric grid operator, acting as the \textit{InP}, is tasked with deploying the grid infrastructure, e.g., bringing high-capacity electricity lines to road networks. On top of such grid infrastructure,  electric vehicle charging station providers, which take the role of SPs, can operate their business and get revenues from end-users (drivers)~\citep{fredriksson2019optimal}.

Another example, which is also the one  we consider in the numerical results,
involves the deployment and operation of \added{Mobile} Edge Computing (\added{M}EC)~\citep{satyanarayanan2017emergence}. \added{M}EC is a computing architecture that decentralizes data processing by bringing it to the edge of the network, e.g., base stations~\citep{shi2016edge}. This proximity to end-users reduces latency, improves bandwidth efficiency and enables real-time data processing and analytics \citep{mohan2020pruning}. \added{M}EC brings new revenue opportunities: third-party SPs, e.g., augmented or virtual reality applications~\citep{ong2013virtual}, \citep{reif2008augmented},
automated car function providers~\citep{faisal2019understanding}, \citep{parekh2022review}, can run low-latency applications on the edge nodes, benefiting from proximity to end-users, and collect revenues from them. 
In the \added{M}EC context, the \textit{InP} is the network operator~(NO), which is the only entity owning the nodes at the edge. Any SP willing to run its services at the edge has to pass through a NO. This is thus an unprecedented opportunity for NOs, who could capitalize on their privileged position of exclusive owners of edge nodes~\citep{khan2019edge}.
%~\citep{santoyo2020network}
However, deploying and maintaining \added{M}EC infrastructure at network edges is highly costly~\citep{santoyo2018edge}. This is because \added{M}EC requires establishing and managing distributed computing resources across multiple locations. Setting up these infrastructures demands significant investment in hardware, networking, and ongoing maintenance, making large-scale implementation financially challenging. 
As a result, \added{M}EC has not yet been deployed at scale in today's network.

In the cases we consider in this paper, the \textit{InP} is reluctant to bear the significant capital (CAPEX) and operational expenditures (OPEX) alone.
This reluctance is compounded by the market dynamics, where the revenues generated from the edge services tend to flow to SPs, which directly interact with end users. Consequently, the \textit{InP} often faces challenges in capturing a share of the profits from these services, particularly in highly competitive markets. 
To remove this economic barrier, we propose a co-investment scheme in which players (or stakeholders), i.e., one \textit{InP} and several SPs, co-invest to share costs and revenues. By joining a coalition, they aim to maximize their payoff by deploying infrastructure with the optimal amount of resources and allocating it dynamically among players, based on how much revenue they could get by using the allocated resources.
In this scheme, SPs are strongly incentivized to contribute to infrastructure costs, as it enables them to provide high-quality or innovative services, attract new users, improve satisfaction, and thereby increase revenue.
 
The challenge is whether there is an opportunity for co-investment between the stakeholders under uncertainty. To do so, we first tackle the question of whether this co-investment is the most profitable investment that each stakeholder can make.
Determining this is highly complex due to the uncertainty of revenues, which depend themselves on the stochasticity of end-user demand. 
Demand is influenced by external factors such as market trends, technological shifts, and competitive dynamics, making future revenues inherently difficult to predict. Even small deviations from the expected demand can significantly alter potential revenues and, consequently, the stakeholders’ payoffs, making it difficult to determine whether co-investment is indeed the most profitable option for each of them.
To address this challenge, we cast the problem as a novel stochastic coalitional game, extending the principles of Robust Coalitional Game Theory (RCGT)~\citep{doan2014robust}.
In its basic form, RCGT requires coalitions to be formed and remain stable, i.e., no subset of players would be better off outside the coalition, even in the worst-case scenario. While this approach ensures robustness, it is overly restrictive in the context of co-investment, where risk is an inherent feature of the investments. Requiring no risk, as in the original RCGT framework, would effectively preclude any meaningful investment opportunities. 
To address this limitation, we reformulate RCGT to allow for an acceptable level of risk, and derive a lower bound on the probability that the grand coalition, composed of all the players, is stable, i.e., no subset of players is better off outside of the grand coalition. 

Second, in the case where revenues are highly variable, which stems from \deleted{possible correlations}\added{strong dependence} in time between demand fluctuations, even this probabilistic approach to stability can yield trivially low value for the lower bound.
In this case, profitability, defined as the condition where no player's individual payoff is negative, presents a more significant measure for the opportunity of co-investment. \added{To this end, we derive a lower bound on the probability that co-investment is profitable.} The profitability condition resembles the concept of satisficing~\citep{simon1997models}, where a decision is acceptable if it meets a given threshold, when that threshold is set to zero. In our case however, we determine the infrastructure plan (resources and shares) by maximizing total payoffs, not through a satisficing search. Profitability here reflects satisficing in outcome, but not in the decision process.

Our contributions are:
\begin{itemize}
    \item A co-investment framework under uncertainty:
    we introduce a co-investment model where an \textit{InP} and multiple SPs form a coalition to share the costs and the revenues under uncertainty. The uncertainty we consider in this paper derives from uncertain user demand, which in turn causes stochastic fluctuations in revenues. 
    \item A lower bound on grand coalition stability: we establish a probabilistic guarantee that the grand coalition is stable despite revenue fluctuations, and we prove this result within a stochastic coalitional game. \deleted{We also define and formulate the probability that the co-investment is profitable.}
    \added{\item A lower bound on co-investment profitability: we establish a probabilistic guarantee for profitable co-investment under revenue uncertainty.}
    \item Application to \added{Mobile} Edge Computing infrastructure: we apply the proposed framework to a\deleted{n} \added{M}EC setting, where an \textit{InP} and several SPs cooperate to deploy and maintain edge infrastructure over a defined investment period. Costs and revenues are shared among players.
\end{itemize}
We show numerically that when \deleted{the investment period is sufficiently long (e.g., 5 years) and }the SPs are of similar size, the lower bound on the probability of stability reaches~1. In cases where this lower bound is uninformative, the \added{lower bound on the} probability of profitable co-investment is also computed and found to be high. We release our code as open source to ensure reproducibility.\footnote{\url{https://github.com/amalsakr-tsp/Anor}}

The rest of this paper is as follows. Section~\ref{section: related} reviews related work. Section~\ref{Section: meth} delves into the co-investment scheme and coalitional game model. Section~\ref{Section: ana-results} provides results in the context of MEC. Section~\ref{section: conclusion}  presents the conclusion and perspectives.
%%%%%%%%%%%%%%%%%%%%%%%%%%%%%%%
%%%%%%%%%%% RELATED WORK %%%%%%
%%%%%%%%%%%%%%%%%%%%%%%%%%%%%%%
\section{Related work} \label{section: related}

As our work consists of modeling co-investment using coalitional game theory with application to \added{M}EC deployment, our related work is categorized into three sub-sections: co-investment strategies, coalitional game theory and the economic aspects of \added{M}EC.

\subsection{Co-investment strategies}

Cooperation and infrastructure sharing among industry stakeholders have become essential to address the high costs of infrastructure deployment. The Connected Collaborative Computing Network~\citep{digitalinfrastructure} emphasizes the importance of cooperation across stakeholders (e.g., telecom, cloud) to reduce investment burdens.

Co-investment agreements have been studied under various structural and strategic assumptions. The author in~\citep{jeanjean2022co} formalizes co-investment as a form of network sharing, where two \textit{InP}s co-invest in infrastructure, including passive and active components. The results show that co-investment and infrastructure sharing reduce costs, boost investment and improve consumer welfare. 
In contrast, our work focuses on co-investment between heterogeneous players, an \textit{InP} and multiple SPs.

Strategic misalignment in co-investment is addressed in~\citep{lopez2022co}, where incumbents may increase reported deployment costs to deter entry, harming welfare. Our model instead emphasizes mutual benefit, ensuring transparency and aligning incentives so that all stakeholders are motivated to join the co-investment.

Infrastructure sharing under uncertainty and regulation is also studied in~\citep{INDERST201428}, where one firm invests in infrastructure and another decides later whether to access it, based on regulatory pricing and realized demand. However, we model pre-investment cooperation, where all stakeholders commit upfront to sharing the deployment and maintenance costs. Resources are then dynamically allocated among SPs based on their individual needs at each point in the investment period.

The impact of uncertainty on co-investment decisions is studied in~\citep{bourreau2021co}, which analyses how demand uncertainty affects the timing of investment between incumbents and entrants in network infrastructure. The authors show that allowing entrants to co-invest ex-post, after demand is revealed, can distort investment incentives and reduce network coverage. To address this, they propose combining ex-ante co-investment options, before demand is known, with ex-post risk premia. While their model focuses on bilateral co-investment, our work extends the analysis to coalitions involving heterogeneous players, one \textit{InP} and multiple SPs, using a stochastic coalitional game framework. Our setting aligned with the ex-ante case, where all players commit to co-investment before demand is realized, based on expectations.

The study in~\citep{azevedo2018rivalry} focuses on investment timing in a competitive duopoly under both revenue and cost uncertainty. Similarly,~\citep{kauppinen2018investing} examines a firm's sequential investment decisions, considering whether to initiate, suspend, or continue investment in response to uncertain future revenues and costs. However, our work considers only revenue uncertainty, with fixed timing, as stakeholders commit at the start of the co-investment strategy.

Cost, profit, and effort allocation mechanisms are examined in~\citep{bhaskaran2009effort} using a bargaining-based approach. This work focuses on incentive alignment and efficient cooperation between two firms. In contrast, we adopt the nucleolus in a stochastic environment to allocate payoff in a way that supports coalition stability under uncertainty.

The work in~\citep{he2024co} proposes a cooperative framework for infrastructure co-investment using Nash bargaining to share benefits, assuming predictable demand. While effective in structured settings, the model does not address demand uncertainty or the challenge of sustaining  cooperation under risk. Our framework accounts for stochastic demand and offers probabilistic guarantees on cooperation profitability.

\subsection{Coalitional game theory under uncertainty}
Several studies have explored cooperative game theory frameworks to analyze investment decisions and coalition stability under uncertainty. For instance, the work in~\citep{lin2021revenue} investigates how independent stakeholders working on a shared project can cooperate by reallocating their resources to speed up completion and earn extra rewards. They apply cooperative game theory to ensure the resulting gains are fairly shared, mainly through Shapley value and proportional sharing rules. While their setting is deterministic and focused on time savings in project delivery, the spirit of cooperation they promote aligns closely with our goal. We extend these ideas to a stochastic setting, where investment decisions and resource sharing must account for uncertain demand and fluctuating revenues.

In~\citep{ketelaars2023dynamic}, cooperative game theory is combined with real options analysis to address timing and coalition stability under uncertainty. The authors propose a proportional investment scheme to maximize project value while keeping investment timing stable. Their approach accounts for strategic deferral and uses option value to represent flexibility. In contrast, our framework assumes full upfront investment by all players and models uncertainty directly using a stochastic coalitional game, without relying on real options or proportional mechanisms.

The work in~\citep{kiedanski2020discrete} extends coalitional games to model cooperative investment, incorporating stochastic load profiles, to minimize expected costs. It proves the existence of a non-empty core, the set of all possible distributions of total profit where no subset of players would be better off forming their own coalition. However, the formulation is tailored to linear cost minimization, whereas our setting involves nonlinear, concave revenue maximization based on dynamic resource allocation.

The concept of the expected value core is proposed in~\citep{pantazis2023distributionally}, ensuring stability in expectation under probabilistic distributions. Similarly, the study in~\citep{akkarajitsakul2011coalition} applies a Bayesian coalitional game to model cooperation under uncertainty, where players form coalitions based on expected payoffs derived from beliefs about others’ behavior. However, in a co-investment context, players require stronger guarantees than stability in expectation. We instead introduce a nominal game, an expected version of the stochastic game, as an intermediate step, and we prove that the grand coalition is stable in this setting. Since this is not sufficient on its own, we further derive analytical lower bounds on the probability that the grand coalition is stable under stochastic demand.

To handle uncertainty robustly,~\citep{raja2021payoff} introduces the robust core, defined as the intersection of instantaneous deterministic cores across discrete time slots. While this approach ensures stability at each time slot under bounded uncertainty, our framework differs by modeling a unified coalitional game over the entire investment horizon, incorporating dynamic resource allocation across time.

A related approach is proposed in~\citep{borrero2016stable}, where the authors study coalitional games under uncertainty by introducing several stability concepts. One of their key contributions is the common core, which requires the existence of a single weighting over scenarios such that all coalitions are stable under the corresponding weighted game. Although their framework allows for partial information about these weights, it still relies on the existence of such a weighting to assess stability. In contrast, our model makes no assumptions about scenario weights.  Instead, we work with a bounded set of realizations and evaluate coalition stability uniformly over that set.

The Robust Coalitional Game Theory (RCGT) model in~\citep{doan2014robust} ensures grand coalition stability by evaluating payoffs under worst-case realizations. However, this strict robustness limits its practical applicability in uncertain co-investment contexts. We extend this work by replacing worst-case guarantees with probabilistic ones, computing a lower bound on the probability of stability that accounts for stochasticity in demand and revenue.

\subsection{Economic aspects of \added{Mobile} Edge Computing}
The economic viability of \added{M}EC depends on efficient
resource allocation, cost-sharing mechanisms, and investment models that account for uncertainty. 
Several studies focus on resource rental models. In~\citep{li2020data} and~\citep{xu2017zenith}, SPs rent computational resources from \textit{InP}s, based on historical demand or latency-aware workload optimization. While these works efficiently allocate existing resources, they assume the infrastructure is pre-deployed and entirely funded by the \textit{InP}s. In contrast, our model targets co-investment, where \textit{InP} and SPs jointly deploy and maintain infrastructure, shifting the focus to long-term profit sharing. Moreover, unlike~\citep{xu2017zenith}, we explicitly model demand uncertainty in coalition formation.

Deployment under demand uncertainty is studied in~\citep{nguyen2021two, chen2023building}, where an optimization framework is proposed to handle uncertain demand or user mobility. The model optimizes edge server locations and capacities before demand realization. However, both works assume unilateral decision-making by either a single SP~\citep{nguyen2021two} or a single \textit{InP}~\citep{chen2023building}, with SPs treated as passive users. Our approach differs by modeling a cooperative strategy where both \textit{InP} and SPs actively co-invest and share decision-making responsibilities under uncertainty.

Cost-efficient infrastructure deployment is also addressed in~\citep{zhao2018deploying}, which minimizes the number of edge nodes required for Internet of Things service coverage. However, this work neither considers which entities invest in infrastructure nor supports cost sharing among parties. Similarly, the work in~\citep{cong2022coopedge} proposes a cooperative model where base stations share resources via overlapping placement. Yet, the model does not clarify whether cooperation spans multiple \textit{InP}s or is internal to a single operator, nor does it consider joint financial investment.

A co-investment strategy is proposed in~\citep{patane2023coalitional}, where an \textit{InP} and multiple SPs deploy edge infrastructure. However, the model assumes static resource allocation and known user demands. Our work extends this by incorporating dynamic resource allocation based on fluctuating demand and providing probabilistic guarantees on co-investment profitability under uncertainty.

A preliminary version of this work has been presented in~\citep{sakr2025co} where we propose co-investment between the \textit{InP} and SPs under uncertainty, for the case where the stochastic fluctuations in the revenues are \deleted{uncorrelated} \added{independent}. In this work, however, we tackle the case where they are \deleted{correlated} \added{dependent} and establish a \deleted{necessary condition} \added{lower bound on the probability} of \deleted{the} profitab\added{le}\deleted{ility} \deleted{of} co-investment.

\section{Co-investment model}\label{Section: meth}
In this section, we first provide a schematic overview of the proposed
co-investment strategy (Section~\ref{sec:high-level}). We then model the
co-investment within a stochastic framework
(Section~\ref{sec:modeling}) and introduce the stochastic coalitional game
that represents it (Section~\ref{sec:stochastic-coalitional-game}). 
Next, we define the nominal game (Section~\ref{sec:nominal-game}), a
deterministic version of the stochastic game, and establish key properties
necessary for analyzing co-investment within the stochastic setting. 
After that, we present the main result of the paper: a lower bound on the
probability that the grand coalition is stable under uncertainty
(Section~\ref{sec:stability}), followed by a lower bound on the probability that the co-investment is profitable for all players~(Section~\ref{sec:profitability}). We then show the relation between the two lower bounds~(Section \ref{sec:relationship}). Finally, we apply the nucleolus concept to distribute costs and
revenues among the players (Section~\ref{sec:shapley}).  

\subsection{High-level description of the co-investment plan}
\label{sec:high-level} 
First, players combine their funds into a common wallet to deploy and maintain \deleted{the} \added{an} infrastructure \deleted{, consisting of a certain amount of resources, such as computational resources, with capacity C,} over an investment period $I$. \added{The infrastructure provides a certain amount of multiple resource types (e.g., computational resources) distributed across nodes.}
These resources are then dynamically partitioned among SPs in a manner that
we will make explicit in the sequel. Each SP receives a stochastic load of
requests from its end users and collects revenues by serving those
requests. Revenues increase with load and with the amount of resources
allocated to the SP. All revenues are then redistributed among the players.
Note that, although the \textit{InP} is the direct responsible for the
installation and the maintenance of the infrastructure,
it does not serve any user requests, and thus it does not collect any
revenues from users. However, since no resources can be installed without
it, the \textit{InP} benefits from the redistribution of the revenues from
SPs.

We aim to answer the following questions:
\begin{enumerate}
    \item \label{it:question1} Can we ensure, with high probability, that
    there is an opportunity for co-investment despite the uncertainty about
    the future loads?
    \item \label{it:question2} How to decide the amount of resources that
    should be installed and the dynamic resource sharing strategy?
    \item \label{it:question3} How to distribute costs and revenues
    among players?
\end{enumerate}
To answer Question~\ref{it:question1}, we formulate a novel stochastic game
theoretical framework, suited to the particular case of co-investment,
where a big cost must be borne at the beginning (to install resources), and
then revenues over the investment period are uncertain.
To answer Question~\ref{it:question2}, we resort to convex optimization to
determine the optimal amount of resources to be purchased as well as its
optimal dynamic sharing between players.
As for Question~\ref{it:question3}, we compute a redistribution of costs and
revenues based on the nucleolus.

\tablename~\ref{tab:notations} summarizes the frequently used notations in
this paper.
 
\begin{table*}
    \centering
    \caption{Notation}
    \label{tab:notations}
    \resizebox{0.9\textwidth}{!}{
    \begin{tabular}{l p{0.70\textwidth}}
    %\begin{tabular}{ll} 
    \toprule
    Notation & Definition \\
    \midrule
    
    $t$ & Specific time slot in the set $\pazocal T$
    \eqref{eq:h_k_m} \\

    $\pazocal T$ & Set of time slots in the investment period $I$
    \eqref{eq: generic expression of payof} \\

    $I$ & Investment period \eqref{eq:cost-sharing} \\

    \added{$K$} & 
    \added{Number of resource types~\eqref{eq:matrix_c}} \\

    \added{$M$} & 
    \added{Number of nodes~\eqref{eq:matrix_c}} \\

    \deleted{$C$}
    \added{$\mathbf C = \{C_{k,m}\}_{1\le k\le K,\,1\le m\le M}$}
    &
    \added{Installed} \deleted{C}\added{c}omputational resource capacity \added{matrix} (e.g. CPU, GPU, RAM, etc.)
    \eqref{eq:matrix_c} \\

    $\text{Cost}(I,\deleted{C}\added{\mathbf C})$
    & Deployment and maintenance costs over the investment period
    \eqref{eq:cost-sharing} \\

    $\deleted{h}\added{\mathbf h}_{i}^t$
    \added{$= \{h_{i,k,m}^t\}_{1\le k\le K,\,1\le m\le M}$}
    &
    Resource allocat\deleted{ed}\added{ion matrix} for SP $i$ at time slot $t$~\eqref{eq:h_k_m}\\

    $\deleted{\vec{h}} \added{\mathbf H}^t$
    \added{$= \{\mathbf h_{i}^t\}_{i\in\pazocal N} \in \mathbb R_{+}^{N \times K \times M}$}
    & \deleted{Resources allocated to players}
    \added{Multi-dimensional allocation array collecting all players' allocations} at time slot $t$
    \eqref{eq: generic expression of payof} \\

    $\mathbf H = \{\mathbf H^t\}_{t\in\pazocal T}$
    &
    Set of resource allocation \deleted{vectors} \added{arrays over all time slots}
    \eqref{eq: generic expression of payof} \\

    $\deleted{C}\added{\mathbf C}_{\pazocal S}^*,\;
\deleted{h_{i,}}\added{\{\mathbf H}_{\pazocal S}^{t*}\added{\}_{t \in \pazocal T}}$

    &
    Results of \added{the} optimization problem
    \eqref{equation: optimal solution}-\eqref{eq:constraint-positive} \\

    $u_{i,\omega}^t(\deleted{h} \added{\mathbf h}_i^t)$
    & Revenues collected by player~$i$ from end-users at time slot~$t$
    in realization~$\omega$ \eqref{equation:zero_ut} \\

    $\bar u_i^t(\deleted{h} \added{\mathbf h}_i^t)$
    & Expected value of the above
    \eqref{equation: expected_utility} \\

    $v_{\omega}(\pazocal{S})$
    & Coalition payoff in realization~$\omega$ for coalition~$\pazocal{S}$
    \eqref{eq: generic expression of payoff-realizations_s} \\ 

    $\bar{v}(\pazocal{S})$
    & Expected coalition payoff for $\pazocal{S}$ over all realizations
    \eqref{equation:v_bar_s} \\

    $z_{i,\omega,S}$
    & Deviation of player $i$'s revenue from expected value in realization
    $\omega$ \eqref{eq:additional} \\

    $\hat{\delta}$
    & \added{Stability} \deleted{Uncertainty threshold}~\eqref{equation: delta hat} \deleted{(Definition~\ref{def:uncertainty-bound})}  \\

    $\text{range}(u_i^t)$
    & Range of uncertainty in the utility of player $i$ at time slot $t$
    \eqref{eq:delta-u} \\

    $l_{i,\omega}^t$
    & Amount of requests for SP $i$ at time slot $t$ in realization~$\omega$
    \eqref{equation: utility function} \\

    $\bar{l}_i^t$
    & Expected amount of requests for SP $i$ at time slot $t$
    \eqref{equation: expectedload} \\

    \bottomrule
    \end{tabular}}
\end{table*}

\subsection{Stochastic modeling of co-investment}
\label{sec:modeling} 
Let~$\pazocal N$ be the set of players composed of one Infrastructure
Provider~(\textit{InP}) and~$N$ Service Providers~(SPs).
Let~$I$ be the investment period (e.g., $I=5$~years). We discretize time~$I$
in set~$\pazocal T$ of time slots, having each length~$\Delta$.
At each time slot~$t\in\pazocal T$, SP~$i$ is allocated a quantity
\deleted{$h$}
$\added{\mathbf h}_i^t$
of infrastructure resources.
Exploiting the allocated resources, SP~$i$ collects at each time slot
~$t\in\pazocal T$ revenue
$u_i^t(\deleted{h}\added{\mathbf h}_i^t)$
(which we also call utility), which is a random function of the amount of
resources $\deleted{h}\added{\mathbf h}_i^t$.
The more the allocated resources, the more revenues will be collected.
Revenue $u_i^t(\deleted{h}\added{\mathbf h}_i^t)$
is a random function in the sense that, for any value $\deleted{h}\added{\mathbf h}_i^t$, revenue $u_i^t(\deleted{h}\added{\mathbf h}_i^t)$
is a random variable defined on probability space~$(\Omega, \pazocal F,
\mathbb P)$, where~$\Omega$ is the sample space, $\pazocal F$ is the set of
events and $\mathbb P$ is the probability measure
~\citep{ghahramani2024fundamentals}. This randomness arises because revenue
$u_i^t(\deleted{h}\added{\mathbf h}_i^t)$
depends on the traffic load received by SP~$i$, which is inherently
uncertain and varies over time.
Let~$u_{i, \omega}^t(\deleted{h}\added{\mathbf h}_i^t)$
$\omega\in\Omega$, denote a realization of random variable $u_i^t(\deleted{h}\added{\mathbf h}_i^t)$.

A coalition is a subset~$\pazocal S \subseteq \pazocal N$ of players. If a
coalition is formed, its participants decide, altogether, the amount \deleted{$C$}\added{$\mathbf C$} of infrastructure resources to be deployed \added{and maintained over the investment period $I$}. 
\added{We denote by}
\begin{equation}
\added{
\mathbf C \defeq \{C_{k,m}\}_{1\le k\le K,\,1\le m\le M}\in\mathbb R_+^{K\times M}
}
\label{eq:matrix_c}
\end{equation}
\added{the installed capacity matrix across the $K$ resource types and $M$
nodes, where $C_{k,m}$ represents the total available amount of resource type
$k$ at node $m$ installed by the coalition.}
The coalition incurs an upfront cost
$\text{Cost}(I,\deleted{C}\added{\mathbf C})$, which incorporates the cost
of purchase and of maintenance of the resources over the investment period
~$I$. To cover this cost, each player~$i \in \pazocal{S}$ contributes an
initial payment~$p_i$. The contributions must collectively cover the entire
cost incurred by the coalition, that is:
\begin{equation}
\sum_{i \in \pazocal{S}} p_i = \text{Cost}(I,\deleted{C}\added{\mathbf C})
\label{eq:cost-sharing}
\end{equation}
$\text{Cost}(I,\deleted{C}\added{\mathbf C})$ is a deterministic function
that increases with $\mathbf C$
(the more the purchased resources, the higher the cost) and with~$I$ (the
longer the period of guaranteed maintenance, the higher the cost).
\footnote{In reality, the maintenance cost may be stochastic, but it can be
assumed to be deterministic in case the coalition (or \textit{InP} on its
behalf) signs a contract with a resource vendor and pays upfront to that
vendor the price of resources and a maintenance service guaranteed over
period~$I$.} We make the following reasonable assumption:
\begin{align}
\label{eq:cost-zero}
    \text{Cost}(I,\deleted{C}\added{\mathbf C})=0 \text{     if }
    \deleted{C}\added{\mathbf C}=\deleted{0} \added{\mathbf 0}
\end{align}
\added{where $\mathbf 0$ denotes the zero matrix, i.e., $C_{k,m}=0$ for all resource types $k=\{1 \dots K\}$ and nodes $m=\{1\dots M\}$.}

\added{For each player $i\in\pazocal N$ and time slot $t\in\pazocal T$, we
define the \emph{resource allocation matrix}}
\begin{equation}
\added{
\mathbf h_i^t \defeq \{h_{i,k,m}^t\}_{1\le k\le K,\,1\le m\le M}\in\mathbb R_+^{K\times M},
}
\label{eq:h_k_m}
\end{equation}
\added{where $h_{i,k,m}^t$ denotes the amount of resource type $k$ allocated to
player $i$ at node $m$ and time slot $t$.}

At each time slot~$t$, let us denote by~
$\deleted{\vec h}\added{\mathbf H}^t=\{\deleted{h}\added{\mathbf h}_i^t\}_{i\in\pazocal S}
\added{\in \mathbb{R}_+^{|\pazocal{S}|\times K \times M}}$
the multi-dimensional allocation array of resources allocated to each player.
With total amount $\mathbf C$ of resources, we have:
\deleted{$\vec h^t \mathbf{\cdot} \vec 1 =$} $ \sum_{i\in\pazocal N} h_{i\added{, k, m}}^t \leq  C_{\added{k,m}}$\added{$
, \forall k=1,\dots,K,\ \forall m=1,\dots,M$},
$\forall t\in\pazocal T$.

Let~
$\{\deleted{\vec h}\added{\mathbf H}^t\}_{t\in\pazocal T}$
be a sequence of such arrays over time slots. 
Note that while all players participate in the co-investment, only SPs
interact with end-users and generate revenue by serving requests. The
\textit{InP} is responsible for installing and maintaining the
infrastructure but does not serve any user requests directly. Motivated by
this distinction, we make the following assumption:

\begin{assumption}
\label{ass:inp-zero}
The \textit{InP} does not receive any infrastructure resources during the
investment period. That is,
\begin{equation}
\deleted{h}\added{\mathbf h}_{\textit{InP}}^t = \deleted{0} \added{\mathbf 0},
\quad \forall t \in \pazocal T
\label{equation:no_resources_inp}
\end{equation}
\added{equivalently, $h_{\textit{InP},k,m}^t = 0$ for all $k \in \{1,\dots,K\}$, $m \in \{1,\dots,M\}$, and $t \in \pazocal T$.}
\end{assumption}
To make our analysis realistic, we consider the following assumption:
\begin{assumption}
\label{ass:nonnegative}
Utilities are non negative. Moreover, player~$i\in\pazocal N$ does not
collect revenues from the use of the infrastructure, when the share of
resources it gets is~$0$, i.e.:
\begin{equation}
\begin{aligned}
u_{i,\omega}^t(\deleted{h_i^t} \added{\mathbf h_i^t})
&\ge 0, \quad \forall
\deleted{h}\added{\mathbf h}_i^t \added{\succeq} \deleted{\ge} \deleted{0}\added{\mathbf 0}; \quad
u_{i,\omega}^t(\deleted{h_i^t} \added{\mathbf h_i^t})
 = 0, \ \text{if }
\deleted{h}\added{\mathbf h}_i^t = \deleted{0}\added{\mathbf 0},
\\
&\forall i \in \pazocal N, \ t \in \pazocal T, \ \omega \in \Omega
\end{aligned}
\label{equation:zero_ut}
\end{equation}
\end{assumption}
\added{Here, inequalities and equalities involving $\mathbf h_i^t$ are interpreted
component-wise, that is, $h_{i,k,m}^t \ge 0$ and $h_{i,k,m}^t = 0$ for all
resource types $k$ and nodes $m$.
}

The following proposition restricts our attention to reasonable cases:
\begin{proposition}
In the coalition, the \textit{InP} is a veto player, i.e., a coalition
without it is pointless. 
Moreover, since we assume only SPs directly collect revenues by serving the
end-users via the infrastructure, then the set of all SPs, considered
altogether, is a veto player as well. 
\label{proposition:pointless}
\end{proposition}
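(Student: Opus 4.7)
The plan is to show, for each of the two veto claims, that any coalition~$\pazocal S$ missing the allegedly veto player (or group) has payoff~$v_\omega(\pazocal S)=0$ in every realization~$\omega$, so that~$\bar v(\pazocal S)=0$ and the coalition truly is pointless. The key preliminary observation is that the coalition payoff should aggregate only the revenues of members~$i\in\pazocal S$, net of the deployment/maintenance cost borne by those members; no utility can accrue to~$\pazocal S$ from players outside it. I would state this convention first, since everything else hinges on it.

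For the \textit{InP} as a veto player, suppose~$\textit{InP}\notin\pazocal S$. The modelling assumption stated in Section~\ref{sec:high-level} (``no resources can be installed without it'') forces~$C=0$ for any such coalition. Then~\eqref{eq:cost-zero} gives~$\text{Cost}(I,0)=0$, while the feasibility constraint~$\sum_{i\in\pazocal S} h_i^t\le C=0$ together with nonnegativity imposes~$h_i^t=0$ for every~$i\in\pazocal S,\ t\in\pazocal T$. Assumption~\ref{ass:nonnegative} then yields~$u_{i,\omega}^t(0)=0$ for all~$i,t,\omega$, so the payoff vanishes in every realization.

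For the SPs as a collective veto player, consider~$\pazocal S$ containing no SP, i.e.\ $\pazocal S\subseteq\{\textit{InP}\}$. The empty case is trivial, so assume~$\pazocal S=\{\textit{InP}\}$. By Assumption~\ref{ass:inp-zero}, $h_{\textit{InP}}^t=0$ for all~$t$, and by Assumption~\ref{ass:nonnegative}, $u_{\textit{InP},\omega}^t(0)=0$. The revenue sum thus vanishes identically and the coalition's problem reduces to $\max_{C\ge 0}\bigl(-\text{Cost}(I,C)\bigr)$. Since~$\text{Cost}(I,\cdot)$ is nondecreasing in~$C$ with~$\text{Cost}(I,0)=0$, the optimum is attained at~$C_{\pazocal S}^{*}=0$, giving~$v_\omega(\pazocal S)=0$ in every realization.

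The only mild obstacle I anticipate is being explicit that the payoff only counts revenues of members of~$\pazocal S$; once this coalition-game convention is pinned down, both claims reduce to direct applications of Assumption~\ref{ass:inp-zero}, Assumption~\ref{ass:nonnegative}, and~\eqref{eq:cost-zero}, with no additional machinery required.
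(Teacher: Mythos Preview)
Your proposal is correct and follows essentially the same approach as the paper's proof, which is a terse two-sentence argument invoking Assumption~\ref{ass:nonnegative} (no infrastructure $\Rightarrow$ no revenues) for the \textit{InP} case and Assumption~\ref{ass:inp-zero} (the \textit{InP} gets no resources, hence collects no revenues) for the SP case. You are in fact more careful than the paper: you explicitly trace through the feasibility constraint to force $h_i^t=0$ when $C=0$, and you note that the optimal $C_{\pazocal S}^*=0$ for the singleton $\{\textit{InP}\}$ because $\text{Cost}(I,\cdot)$ is nondecreasing, points the paper leaves implicit.
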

The proof is provided in Appendix~\ref{appendix:pointless-propo}.

The payoff of a coalition, i.e., how much it is worth in terms of net
benefit, is given by the following two definitions.

\begin{definition}
If coalition~$\pazocal S\subseteq \pazocal N$ installs \deleted{an amount $C$}
\added{a capacity matrix $\mathbf C$}
of resources and allocates them among players applying set of resource allocation arrays
$\added{\{\mathbf H}^t\}_{t\in\pazocal T}$,
the \emph{value} of coalition~$\pazocal S$ under
$\deleted{\{\vec h}\added{\{\mathbf H}^t\}_{t\in\pazocal T}$,
denoted by~$v \left(
\pazocal S,\,
\deleted{C}\added{\mathbf{C}},\,
\deleted{\{\vec h}\added{\{\mathbf H}^t\}_{t\in\pazocal T}
\right)$,
is the payoff that it obtains, computed as the sum of collected revenues
minus the cost, i.e.,
\begin{equation}
v \left(
\pazocal S,\,
\deleted{C}\added{\mathbf{C}},\,
\deleted{\{\vec h}\added{\{\mathbf H}^t\}_{t\in\pazocal T}
\right)
=
\sum_{t \in \pazocal T} \sum_{i \in \pazocal S}
u_i^t\!\left(\deleted{h_i^t}\added{\mathbf{h}_i^t}\right)
- \mathrm{Cost}\!\left(I,\, \deleted{C}\added{\mathbf C}\right)
\label{eq: generic expression of payof}
\end{equation}

The quantity above is a random variable, as it depends on random variables
$u_i^t\!\left(\deleted{h}\added{\mathbf h}_i^t\right)$,
$\forall i\in\pazocal N,t\in\pazocal T$. Its realizations are:
\begin{align}
v_{\omega}\left(
\pazocal S,\,
\deleted{C}\added{\mathbf{C}},\,
\deleted{\{\vec h}\added{\{\mathbf H}^t\}_{t\in\pazocal T}
\right)
=
\sum_{t \in \pazocal T} \sum_{i \in \pazocal S}
u_{i, \omega}^t\!\left(\deleted{h_i^t}\added{\mathbf{h}_i^t}\right)
- \mathrm{Cost}\!\left(I,\, \deleted{C}\added{\mathbf C}\right)
\label{eq: generic expression of payoff-realizations-complete}
\end{align}
\end{definition}

\begin{definition}
\label{def:vtilde}
Let us assume that there is a criterion that associates, to each
coalition~$\pazocal{S} \subseteq \pazocal{N}$, an amount of resources
$\deleted{C}\added{\mathbf{C}}_{\pazocal{S}}^*$ to be installed and \added{ a set of resource} \deleted{an} allocation \added{arrays}
$\{ \deleted{\vec  h}\added{\mathbf H}_\pazocal S^{t*}\}_{t\in\pazocal T}
= \left\{ \{\deleted{h}\added{\mathbf h}_{i, \pazocal S}^{t*}\}_{i\in\pazocal S} \right\}_{t\in\pazocal T}
\added{
= \left\{ \left\{ \{h_{i,k,m}^{t*}\}_{1\le k\le K,\,1\le m\le M} \right\}_{i\in\pazocal S} \right\}_{t\in\pazocal T}}$
(this criterion will be specified in
Assumption~\ref{assumption: optimal}).
We define the payoff (or value) of coalition~$\pazocal S$ as the following
random variable:
\begin{equation}
    v (\pazocal S)
    \defeq
    v \left(\pazocal S, \deleted{C_\pazocal S^*} \added{\mathbf C_\pazocal S^*}, \deleted{\{\vec h}\added{\{\mathbf H}_{\pazocal S}^{t*}\}_{t\in\pazocal T}\right)
    \label{eq: generic expression of payof_S}
\end{equation}
Its realizations are:
\begin{align}
    v_\omega (\pazocal S)
    \defeq
    v_\omega \left(\pazocal S, \deleted{C}\added{\mathbf C}_\pazocal S^*, \deleted{\{\vec h}\added{\{\mathbf H}_{\pazocal S}^{t*}\}_{t\in\pazocal T}\right),
    && \forall \omega\in\Omega
    \label{eq: generic expression of payoff-realizations_s}
\end{align}

Set function~$ v(\cdot):2^\pazocal N\rightarrow \mathbb R$, where
~$2^{\pazocal N}$ is the set of all possible subsets of~$\pazocal N$, is
called \emph{value function}. Observe that function~$ v(\cdot)$ is a random
function.
\label{Definition:c_star}
\end{definition}

The main question we aim to answer is whether there is an opportunity for co-investment.
This is challenging due to two factors: the long investment period, and uncertainty in revenue fluctuations over time. Since payoffs are realized at the end of the period, they depend on cumulative random variations, making it infeasible to track all possible outcomes explicitly. 
To address this challenge, we model the co-investment scheme as a stochastic coalitional game and analyze first whether the grand coalition can be formed and remains stable under uncertainty. The players are willing to co-invest if the grand coalition is stable. In a second step, we study the case where uncertainty results from \deleted{correlated} \added{strong dependent} fluctuations in revenues,  and where the notion of stability becomes too conservative. Hence, we resort to the profitability condition. %Hence, we resort to the concept of profitability.
%in this case using fractional Brownian motion,

\subsection{Model of the stochastic coalitional game}\label{sec:stochastic-coalitional-game}
We borrow the following definition of stochastic game from~\citep[Definition 1]{charnes1976coalitional} and we add the concept of restriction of the stochastic game. Such a restriction is useful, since we will be able to prove some important properties for large enough restrictions, but not for all the possible realizations of the stochastic game.
\begin{definition} 
A stochastic game is a pair~$(\pazocal N, \pazocal V)$, where~$\pazocal N$ is the set of players and~$\pazocal V$, called \emph{uncertainty set}, %~\citep[Equation~(15)]{doan2014robust}, 
is the set of realizations of random function~$v_\omega(\cdot)$:
     \begin{equation}
         \pazocal V \defequiv \{ v_\omega(\cdot):2^\pazocal N \rightarrow \mathbb R | \omega\in\Omega\}
         \label{eq:set-V}
     \end{equation}
     \label{def:stochastic-game}
For any subset~$\Omega'\subseteq \Omega$ of the sample space, we define a restriction~$(\pazocal N, \pazocal V(\Omega'))$ of the original game as a new stochastic game, with the same set of players and the set of possible value functions defined as:
\begin{equation}
    \pazocal V(\Omega') \defequiv \{ v_\omega(\cdot):2^\pazocal N\rightarrow \mathbb R | \omega\in\Omega' \}
    \label{eq:set-V-restricted}
\end{equation}
\end{definition}
The challenge is to evaluate the stability of the grand coalition under uncertainty. To show this, we need to show that a set of ``good'' allocations of payoff among players exist. 
The following definition from~\citep[Definition 4]{doan2014robust} formalizes what are the ``good'' payoff allocations.
\begin{definition}
\label{def:rcore}
The robust core of the stochastic game $(\pazocal N, \pazocal V)$ is:
\begin{equation}
\small
\begin{aligned}
rcore(\pazocal N,\pazocal V) \defequiv \Bigg\{ \vec y = (y_1,\dots,y_{|\pazocal N|}) \in \mathbb{R}^\pazocal N \ \Big|\ 
&\underbrace{\sum_{i=1} ^{|\pazocal N|} y_i = 1}_{\text{efficiency}},\underbrace{\sum_{i \in \pazocal S} v_\omega(\pazocal N)\cdot y_i \geq v_\omega(\pazocal S)}_{\text{stability}}, \\
&\forall v_\omega \in \pazocal V, \forall \pazocal S \subseteq \pazocal N 
\Bigg\}
\end{aligned}
\end{equation}
\end{definition}

Allocations~$\vec y$ in the robust core are ways in which the payoff of the grand coalition can be shared among players. Each such allocation is efficient, in the sense that 100\% of the payoff is divided among the players. They are stable, in the sense that, considering any subset~$\pazocal S\subseteq \pazocal N$ of players (including singleton subsets), players in~$\pazocal S$ are always better off joining the grand coalition, rather than forming a coalition between themselves, no matter the realization of payoff. This is why we define the grand coalition as \emph{stable}, if it respects the following definition.

\begin{definition}
Given a stochastic game~$(\pazocal N,\pazocal V)$, the grand coalition is stable if and only if~$\textit{rcore}(\pazocal N, \pazocal V)\neq \emptyset$, i.e., if and only if there exists at least one allocation of the payoff that is efficient and stable.
\label{def:stable-stochastic}
\end{definition}

Stability of the grand coalition $\pazocal N$ is obtained through non emptiness of the robust core. The model in~\citep{doan2014robust} however is based on worst-case assumption for the uncertainty. Instead, we aim to derive conditions for the stability of the grand coalition based on the subset of realizations for which this condition holds. We do so by introducing a probability measure for the existence of the grand coalition.

\begin{definition}
Given a stochastic game $( \pazocal N,\pazocal V)$, the grand coalition is stable with probability at least $\nu^{\text{LB}}$ if and only if there exists a set of realizations~$\Omega'$ such that:
\begin{itemize}
\item $v_{\omega}(\pazocal N)>0, \forall \omega \in \Omega'$,
\item The grand coalition is stable in game~$\left(\pazocal N, \undersetbrace{\text{\eqref{eq:set-V-restricted}}}{\pazocal V(\Omega')} \right)$ (which we call \emph{restriction}), i.e., $\textit{rcore}(\pazocal N,\pazocal V(\Omega'))\neq \emptyset$ and
\item $\mathbb P (\Omega') \deleted{=} \added{\ge} \nu^{\text{LB}}$
\end{itemize}
\label{definition: robust_restriction}
\end{definition}

The main result\added{s} we will prove in this paper (Theorem\added{s}~\ref{thm:main-theorem} \added{and~\ref{thm:prof}}) \deleted{is} \added{are} that the grand coalition is stable with probability at least~$\nu^{\text{LB}}$ \added{and that co-investment is profitable with probability at least~$\nu_{\pazocal{N}}^{\text{LB}}$}, which we can bring close to~$1$ under some conditions. 
To prove th\added{e}\deleted{i}s\added{e} result\added{s}, we need first to analyze the so-called ''nominal game'', which is a deterministic version of the aforementioned stochastic game. 

%%%
\subsection{Nominal game}
\label{sec:nominal-game}

The nominal version of our game is constructed by reasoning in terms of
expected values. Let:
\begin{equation}
    \bar u_i^t(\deleted{h}\added{\mathbf h}_i^t)
    \defeq \mathbb E_\omega
    \big[
    u_{i,\omega}^t(\deleted{h}\added{\mathbf h}_i^t)
    \big]
    \label{equation: expected_utility}
\end{equation}
be the expected utility collected at time slot~$t$ by player~$i$, if it is
allocated resources
$\deleted{h}\added{\mathbf h}_i^t$. The expected payoff of coalition~$\pazocal S$, when
it installs an amount $\mathbf C$ of resources which are dynamically shared by SPs following
$\deleted{\{\vec h}\added{\{\mathbf H}^t\}_{t\in\pazocal T}$, is
\begin{equation}
\begin{aligned}
\bar{v}(\pazocal S,\deleted{C} \added{ \mathbf C}, \deleted{\{\vec h}\added{\{\mathbf H}^t\}_{t\in\pazocal T})
&\defeq \mathbb{E}_\omega \big[
v_{\omega}(\pazocal S,\deleted{C} \added{ \mathbf C},\deleted{\{\vec h}\added{\{\mathbf H}^t\}_{t\in\pazocal T})] \\
&\overset{\eqref{eq: generic expression of payoff-realizations-complete}
\eqref{equation: expected_utility}}{=} 
\sum_{t \in \pazocal T} \sum_{i \in \pazocal S}
\bar u_i^t(\deleted{h}\added{\mathbf h}_i^t)
- \text{Cost}(I,
\deleted{C}
\added{\mathbf C})
\end{aligned}
\label{equation: expected payoff}
\end{equation}

By considering
\deleted{$C$}\added{$\mathbf C$}$_\pazocal S^*$
and
\{\deleted{$ \vec{h}$}\added{$\mathbf H$}$_{\pazocal S}^{t*} \}_{t \in \pazocal T}$
from Definition~\ref{def:vtilde}, the nominal value function of
coalition~$\pazocal S$ is given by:
\begin{equation}
\begin{aligned}
\bar{v}(\pazocal S) 
&\overset{\eqref{eq: generic expression of payoff-realizations_s}}{\defeq}
\mathbb{E}_\omega \big[v_\omega (\pazocal S) \big] \\
&=
\sum_{t \in \pazocal T} \sum_{i \in \pazocal S}
\bar{u}_i^t(\deleted{h} \added{\mathbf h}_{i,\pazocal S}^{t*})
- \text{Cost}(I,
\deleted{C}\added{\mathbf C}_\pazocal S^*)\end{aligned}    
\label{equation:v_bar_s}
\end{equation}

We now introduce the nominal game.
\begin{equation}
\begin{aligned}
\text{Nominal game: } 
&(\pazocal N, \underbrace{\bar v}_{\eqref{equation:v_bar_s}}), \\
&\text{where } \pazocal N \text{ is the same set of players as before and } \\ &
\bar v: 2^\pazocal N \rightarrow \mathbb{R} \text{ is the value function}.
\end{aligned}
\label{eq:deterministic-game}
\end{equation}

This game follows the standard definition of coalitional game theory~
\citep[Definition~257.1]{Osborne1994}.

We have previously introduced in Definition~\ref{Definition:c_star}, that
once a coalition $\pazocal S$ is formed, its players collectively decide on
an amount of resources
$\deleted{C}\added{\mathbf{C}}_\pazocal S^*$
to install and a sequence of dynamic resource allocation \added{array}s
$\{\deleted{\vec h}\added{\mathbf H}_\pazocal S^{t*}\}_{t\in \pazocal T}$.
The following assumption states the practical criterion adopted in this
paper to determine the amount of resources to deploy and how to allocate
them among the coalition's players.
\begin{assumption}
\label{assumption: optimal}
Given a coalition~$\pazocal S \subseteq \pazocal N$, the \textit{InP}
precomputes amount
$\deleted{C}\added{\mathbf C_{\pazocal S}^*}$
of resources that would be installed and \deleted{vectors} \added{the allocation arrays}
$\left\{\deleted{\vec h}\added{\mathbf H}_{\pazocal S}^{t*}\right\}_{t\in\pazocal T}$
with which resources would be dynamically shared, by solving the following
optimization problem:
\begin{align}
\deleted{C}\added{\mathbf C}_{\pazocal S}^*,
\left\{\deleted{\vec h}\added{\mathbf H}_{\pazocal S}^{t*}\right\}_{t \in \pazocal T} 
&=
\operatorname*{argmax}_{
\deleted{C}\added{\mathbf C},
\left\{\deleted{\vec h}\added{\mathbf H}^t \right\}_{ t \in \pazocal T}}
\bar{v}\big(\pazocal S,\deleted{C}\added{\mathbf C},
\{\deleted{\vec h}\added{\mathbf H}^t\}_{t\in\pazocal T}\big)
\label{equation: optimal solution}
\\
\text{such that } \quad
\sum_{i\in\pazocal S} 
h_{i\added{,k,m}}^t
&\le
C_{\added{k,m}},
\quad \added{\forall k\in K,\ \forall m\in  M,\ \forall t\in\pazocal T}
\label{eq:constraint-sum}
\\
h_{i\added{,k,m}}^t & \ge 0,
\quad \added{\forall k\in K,\ \forall m\in M,\ \forall i\in\pazocal S,\ \forall t\in\pazocal T}
\label{eq:constraint-positive}
\end{align} 
\end{assumption}

This problem falls under stochastic programming~\citep{birge2011introduction},
as it involves optimizing the expected payoff under uncertainty, assuming a
known and uniform distribution over demand realizations.

In the following proposition, we formally define how the quantities
$\deleted{C}\added{\mathbf C}^*_{\pazocal S}$
and
$\{\deleted{\vec h}\added{\mathbf H}_{\pazocal S}^{t*}\}_{t\in\pazocal T}$
are derived from the optimal solution (see
Assumption~\ref{assumption: optimal}), depending on whether the
\textit{InP} is part of the coalition~$\pazocal S$.

\begin{proposition}
\label{prop:allocation}
Let~$\pazocal S \subseteq \pazocal N$ be a coalition. The optimal resource
capacity
$\deleted{C}\added{\mathbf C}^*_{\pazocal S}$
and the dynamic allocations
$\deleted{h}\added{\mathbf h}_{i,\pazocal S}^{t*}$
for all~$i \in \pazocal S$, $t \in \pazocal T$, are determined as follows:
\begin{itemize}
    \item If the \textit{InP} does not belong to coalition~$\pazocal S$, then
    no resources are installed, i.e.:
\begin{equation}
\deleted{C}\added{\mathbf C}^*_{\pazocal S} = \deleted{0}\added{\mathbf 0},
\quad
\deleted{h}\added{\mathbf h}_{i,\pazocal S}^{t*} = \deleted{0}\added{\mathbf 0},
\quad \forall i \in \pazocal S,\ \forall t \in \pazocal T
\label{eq:InPnot}
\end{equation}

    \item If the \textit{InP} belongs to coalition~$\pazocal S$, 
    then resources
    $\deleted{C}\added{\mathbf{C}}_{\pazocal S}^*$
    can be potentially installed and used by players in~$\pazocal S$,
    according to the dynamic shares precomputed in
    \eqref{equation: optimal solution}. 
\end{itemize}
\end{proposition}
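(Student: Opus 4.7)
The plan is to handle the two bullets of the proposition asymmetrically. The second bullet is essentially a restatement of Assumption~\ref{assumption: optimal}: the quantities $C_\pazocal S^*$ and $\{\vec h_\pazocal S^{t*}\}_{t\in\pazocal T}$ are by construction a maximizer of the program \eqref{equation: optimal solution}--\eqref{eq:constraint-positive}, so nothing substantive is to be proved beyond pointing at the definition. All of the real content lives in the first bullet, where the claim is that when InP $\notin \pazocal S$ the (canonical) optimum is $C_\pazocal S^* = 0$ with all shares $h_{i,\pazocal S}^{t*} = 0$.

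For that first bullet, my approach is to exhibit $(C=0,\vec h^t=\vec 0)$ as an optimal point of the program. First I would check feasibility: the constraints $\sum_{i\in\pazocal S} 0 \le 0$ and $h_i^t = 0 \ge 0$ are trivially satisfied. Next I would evaluate the objective at this candidate using \eqref{eq:cost-zero} (so $\text{Cost}(I,0)=0$) and Assumption~\ref{ass:nonnegative} (so $\bar u_i^t(0)=0$), obtaining $\bar v(\pazocal S,0,\{\vec 0\}_{t\in\pazocal T})=0$. Finally I would invoke Proposition~\ref{proposition:pointless}: since the \textit{InP} is a veto player, every realization $v_\omega(\pazocal S, C, \{\vec h^t\})$ is non-positive for every feasible $(C,\{\vec h^t\})$ whenever InP $\notin \pazocal S$. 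Taking expectations preserves this bound and yields $\bar v(\pazocal S,C,\{\vec h^t\})\le 0$ uniformly on the feasible set, so the candidate attains the supremum. From $C_\pazocal S^* = 0$ the capacity constraint $0\le h_i^t\le C_\pazocal S^* = 0$ then forces $h_{i,\pazocal S}^{t*}=0$ for every $i$ and $t$.

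The main obstacle I anticipate is uniqueness rather than existence of the maximizer. The objective is flat along directions that increase $C$ without increasing revenues, so the argmax need not be a singleton, and a literal reading of \eqref{equation: optimal solution} admits wasteful solutions with $C>0$ whose value is also $0$. I would resolve this by adopting a cost-minimal tie-breaking rule, or by slightly strengthening Assumption~\ref{assumption: optimal} to break ties in favor of the smallest $C$, which would single out $(0,\vec 0)$ as the canonical maximizer. A secondary subtlety I would flag is circularity: the proof of Proposition~\ref{proposition:pointless} in Appendix~\ref{appendix:pointless-propo} must rely only on Assumptions~\ref{ass:inp-zero}, \ref{ass:nonnegative}, and \eqref{eq:cost-zero}, and not on Proposition~\ref{prop:allocation}, otherwise the chain of implications collapses.
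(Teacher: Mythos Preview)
Your approach lines up with the paper's in its skeleton: both invoke Proposition~\ref{proposition:pointless} for the first bullet and simply defer to Assumption~\ref{assumption: optimal} for the second. The paper's own proof is even terser than yours --- it just asserts that without the \textit{InP} ``no resources can be installed'' as a modeling fact (citing Proposition~\ref{proposition:pointless}), and reads off $C_\pazocal S^*=0$, $h_{i,\pazocal S}^{t*}=0$ directly from that.

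Where you go further is in trying to \emph{derive} $C_\pazocal S^*=0$ as the argmax of program~\eqref{equation: optimal solution}--\eqref{eq:constraint-positive} rather than stipulate it. The step that does not quite close is ``Proposition~\ref{proposition:pointless} gives $v_\omega(\pazocal S,C,\{\vec h^t\})\le 0$ for every feasible $(C,\{\vec h^t\})$.'' Proposition~\ref{proposition:pointless} and its proof assert only that, in the model, the \textit{InP} is the sole installer --- effectively that $C$ is forced to $0$ when $\textit{InP}\notin\pazocal S$. It does not provide a bound on the objective at \emph{arbitrary} feasible points of \eqref{equation: optimal solution}--\eqref{eq:constraint-positive}; read literally, that program admits $C>0$ with $h_i^t>0$ even when $\textit{InP}\notin\pazocal S$, and then $\bar v$ can be strictly positive. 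The paper sidesteps this by treating~\eqref{eq:InPnot} as a model convention external to the optimization, not a consequence of it --- which is precisely the circularity you flagged in your last paragraph. Your uniqueness concern (flat directions in $C$) is also real and likewise unaddressed by the paper.
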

See Appendix~\ref{appendix:proof_allocation} for the proof of
Proposition~\ref{prop:allocation}.

The following proposition ensures that
$\deleted{C}\added{\mathbf C}_\pazocal S^*$
and
$\{\deleted{\vec h}\added{\mathbf H}_{\pazocal S}^{t*}\}_{t\in\pazocal T}$
are uniquely defined for every coalition~$\pazocal S \subseteq \pazocal N$,
and can be computed in polynomial time
~\citep[Section~11]{Boyd2004}.

\begin{proposition}
\label{prop:convexity-of-problem}
Problem~\eqref{equation: optimal solution}-\eqref{eq:constraint-positive} is convex, under the following assumptions:
\begin{enumerate}
\item Function~
$\deleted{h}\added{\mathbf h}_i^t$ $\rightarrow \bar{u}_i^t(\deleted{h_i^t} \added{\mathbf h_i^t})$
is increasing concave, $\forall i\in\pazocal N, t\in\pazocal T$.
\item Function~
\deleted{$C$} \added{$\mathbf C$} $\rightarrow \text{Cost}(I,\deleted{C}\added{\mathbf C})$ is convex.
\end{enumerate}
As a result, the optimal resources can be determined analytically by solving the Karush-Kuhn-Tucker (KKT) conditions~\citep[Section~5.5.3]{Boyd2004}.
\end{proposition}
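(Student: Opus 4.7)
My plan is to reduce Problem~\eqref{equation: optimal solution}--\eqref{eq:constraint-positive} to the canonical form of a convex program by showing that (i) the objective to be maximized is concave in the joint decision vector $(C,\{\vec h^t\}_{t\in\pazocal T})$, and (ii) the feasible set is convex. Once the problem is convex, the KKT conclusion follows because all constraints are affine, which supplies a constraint qualification making KKT both necessary and sufficient for global optimality.

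For (i), I would use the explicit form in~\eqref{equation: expected payoff}:
\begin{equation*}
\bar v(\pazocal S,C,\{\vec h^t\}_{t\in\pazocal T}) = \sum_{t\in\pazocal T}\sum_{i\in\pazocal S} \bar u_i^t(h_i^t) - \textit{Cost}(I,C).
\end{equation*}
Each $\bar u_i^t(h_i^t)$ is concave in its scalar argument by Assumption~1; viewed as a function of the full joint vector (constant in the remaining coordinates), it remains concave. A finite sum of concave functions is concave, so $\sum_{t,i}\bar u_i^t(h_i^t)$ is concave in $(C,\{\vec h^t\})$. By Assumption~2, $C\mapsto \textit{Cost}(I,C)$ is convex, hence $-\textit{Cost}(I,C)$ is concave. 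Summing the two concave pieces, the objective is concave.

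For (ii), the coupling constraints~\eqref{eq:constraint-sum} and the nonnegativity constraints~\eqref{eq:constraint-positive} are all affine in $(C,\{\vec h^t\})$, so the feasible region is an intersection of finitely many halfspaces, hence a convex polyhedron. Maximizing a concave function over a convex set (equivalently, minimizing its negative) matches the standard definition of a convex optimization problem, which proves the first claim.

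For the KKT statement, since every constraint is affine, the linearity constraint qualification holds at every feasible point, so KKT becomes both necessary and sufficient for global optimality~\citep[Section~5.5.3]{Boyd2004}. Writing stationarity of the Lagrangian, together with primal and dual feasibility and complementary slackness, yields a closed system whose solution characterizes $(C_{\pazocal S}^*,\{\vec h_{\pazocal S}^{t*}\})$. The only subtlety I foresee is that a truly closed-form inversion of the first-order conditions depends on the specific functional forms of $\bar u_i^t$ and $\textit{Cost}$; for generic concave/convex specifications one obtains only an implicit characterization, but this is precisely what is meant by ``analytical'' via KKT in this literature, and no structural obstacle arises beyond that.
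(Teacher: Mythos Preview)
Your proposal is correct and follows essentially the same approach as the paper: establish that the objective is concave (sum of concave utilities minus a convex cost) and that the feasible set is convex (affine constraints), then invoke KKT. Your treatment of $C$ as an independent coordinate of the joint decision vector is in fact cleaner than the paper's, which takes an unnecessary detour by re-expressing $C$ as an affine combination of the $h_i^t$ before arguing convexity of the cost term.
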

The proof of Proposition~\ref{prop:convexity-of-problem} is provided in Appendix~\ref{appendix:prop_convex}.

%A consequence of Proposition~\ref{prop:convexity-of-problem} is the following Theorem.

We now introduce the notion of the core of the nominal game, which is the deterministic counterpart of the robust core introduced for the stochastic game. The core is the set of payoff allocations that are both efficient and stable, i.e., allocations that distribute the total payoff of the grand coalition among the players in such a way that no subset of players has an incentive to deviate and form a coalition on its own. The non-emptiness of the core implies that the grand coalition in the nominal game is stable~\citep{saad2009coalitional}. Formally, the core of the nominal game $(\pazocal N,\bar{v})$ is defined as~\citep[Definition~2.1]{maschler1979geometric}):

\begin{equation}
\label{eq:core}
\begin{aligned}
\textit{core}(\pazocal N,\bar{v}) \defequiv 
\bigg\lbrace 
\vec{x} \in \mathbb{R}^\pazocal N \ \bigg| \ 
&\sum_{i \in \pazocal N} x_i = \bar{v}(\pazocal N), \ 
\sum_{i \in \pazocal S} x_i \geq \bar{v}(\pazocal S),\forall \pazocal S \subsetneq \pazocal N 
\bigg\rbrace 
\end{aligned}
\end{equation}

Theorem \ref{thm:convexity-of-game} implies that the grand coalition $\pazocal N$ in the nominal game is stable.

\begin{theorem}
\label{thm:convexity-of-game}
Nominal game $(\pazocal N, \bar{v})$ (\eqref{eq:deterministic-game}) has a non-empty core. 
\end{theorem}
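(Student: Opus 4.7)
The plan is to establish that $\bar v$ is supermodular, i.e.,
\[
\bar v(\pazocal A \cup \pazocal B) + \bar v(\pazocal A \cap \pazocal B) \geq \bar v(\pazocal A) + \bar v(\pazocal B) \quad \forall \pazocal A, \pazocal B \subseteq \pazocal N,
\]
which is the standard characterization of a convex coalitional game. I would first reduce to the essential case by handling coalitions that do not contain the \textit{InP}. By Proposition~\ref{prop:allocation}, Assumption~\ref{ass:nonnegative}, and \eqref{eq:cost-zero}, we have $\bar v(\pazocal S) = 0$ whenever $\textit{InP} \notin \pazocal S$. Together with monotonicity of $\bar v$ in the set of SPs (any feasible allocation for a smaller coalition extends by zeros to a larger coalition sharing the same \textit{InP} status), this disposes of every configuration in which \textit{InP} is missing from at least one of $\pazocal A$ or $\pazocal B$.

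The main work concerns the case $\textit{InP} \in \pazocal A \cap \pazocal B$. Here I would invoke Proposition~\ref{prop:convexity-of-problem} to work with the unique optima $(C_\pazocal A^*, \{\vec h_\pazocal A^{t*}\}_{t \in \pazocal T})$ and $(C_\pazocal B^*, \{\vec h_\pazocal B^{t*}\}_{t \in \pazocal T})$ of Assumption~\ref{assumption: optimal}, and use them to exhibit feasible allocations for $\pazocal A \cup \pazocal B$ and $\pazocal A \cap \pazocal B$ whose payoffs sum to at least $\bar v(\pazocal A) + \bar v(\pazocal B)$; the conclusion then follows because $\bar v(\pazocal A \cup \pazocal B)$ and $\bar v(\pazocal A \cap \pazocal B)$ are maxima over their feasible sets. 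The natural candidate is a lattice exchange: for each $i \in \pazocal A \cap \pazocal B$ and each $t \in \pazocal T$, set $h_{i,\cap}^t = \min(h_{i,\pazocal A}^{t*}, h_{i,\pazocal B}^{t*})$ and $h_{i,\cup}^t = \max(h_{i,\pazocal A}^{t*}, h_{i,\pazocal B}^{t*})$, while players in the symmetric difference keep their original allocations, with capacities $C^\cap = \min(C_\pazocal A^*, C_\pazocal B^*)$ and $C^\cup = \max(C_\pazocal A^*, C_\pazocal B^*)$. Because $\min(a,b) + \max(a,b) = a + b$, both the utility terms $\bar u_i^t$ and the cost term $\textit{Cost}(I,\cdot)$ contribute modularly to the four-way sum, yielding the supermodularity bound with equality on the objective.

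The main obstacle is feasibility of the lattice-combined allocations. The pointwise maximum of two feasible vectors need not satisfy the capacity constraint with the pointwise maximum of capacities: in general one can have $\sum_i \max(h_{i,\pazocal A}^{t*}, h_{i,\pazocal B}^{t*}) > \max(C_\pazocal A^*, C_\pazocal B^*)$ at some time slot. Closing this gap is where the convexity of the underlying program \eqref{equation: optimal solution}--\eqref{eq:constraint-positive} is really needed. My plan is to re-balance the lattice allocation using the KKT conditions supplied by Proposition~\ref{prop:convexity-of-problem}: the complementary slackness and stationarity at both optima provide common Lagrange multipliers that allow trading resources between players in $\pazocal A \cap \pazocal B$ without decreasing the combined utility, so that the rebalanced allocation becomes feasible for $C^\cup$ and $C^\cap$. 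Once feasibility is secured, combining the modular identities for $\bar u_i^t$ and $\textit{Cost}$ with the concavity of $\bar u_i^t$ and convexity of $\textit{Cost}(I,\cdot)$ closes the inequality, concluding that the nominal game is convex. If the direct lattice construction proves too brittle, a fallback would be to verify the equivalent pairwise form $\bar v(\pazocal S \cup \{i,j\}) - \bar v(\pazocal S \cup \{i\}) - \bar v(\pazocal S \cup \{j\}) + \bar v(\pazocal S) \ge 0$ by perturbing the optimum of the largest coalition to build witnesses for the other three terms.
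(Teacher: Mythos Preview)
Your reduction to $\textit{InP}\in\pazocal A\cap\pazocal B$ via Proposition~\ref{prop:allocation} and monotonicity is correct and parallels the paper. From there, however, the paper does not attempt a lattice construction: it works directly in the increasing-marginals form $\Delta_j(\pazocal S)\ge\Delta_j(\pazocal R)$ for $\pazocal R\subseteq\pazocal S$ and $j\notin\pazocal S$, and exhibits a feasible (generally suboptimal) plan for $\pazocal S\cup\{j\}$ that gives $j$ exactly its optimal share from the $\pazocal R\cup\{j\}$-problem while weakly increasing every $h_{i,\pazocal S}^{t*}$, $i\in\pazocal S$, choosing the capacity large enough to accommodate this. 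That construction is precisely your fallback; promote it to the main line.

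The lattice exchange has a gap that your KKT sketch does not close. The feasible set of \eqref{equation: optimal solution}--\eqref{eq:constraint-positive} is not a lattice under coordinatewise $\min/\max$ (the single capacity couples all players), so $\cup$-feasibility genuinely fails. Repairing it forces either (i) shrinking $h_{i,\cup}^t$ for some $i\in\pazocal A\cap\pazocal B$ and inflating $h_{i,\cap}^t$ using the equal slack there, or (ii) raising $C^\cup$. For (i), whenever the deficit exceeds $\max(h_{i,\pazocal A}^{t*},h_{i,\pazocal B}^{t*})-\min(h_{i,\pazocal A}^{t*},h_{i,\pazocal B}^{t*})$ for the players you can touch, the adjusted pair becomes \emph{more} spread than the original, and concavity of $\bar u_i^t$ drives the combined utility strictly below $\bar u_i^t(h_{i,\pazocal A}^{t*})+\bar u_i^t(h_{i,\pazocal B}^{t*})$; a three-SP instance with $|\pazocal A\cap\pazocal B|=1$ already forces this crossing. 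For (ii), the pair $(C^\cup,C^\cap)$ ends up more spread than $(C_{\pazocal A}^*,C_{\pazocal B}^*)$ at the same sum, and convex $\textit{Cost}(I,\cdot)$ then increases the combined cost---again the wrong sign for the payoff inequality. Finally, the Lagrange multipliers of the $\pazocal A$- and $\pazocal B$-programs are in general different, so there is no common shadow price at which a lossless trade among intersection players is available. Neither the utility side nor the cost side can absorb the feasibility deficit under the stated hypotheses, so the lattice route does not go through here; develop the marginal-contribution witness directly.
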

The proof of Theorem \ref{thm:convexity-of-game} is provided in Appendix~\ref{appendix:th_nominal}.

\subsection{Stability of the grand coalition under uncertainty}
\label{sec:stability}
After determining the optimal resource\deleted{s} \added{matrix} $\deleted{C}\added{\mathbf C}_{\pazocal S}^*$ to deploy and the corresponding dynamic allocation \added{array}s~$\{\deleted{\vec{h}}\added{\mathbf H}_{\pazocal S}^{t*}\}_{t \in \pazocal T}$ based on expected values, we must acknowledge that due to the stochasticity in the load, the actual revenue collected by the SP could deviate from the expected one. We introduce the following definition to quantify this deviation:
\begin{definition}
The revenue deviation collected by SP~$i$ over time in coalition $\pazocal S \subseteq \pazocal N$ in realization $\omega\in\Omega$ is defined as: 
\begin{equation}
z_{i,\omega, \pazocal S}=
\sum_{t \in \pazocal T} 
\left(
u_{i,\omega}^t(
\deleted{h}\added{\mathbf h}_{i, \pazocal S}^{t*}
)
-
\bar u_i^t(
\deleted{h}\added{\mathbf h}_{i, \pazocal S}^{t*}
)
\right)
\label{eq:additional}
\end{equation}
\label{def: add rev}
\end{definition}

Due to~\eqref{eq:InPnot} and Assumption~\ref{ass:nonnegative}, $z_{i,\omega,\pazocal S}=0$ if~$\textit{InP}\notin \pazocal S$.

The stability of the grand coalition~$\pazocal N$ in the stochastic game depends on whether~$z_{i, \omega, \pazocal N}$~\eqref{eq:additional} is within a certain bound, $\forall i \in \pazocal N, \forall \omega \in \Omega$~\citep[Theorem 2]{doan2014robust}. 
The bound on the probability that the grand coalition is stable can be computed based on the following definition, applicable to each player.

\begin{definition}
\label{def:uncertainty-bound}
The uncertainty bound of player~$i\in\pazocal N$ within coalition~$\pazocal N$ quantifies the likelihood that the revenue deviation~$z_{i,\omega,\pazocal N}$ remains within a predefined threshold~$\hat{\delta}$. 
Formally, it is given by:

\begin{align}
\mathbb{P}_{i,\pazocal N}^{\text{LB}} 
&\defeq
\mathbb P (|z_{i,\omega,\pazocal N}|< \hat{\delta})
\label{equation: lb}
\end{align}
where~$\hat{\delta}>0$
is designed as a restrictive threshold that ensures both efficiency and stability within the grand coalition and is defined as:
\begin{equation}
    \hat{\delta} \defeq \min 
    \left ( \frac{\Bar{v}(\pazocal N)}{|\pazocal N|}, \frac{\hat{\sigma}(\pazocal{N}, \Bar{v})}{\max\limits_{\substack{\pazocal{S} \subsetneq \pazocal{N} \\ \pazocal{S} \neq \emptyset}} \left( |\pazocal{S}|  +  (|\pazocal{N}| - 2  |\pazocal{S}|) \cdot y_{\hat{\sigma}}(\pazocal{S}) \right)}\right)
    \label{equation: delta hat}
\end{equation}
\end{definition}
To better understand this bound, let us break it down:
\begin{enumerate}
    \item The first term serves as a uniform upper bound on the allowed deviation per player. It ensures that no individual perturbation exceeds the average share of the grand coalition's value. It also ensures the efficiency condition, because $\bar{v}(\pazocal{N})$ is fully distributed among the players in the grand coalition~$\pazocal{N}$.
    \item The second term quantifies how much better off players are within the grand coalition compared to any possible subset, by incorporating the stability value~$\hat{\sigma}(\pazocal N, \bar{v})$. 
    The denominator normalizes this effect by considering both coalition size and the extent of payoff variability. We define the stability value~$\hat{\sigma}(\pazocal N, \bar{v})$ as follows:
\begin{equation}
    \hat{\sigma}(\pazocal N, \bar{v}) \defeq \min_{\pazocal S \subsetneq \pazocal N} \left( \sum_{i\in\pazocal S} \bar{ x}_i - \bar{v}(\pazocal S) \right)
    \label{equation:stab_hut}
\end{equation}
where $\bar{x}_i$ denotes the payoff assigned to player $i$, with $\bar{x} = (\bar{x}_i)_{i \in N}$ being a payoff allocation in the core of the nominal game $(\pazocal N, \bar{v})$, distributing the expected payoff $\bar{v}(\pazocal N)$ among the players.

We now explain the intuition behind~$\hat{\sigma}(\pazocal N, \bar{v})$~\eqref{equation:stab_hut}:
\begin{enumerate}
    \item If the sum of expected payoffs of coalition~$\pazocal S$ exceeds its coalition payoff~$\bar{v}(\pazocal S)$, then the players in~$\pazocal S$ have a stronger incentive to remain in the grand coalition~$\pazocal N$.
    \item By minimizing over all coalitions~$\pazocal S$, this function identifies the weakest coalition configuration, ensuring that the uncertainty bound accounts for the most unstable scenario.
\end{enumerate}

To quantify how the payoff of coalition~$\pazocal S$ compares to its payoff in the grand coalition~$\pazocal N$, we introduce the normalized payoff~$y_{\hat{\sigma}}(\pazocal S)$:
\begin{align}
    y_{\hat{\sigma}}(\pazocal S) & \defeq 
    \begin{cases} 
        \frac{\bar{v}(\pazocal S) + \hat{\sigma}(\pazocal N, \bar{v})}{\bar{v}(\pazocal N)} & \text{for all } \pazocal S \subsetneq \pazocal N, \\
        1 & \text{for } \pazocal S = \pazocal N
    \end{cases}
\end{align}
If a coalition~$\pazocal S$ has a coalition payoff~$\bar{v}(\pazocal S)$ significantly greater than its payoff in~$\pazocal N$, its players might have an incentive to leave and form their own coalition, potentially reducing the stability of the grand coalition.

\end{enumerate}

We include the following Lemma to justify our use of~$\hat{\delta}$, showing that it remains safely below the threshold used in earlier work and reinforces the soundness of our approach.
\begin{lemma}
\label{lemma: tightness-bound}
The threshold $\hat{\delta}$~\eqref{equation: delta hat} is tighter than the one presented in~\citep[Theorem~2]{doan2014robust}, i.e,
\begin{equation}
    \hat{\delta} < \delta 
    \label{eq:tighter-bound}
\end{equation}
where $\delta$ denotes the threshold introduced in~\citep[Theorem~2]{doan2014robust}.
\end{lemma}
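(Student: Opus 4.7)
The plan is to prove $\hat{\delta} < \delta$ by writing out both thresholds in closed form and comparing them component by component. Since $\hat{\delta}$ is the minimum of two positive quantities while $\delta$ from~\citep[Theorem~2]{doan2014robust} has an analogous structure based on a stability margin divided by a coalition-dependent normalization, it suffices to exhibit one argument of the min in~\eqref{equation: delta hat} that is strictly smaller than $\delta$.

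First I would recall the explicit expression of $\delta$ as given in~\citep[Theorem~2]{doan2014robust}, identifying the stability measure $\sigma(\pazocal N, \bar v)$ and the coalition-size-based denominator it uses. This establishes the template against which $\hat\delta$ must be compared. Next I would compare numerators: the stability value $\hat{\sigma}(\pazocal{N}, \bar v)$ in~\eqref{equation:stab_hut} is constructed from the expected Shapley allocation $\bar{x}_i$, which by Theorem~\ref{thm:convexity-of-game} and~\citep[Theorem~7]{shapley1971cores} lies in the core of the nominal game $(\pazocal N, \bar v)$. Because the Shapley allocation is one specific efficient vector (not a worst-case one), the excess $\sum_{i \in \pazocal S} \bar x_i - \bar v(\pazocal S)$ yields a bound no larger than the corresponding quantity used in~\citep{doan2014robust}; in symbols, $\hat{\sigma}(\pazocal{N}, \bar v) \le \sigma(\pazocal{N}, \bar v)$.

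Then I would compare denominators. The denominator of the second argument of the min in~\eqref{equation: delta hat} involves $\max_{\pazocal S \subsetneq \pazocal N, \pazocal S \neq \emptyset}\bigl(|\pazocal S| + (|\pazocal N| - 2|\pazocal S|) \, y_{\hat\sigma}(\pazocal S)\bigr)$, where $y_{\hat\sigma}(\pazocal S)$ is a normalized payoff ratio in $[0,1]$. I would split on the sign of $|\pazocal N| - 2|\pazocal S|$ to bound this expression below by the corresponding denominator in $\delta$: for coalitions with $|\pazocal S| \le |\pazocal N|/2$ the weighting favors larger $y_{\hat\sigma}$, and for $|\pazocal S| > |\pazocal N|/2$ the constant $|\pazocal S|$ dominates, so in either regime the max is at least as large as in $\delta$. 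Combined with Step~3 this gives that the second argument of the min in~\eqref{equation: delta hat} is bounded above by $\delta$. Finally, since $\hat\delta$ is a minimum involving also the strictly positive cap $\bar v(\pazocal N)/|\pazocal N|$, and at least one of these two terms is strictly smaller than $\delta$ (the strictness coming from either the cap being binding or from the strict inequality in the Shapley-based margin when the core has positive dimension), we conclude $\hat\delta < \delta$.

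The main obstacle I anticipate is the denominator comparison: the factor $(|\pazocal N| - 2|\pazocal S|)$ changes sign, so the max must be analyzed piecewise in $|\pazocal S|$, and one must verify that the normalization $y_{\hat\sigma}(\pazocal S)$ does not accidentally shrink the denominator. Establishing \emph{strict} rather than weak inequality is the secondary delicate point, and I expect it will follow either from the non-triviality of the core of a strictly convex game or from the fact that the cap $\bar v(\pazocal N)/|\pazocal N|$ is genuinely active whenever the grand coalition's per-capita payoff is below the unbounded bound of~\citep{doan2014robust}.
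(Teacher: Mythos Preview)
Your numerator step is correct and matches the paper: since the expected Shapley allocation $\bar x$ is one particular efficient vector, $\hat\sigma(\pazocal N,\bar v)=\min_{\pazocal S}\bigl(\sum_{i\in\pazocal S}\bar x_i-\bar v(\pazocal S)\bigr)$ is bounded above by the max-min quantity $\sigma(\pazocal N,\bar v)$, so $\hat\sigma\le\sigma$. The first argument of the min, $\bar v(\pazocal N)/|\pazocal N|$, is identical in $\hat\delta$ and in $\delta$, so everything hinges on comparing the second argument.

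Your denominator comparison, however, breaks. Because $\hat\sigma\le\sigma$, you have $y_{\hat\sigma}(\pazocal S)\le y_\sigma(\pazocal S)$ for every proper $\pazocal S$. For coalitions with $|\pazocal S|<|\pazocal N|/2$ the coefficient $(|\pazocal N|-2|\pazocal S|)$ is positive, so the term $|\pazocal S|+(|\pazocal N|-2|\pazocal S|)\,y_{\hat\sigma}(\pazocal S)$ is \emph{smaller} than its $y_\sigma$ counterpart, not larger. Hence the max defining the $\hat\delta$-denominator can be strictly smaller than the one in $\delta$, and a smaller numerator over a smaller denominator gives no conclusion. Your anticipated ``main obstacle'' is real, and the piecewise sign analysis you propose does not resolve it.

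The paper's proof avoids separating numerator and denominator altogether. It sets
\[
f(s)\;=\;\frac{s}{\displaystyle\max_{\substack{\pazocal S\subsetneq\pazocal N\\ \pazocal S\neq\emptyset}}\Bigl(|\pazocal S|+(|\pazocal N|-2|\pazocal S|)\,\tfrac{\bar v(\pazocal S)+s}{\bar v(\pazocal N)}\Bigr)},
\]
divides numerator and denominator by $s>0$, and observes that each term inside the max is a decreasing function of $s$ (the coefficient of $1/s$ is $|\pazocal S|+(|\pazocal N|-2|\pazocal S|)\bar v(\pazocal S)/\bar v(\pazocal N)$, which is nonnegative by monotonicity of $\bar v$). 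The max of decreasing functions is decreasing, so $f$ is increasing; with $\hat\sigma\le\sigma$ this yields $f(\hat\sigma)\le f(\sigma)$ and hence $\hat\delta\le\delta$. Note that the paper's own argument actually delivers only the weak inequality $\hat\delta\le\delta$; the strict inequality in the lemma statement is not established there either, so your concern about strictness is well placed but is not something the paper resolves.
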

The proof of Lemma~\ref{lemma: tightness-bound} is provided in Appendix~\ref{appendix:delta_hut}.

We now present the \added{first} main result of our work, which establishes a lower bound on the probability of stability of the grand coalition. We first address the case where, for each SP~$i$, the utilities $u_i^t(\cdot)$ may be dependent across time slots $t \in \pazocal{T}$.
\added{We then consider the case where} the utilities $u_i^t(\cdot)$ are independent \added{and bounded} across time slots $t \in \pazocal{T}$. \added{In some cases, it is nevertheless reasonable to assume boundedness: (i)~if the communication channel capacity is finite, the requests that arrive to the SP cannot grow unbounded, and (ii)~if the willingness to pay of the end users is bounded (as is reasonable to assume), an SP setting prices tending to infinity would become completely non-competitive.} \deleted{This temporal independence is required to apply}\added{In the dependent case, we use Chebyshev’s inequality~\citep[Theorem~18.2]{harchol2023introduction}, whereas in the independent and bounded case, we rely on} Hoeffding’s inequality~\cite[Theorem~2.8]{Boucheron2013}.\deleted{, which}\added{ Both inequalities} provide\deleted{s} a bound on the deviation of the total revenue from its expected value~\eqref{eq:additional} (\added{i.e.,} on the sum of utility fluctuations over time).

\added{Moreover}\deleted{Second}, we assume that utilities of the SPs are independent from each other. This is justified by the fact that SPs often serve distinct user populations and handle traffic from unrelated applications. 
Consequently, the uncertainties influencing their revenues arise from separate sources. 
This cross-player independence enables us to express the joint probability that all revenue deviations~\eqref{eq:additional} remain within a threshold~$\hat{\delta}$~\eqref{equation: delta hat} as a product over individual probabilities, which is a key step in deriving the final bound on coalition stability.

%We now present the main result of our work, which relies on two levels of independence. First, we assume that for each SP~$i$, the utilities $u_i^t(\cdot)$ are independent across time slots $t \in \pazocal{T}$. This temporal independence is required to apply Hoeffding’s inequality~\cite[Theorem~2.8]{Boucheron2013}, which provides a bound on the deviation of the total revenue from the expected value~\eqref{eq:additional} (bound on the sum of utility fluctuations over time). Second, we assume that utilities of the SPs are independent from each other. This is justified by the fact that SPs often serve distinct user populations and handle traffic from unrelated applications. Consequently, the uncertainties influencing their revenues arise from separate sources. This cross-player independence enables us to express the joint probability that all revenue deviations~\eqref{eq:additional} remain within a threshold~$\hat{\delta}$~\eqref{equation: delta hat} as a product over individual probabilities, which is a key step in deriving the final bound on co-investment profitability.

\begin{theorem}
\label{thm:main-theorem}
Let \deleted{us assume that} $u_i^t(\cdot)$ \deleted{are independent}\added{denote the} random \added{utility }function\deleted{s} \added{of player $i \in \pazocal N$ and time slot $t \in \pazocal T$, and let $\hat{\delta}>0$ denote the coalition stability threshold defined in \eqref{equation: delta hat}} \deleted{and that they are bounded, $\forall i\in\pazocal N, t\in\pazocal T$}. \added{Then}\deleted{I}\added{i}n stochastic game ($\pazocal N, \pazocal V$), the grand coalition is stable with probability at least $\nu^\text{LB}$ (see Definition~\ref{definition: robust_restriction}), where \added{\( \nu^{\mathrm{LB}} \) can be obtained as follows}:
\begin{enumerate}
    \item \added{If $u_i^t(\cdot)$ are random functions, possibly dependent, with finite variance $\forall i\in\pazocal N, t\in\pazocal T$. Then:
    \begin{equation}
    \nu^\text{LB}
    \defeq
    \prod_{i \in \pazocal N}
    \left(
    1 - 
    \min\left(
    \frac{\sigma_i^{2}}
    {\hat{\delta}^{2}}
    ,1\right)
    \right), 
    \label{eq:lower_bound_chebyshev}
    \end{equation}
    where
    \begin{equation}
    \label{eq:var-u}
    \sigma_i^{2}
    \defeq
    \mathrm{Var}\!\left(
    z_{i,\omega,\pazocal{N}}
    \right)
    \overset{\eqref{equation: lb}}{=}
    \mathrm{Var}\!\left(
    \sum_{t\in\pazocal T}
    \bigl(
    u_{i,\omega}^t(\mathbf h_{i,\pazocal{N}}^{t*})
    -
    \bar{u}_i^t(\mathbf h_{i,\pazocal{N}}^{t*})
    \bigr)
    \right)
    \end{equation}
    represents the variance of the revenue deviation $z_{i,\omega,\pazocal{N}}$ of player~$i\in\pazocal{N}$ over all time slots~$t\in\pazocal T$, $u_{i,\omega}^t(\mathbf h_{i,\pazocal{N}}^{t*})$ is the revenue collected by player $i$ from end-users using the allocated resources $\mathbf h_{i,\pazocal{N}}^{t*}$, in time slot $t$ in realization $\omega$, and $\bar{u}_i^t(\mathbf h_{i,\pazocal{N}}^{t*})$ is the corresponding expected value.}
    \item \added{If $u_i^t(\cdot)$ are independent random functions and are bounded $\forall i \in \pazocal{N}, t \in \pazocal{T}$. Then:}
    \begin{equation}
    \nu^\text{LB}
    \defeq
    \prod_{i \in \pazocal N}
    \deleted{\max} 
    \left(
    1 - \added{\min \left(}
    2 
    \exp\frac{-2\hat{\delta}^{2}}
    {
    \sum_{t\in\pazocal T} 
    (\text{range}(u_i^t) )^2
    },\deleted{0}
    \added{1 \right)}
    \right) 
    \label{equation: lower bound}
    \end{equation}
    where:
    \begin{equation}
    \label{eq:delta-u}
    \text{range}(u_i^t)
    \defeq
    \max_{\omega\in\Omega} u_{i,\omega}^{t}(\deleted{h}\added{\mathbf h}_{i,\pazocal N}^{t*})
    - 
    \min_{\omega\in\Omega} u_{i,\omega}^{t}(\deleted{h}\added{\mathbf h}_{i,\pazocal N}^{t*})
    \end{equation}
    represents the range of uncertainty of revenues collected by player~$i\in\pazocal N$ at time slot~$t\in\pazocal T$.
\end{enumerate}
\end{theorem}

\begin{proof}
We first find uncertainty bounds~$\mathbb{P}_{i,\pazocal N}^{\text{LB}},\forall i\in\pazocal N$, according to Definition~\ref{def:uncertainty-bound}. 
Since the \textit{InP} does not directly interact with end users, it does not receive any resource allocation, i.e., $\deleted{h}\added{\mathbf h}_{\text{\textit{InP}}, \pazocal N}^{t*} = \deleted{0}\added{\mathbf 0}$ for all $t \in \pazocal T$~\eqref{equation:no_resources_inp}, as stated in Assumption~\ref{ass:inp-zero}. Consequently, via~\eqref{equation:zero_ut} and~\eqref{eq:additional}, we have ~$z_{\text{\textit{InP}}, \omega, \pazocal N} = 0, \forall \omega \in \Omega$, which implies that~$\mathbb{P}_{\text{\textit{InP}}, \pazocal N}^{\text{LB}} = 1$~\eqref{equation: lb}.
Let us now consider~$i\neq$ \textit{InP}. 
\added{We first express the probability that the revenue deviation $|z_{i,\omega,\pazocal N}|$~\eqref{eq:additional} does not exceed the threshold $\hat{\delta}$~\eqref{equation: delta hat}, namely,
\(
\mathbb{P}\big(|z_{i,\omega,\pazocal N}| < \hat{\delta}\big)
\) \eqref{equation: lb}.}
Since values~$h_{i, \pazocal N}^{t*}$ have been precomputed~(Assumption~\ref{assumption: optimal}), random variables~$z_{i,\omega,\pazocal N}$ are statistically independent among players. Therefore:
\begin{equation}
    \begin{aligned}
        \mathbb{P}(|z_{i,\omega, \pazocal N}|\textless \hat{\delta}, \forall i \in \pazocal N) 
        = 
        \prod_{i \in \pazocal N} \left(1 - \mathbb{P}(|z_{i, \omega, \pazocal N}| \ge \hat{\delta})\right) 
        \end{aligned}
\label{eq:prod-of-Pi}
\end{equation}
Let $\Omega' \subseteq \Omega$ be the set of realizations in which all players' revenue deviations remain within the threshold~$\hat{\delta}$, that is,
\begin{equation}
\Omega' \triangleq \left\{ \omega \in \Omega : |z_{i,\omega,\pazocal N}| < \hat{\delta},\ \forall i \in \pazocal N \right\}    
\end{equation}
and applying the independence of the random variables $z_{i,\omega,\pazocal N}$ across players, we can write:
\begin{equation}
\mathbb{P}(\Omega') \defeq \mathbb{P}(|z_{i,\omega, \pazocal N}|\textless \hat{\delta}, \forall i \in \pazocal N) \overset{\added{\eqref{eq:prod-of-Pi}}}{=}
\prod_{i \in \pazocal N} \left(1 - \mathbb{P}(|z_{i,\omega,\pazocal N}| \ge \hat{\delta})\right)
\label{eq:prob_cheb_prod}
\end{equation}
\added{We now use two concentration inequalities to obtain $ \mathbb{P}(|z_{i,\omega,\pazocal N}| \ge \hat{\delta})$:}
\begin{enumerate}
    \item \added{By assumption, the utility functions $u_i^t(\cdot)$ have finite variance for each player $i$ and each time slot $t \in \pazocal{T}$. Random variables $u_{i,\omega}^t(\mathbf h_{i,\pazocal{N}}^{t*})$ admit expectations $\bar{u}_i^t(h_{i,\pazocal N}^{t*})$ and finite variances. Hence, the revenue deviation $z_{i,\omega,\pazocal{N}}$~\eqref{eq:additional} is a finite-variance random variable, being a finite sum over time slots of finite-variance terms. Therefore, it satisfies the conditions for applying Chebyshev’s inequality \citep[Theorem 18.2]{harchol2023introduction}: namely, it is a random variable with finite variance $\sigma_i^2$ as defined in~\eqref{eq:var-u}. The probability that $|z_{i,\omega,\pazocal N}|$ exceeds a given threshold $\hat{\delta}$~\eqref{equation: delta hat} is thus bounded via Chebyshev’s inequality as:
\begin{align}
\mathbb{P}\!\left(\left|z_{i,\omega,\pazocal N} 
      - \mathbb{E}_{\omega}[z_{i,\pazocal N}]\right|\ge \hat{\delta}\right) = \mathbb{P}\!\left(|z_{i,\omega,\pazocal N}|\ge \hat{\delta}\right)
\le
\min\!\left(
      \frac{\sigma_i^{2}}{\hat{\delta}^{2}},\,
      1
\right),
\qquad
\forall i\in\pazocal N,
\label{eq:proba_under_chebyshev}
\end{align}
where $z_{i,\pazocal N}$ is the random variable corresponding to the
realizations $z_{i,\omega,\pazocal N}$.  
Its expectation is
\begin{equation}
\mathbb{E}_{\omega}[z_{i,\pazocal N}]
=
\sum_{t\in\pazocal T}
\Big(
\mathbb{E}_{\omega}\big[u_{i}^{t}(\mathbf h_{i,\pazocal N}^{t*})\big]
-
\bar u_i^{t}(\mathbf h_{i,\pazocal N}^{t*})
\Big)
= 0,
\end{equation}
Then, substituting~\eqref{eq:proba_under_chebyshev} into~\eqref{eq:prob_cheb_prod},  we obtain the following lower bound:
\begin{equation}
 \mathbb{P}(\Omega') \ge 
\prod_{i \in \pazocal N} \left(1 - \min\left(\frac{\sigma_i^{2}}{\hat{\delta}^{2}}, 1 \right)\right) \defeq \nu^{\text{LB}} 
\label{eq:chebyshev}
\end{equation}}
\item By assumption, the utility functions $u_i^t(\cdot)$ are independent across time slots $t \in \pazocal{T}$ for each player $i$, and the random variables $u_{i,\omega}^t(\deleted{h}\added{\mathbf h}_{i,\pazocal{N}}^{t*})$ are bounded. It follows that the deviations $u_{i,\omega}^t(\deleted{h}\added{\mathbf h}_{i,\pazocal{N}}^{t*}) - \bar{u}_i^t(\deleted{h}\added{\mathbf h}_{i,\pazocal{N}}^{t*})$ are independent, zero-mean, and bounded for each $t$. 
Therefore, revenue deviation $z_{i,\omega,\pazocal{N}}$~\eqref{eq:additional} satisfies the conditions for applying Hoeffding’s inequality ~\cite[Theorem~2.8]{Boucheron2013}: namely, it is a sum of independent, bounded random variables. The probability that $|z_{i,\omega,\pazocal{N}}|$ exceeds a given threshold $\hat{\delta}$~\eqref{equation: delta hat} is thus bounded \added{ via Hoeffding’s inequality} as:
\begin{align}
\mathbb{P}(|z_{i,\omega, \pazocal N}|\ge \hat{\delta})
\le
\min \left(
2\exp
\frac{-2 \hat{ \delta}^{2}}
{\sum_{t\in\pazocal T} 
(\text{range}(u_i^t))^2}
        , 1
        \right),
&& \forall i\in\pazocal N
\label{equation: proba uncer}
\end{align}
Then, \added{substituting} \deleted{using the upper bound from Hoeffding’s inequality}~\eqref{equation: proba uncer} \added{into~\eqref{eq:prob_cheb_prod}}, we obtain the following lower bound:
\begin{equation}
\mathbb{P}(\Omega') 
\ge 
\prod_{i \in \pazocal N} \left(1 - \min\left(2 \exp\left( \frac{-2 \hat{\delta}^2}{\sum_{t \in \pazocal T} (\text{range}(u_i^t))^2} \right), 1 \right)\right) \added{\defeq \nu^{\text{LB}}}
\label{eq:probability-omega-prime}
\end{equation}
\end{enumerate}
If, for all realizations, we have that~$\added{|}z_{i,\omega,\pazocal N} \added{|}< \hat{\delta}$, then the grand coalition is stable in the stochastic game~\citep[Theorem 2]{doan2014robust}. This condition is verified for all realizations in~$\Omega'$. Therefore, we can affirm that the grand coalition~$\pazocal N$ is stable in stochastic game~$(\pazocal N,\pazocal V(\Omega'))$ (according to Definition~\ref{definition: robust_restriction}). Thanks to \added{\eqref{eq:chebyshev} and} \eqref{eq:probability-omega-prime}, we obtain the theorem.
\end{proof}

\subsection{Profitability of the co-investment under uncertainty}\label{sec:profitability}
Theorem~\ref{thm:main-theorem} provides a lower bound on the probability of grand coalition stability\deleted{under the assumption that revenues are uncorrelated and bounded}. When revenues are \deleted{correlated} \added{dependent}, \deleted{one can instead} \added{one can} apply Chebyshev’s inequality~\added{Theorem~\ref{thm:main-theorem}~(1)}.\deleted{\citep{feller1991introduction}} However, in the presence of strong temporal \deleted{correlations} \added{dependence}, Chebyshev’s bound may become too loose to be informative. This raises a key question: Can co-investment remain profitable for all players despite high revenue variability?

Answering this question requires verifying a necessary condition, namely, whether the co-investment yields a non-negative payoff for each player. Specifically, we must evaluate whether the individual payoff $x_{i,\omega}$ for each player $i \in \pazocal{N}$, as well as the realized coalition payoff $v_{\omega}(\pazocal{S})$, for all $\omega \in \Omega$ and all $\pazocal{S} \subseteq \pazocal{N}$, is non-negative.
To formalize what we mean by a profitable co-investment, we introduce the following definition.
\begin{definition}
We say that the co-investment of a coalition~$\pazocal S \subseteq \pazocal N$ is profitable for all its players with probability at least~$\nu_{\pazocal S}^{\text{LB}} \in [0,1]$, if the probability that every player~$i \in \pazocal S$ receives a non-negative payoff is at least~$\nu_{\pazocal S}^{\text{LB}}$, i.e.,
\begin{equation}
    \mathbb{P}\left( x_{i,\omega} \geq 0, \forall i \in \pazocal S \right) \geq \nu_{\pazocal S}^{\text{LB}}
\end{equation}
where~$x_{i,\omega}$ denotes the random payoff of player~$i$ in realization~$\omega$, resulting from the co-investment by coalition~$\pazocal S$.
\label{definition:profitable_coinv}
\end{definition}

\added{
We now present the second main result of our work, which establishes a lower bound on the probability of profitable co-investment.
\begin{theorem}
\label{thm:prof}
Let \( \{ u_{i,\omega}^t(\cdot) \}_{i \in \pazocal{N},\, t \in \pazocal{T}} \) 
be non-negative random variables. 
We define the total utility in realization \( \omega \in \Omega \) as
\begin{equation}
    U_{\omega} \defeq \sum_{t \in \pazocal{T}} \sum_{i \in \pazocal{N}} 
    u_{i,\omega}^t\bigl(\mathbf h_{i,\pazocal{N}}^{t*}\bigr),
    \label{eq:ut}
\end{equation}
and denote by \(U\) the corresponding random variable. 
Let \( \mathbb{E}_{\omega}[U] \) and \( \mathrm{Var}_{\omega}(U) \) be,
respectively, the expected value and variance of \(U\) over all realizations 
\(\omega \in \Omega\). 
Assuming that \( \mathrm{Var}_{\omega}(U) < \infty \), 
the co-investment is profitable with at least the probability:
\begin{equation}
\nu_{\pazocal{N}}^{\text{LB}}
\;\defeq\;
\frac{
    1
}{
    1
    +
    \dfrac{
        \mathrm{Var}_{\omega}(U)
    }{
        \bigl(\mathbb{E}_{\omega}[U] - \text{Cost}(I,\mathbf C_{\pazocal N}^*)\bigr)^{2}
    }
}
\end{equation}
Here, $\text{Cost}(I,\mathbf{C}_{\pazocal N}^*)$ is the total cost~\eqref{eq:cost-sharing}.
\end{theorem}
\begin{proof}
Based on Definition~\ref{definition:profitable_coinv}, the co-investment is profitable if every player in the coalition receives a non-negative payoff, i.e., $x_{i, \omega}\ge 0, \forall i \in \pazocal{N}$. This is possible when the total coalition payoff, denoted $v_\omega(\pazocal{N})$, is itself positive; that is, when the coalition’s total utility exceeds its total cost, i.e., 
\begin{equation}
    U_\omega > \text{Cost}(I,\mathbf C_{\pazocal N}^*), \quad \forall \omega \in \Omega
    \label{eq:profit_condition}
\end{equation}
In other words, the co-investment is profitable whenever~\eqref{eq:profit_condition} holds. 
We are therefore interested in the probability
\begin{equation}
    \mathbb{P}\bigl(U_\omega > \text{Cost}(I,\mathbf C_{\pazocal N}^*)\bigr)
    \label{eq:probability}
\end{equation}
To bound this probability, we rely on the Paley–Zygmund inequality \citep[Page~8]{kahane1985some}(see also \citep[Section 3]{ghosh2002probability}), applied to the non-negative random variable $U$ (which has finite variance by assumption). 
For any $\theta \in [0,1]$, the inequality yields
\begin{equation}
    \mathbb{P}\bigl(U_{\omega} > \theta\,\mathbb{E}_{\omega}[U]\bigr)
    \;\geq\;
    \frac{(1-\theta)^2\,(\mathbb{E}_{\omega}[U])^2}{(1-\theta)^2\,(\mathbb{E}_{\omega}[U])^2 + \text{Var}_\omega(U)}
    \label{eq:paley_zygmund}
\end{equation}
To obtain the probability in~\eqref{eq:probability}, 
we define 
\begin{equation}
    \theta \defeq \frac{\text{Cost}(I,\mathbf C_{\pazocal N}^*)}{\mathbb{E}_\omega[U]}
    \label{eq:theta}
\end{equation}
We now show that $\theta \in [0,1]$ as follows. The optimization problem in Assumption~\ref{assumption: optimal} always admits the 
\emph{do-nothing} feasible solution
\[
\mathbf C = \mathbf 0,
\qquad 
\mathbf h_i^t = \mathbf 0,\ \forall i\in\pazocal S,\ \forall t\in\pazocal T, \quad \text{component-wise}
\]
which trivially satisfies constraints~\eqref{eq:constraint-sum}--\eqref{eq:constraint-positive}.  
Under this choice, no capacity is installed and no resources are allocated at any time; hence
no user of coalition~$\pazocal N$ receives service.  
Therefore, the expected utilities are all zero, and there is no cost to pay, i.e., \(\text{Cost}(I,0)=0\), then the value of the objective at the do-nothing solution becomes zero.
Since the do-nothing solution is feasible, we have the optimal value of the optimization problem in~\eqref{equation: optimal solution}:
\begin{equation}
\bar{v}(\pazocal N)
\;\ge\;
0
\label{eq:pos_v}
\end{equation}
where $\bar{v}(\pazocal N)$ is defined in~\eqref{equation:v_bar_s} as: 
\begin{equation}
    \bar{v}(\pazocal N) = \mathbb{E}_{\omega}[U] - \text{Cost}(I,\mathbf C_{\pazocal N}^*)
\end{equation}
with $\mathbb{E}_{\omega}[U]= \sum_{t \in \pazocal T} \sum_{i \in \pazocal N} \bar{u}_{i}^{t}(h_{i,\pazocal N}^{t*})$.
Therefore, the coalition value is always positive in expectation~\eqref{eq:pos_v},
which proves ${E}_{\omega}[U] \ge \text{Cost}(I,\mathbf C_{\pazocal N}^*)$. Hence $\theta \in [0,1]$. Substituting this choice of $\theta$~\eqref{eq:theta} into~\eqref{eq:paley_zygmund} gives:
\begin{equation}
    \mathbb{P}\bigl(U_\omega > \text{Cost}(I,\mathbf C_{\pazocal N}^*)\bigr)
    \;\geq\;
    \frac{\bigl(1-\tfrac{\text{Cost}(I,\mathbf C_{\pazocal N}^*)}{\mathbb{E}_\omega[U]}\bigr)^2 \, \bigl(\mathbb{E}_\omega[U]\bigr)^2}
         {\bigl(1-\tfrac{\text{Cost}(I,\mathbf C_{\pazocal N}^*)}{\mathbb{E}_\omega[U]}\bigr)^2 \, \bigl(\mathbb{E}_\omega[U]\bigr)^2 + \mathrm{Var}_\omega(U)}
    \label{eq:PGZ_step}
\end{equation}
Noting that
\[
\bigl(1-\tfrac{\text{Cost}(I,\mathbf C_{\pazocal N}^*)}{\mathbb{E}_\omega[U]}\bigr)^2 \, \bigl(\mathbb{E}_\omega[U]\bigr)^2
=
\bigl(\mathbb{E}_\omega[U] - \text{Cost}(I,\mathbf C_{\pazocal N}^*)\bigr)^2,
\]
we can rewrite~\eqref{eq:PGZ_step} as
\begin{equation}
    \mathbb{P}\bigl(U_\omega > \text{Cost}(I,\mathbf C_{\pazocal N}^*)\bigr)
    \;\geq\;
    \frac{
        \bigl(\mathbb{E}_\omega[U] - \text{Cost}(I,\mathbf C_{\pazocal N}^*)\bigr)^2
    }{
        \bigl(\mathbb{E}_\omega[U] - \text{Cost}(I,\mathbf C_{\pazocal N}^*)\bigr)^2
        + \mathrm{Var}_\omega(U)
    }
    \label{eq:PGZ_step2}
\end{equation}
By dividing the numerator and denominator of~\eqref{eq:PGZ_step2} by 
$\bigl(\mathbb{E}_\omega[U] - \text{Cost}(I,\mathbf C_{\pazocal N}^*)\bigr)^{2}$, we obtain:
\begin{equation}
    \mathbb{P}\bigl(U_\omega > \text{Cost}(I,\mathbf C_{\pazocal N}^*)\bigr)
    \;\geq\;
    \frac{1}{
        1
        +
        \dfrac{\mathrm{Var}_\omega(U)}
              {\bigl(\mathbb{E}_\omega[U] - \text{Cost}(I,\mathbf C_{\pazocal N}^*)\bigr)^{2}}
    }
    \label{eq:PGZ_step3}
\end{equation}
This yields the claimed lower bound:
\begin{equation}
    \nu_{\pazocal{N}}^{\text{LB}}
    \;\defeq\;
    \frac{1}{
        1
        +
        \dfrac{\mathrm{Var}_{\omega}(U)}
              {\bigl(\mathbb{E}_{\omega}[U] - \text{Cost}(I,\mathbf C_{\pazocal N}^*)\bigr)^{2}}
    }
    \label{eq:nu_profitable}
\end{equation}
Hence, by Definition~\ref{definition:profitable_coinv}, the obtained lower bound $\nu_{\pazocal N}^{\text{LB}}$ \eqref{eq:nu_profitable} guarantees profitability of the co-investment, which completes the proof.
\end{proof}}

\added{We now present the following lemma (proved in Appendix~\ref{appendix:lemma_parametrize}), which characterizes the behavior of the lower bound
$\nu_{\pazocal N}^{\mathrm{LB}}$
when the underlying utility functions depend on an exogenous parameter.}

\added{\begin{lemma}
\label{lem:log_derivative}
If the random utility functions
$u_{i}^t(\mathbf h_i^t)$
are parametrized by an exogenous parameter $H$
(we denote such functions by
$u_{i}^t(\mathbf h_i^t)^{[H]}$,
and let the corresponding total utility be~$U^{[H]}$, to emphasize this parametrization),
and if we denote the total cost by $\text{Cost}(I,\mathbf C^{[H]})$, the corresponding expected coalition payoff by
\(
\bar v^{[H]}(\pazocal N)
\;\defeq\;
\mathbb{E}_{\omega}\!\left[U^{[H]}\right]
-
\text{Cost}(I,\mathbf C^{[H]}),
\)
and the associated variance by
\(
\mathrm{Var}_{\omega}\!\left(U^{[H]}\right),
\)
then, if the logarithmic derivative of the variance of the total utility
with respect to $H$ is smaller than or equal to the logarithmic
derivative of the square of the expected coalition payoff,  i.e.,
\begin{equation}
\frac{\mathrm d}{\mathrm d H}
\log \bigl(\mathrm{Var}_{\omega}(U^{[H]})\bigr)
\;\le\;
\frac{\mathrm d}{\mathrm d H}
\log \bigl(\bar v^{[H]}(\pazocal N)^2\bigr),
\label{eq:log_derivative_condition}
\end{equation}
the lower bound on the probability of profitable co-investment,
$\nu_{\pazocal N}^{\mathrm{LB}[H]}$ (defined as the lower bound in Theorem~\ref{thm:prof} under parameter value $H$), is non-decreasing with respect to $H$.
\end{lemma}}

\subsection{Realtionship between stability and profitability}\label{sec:relationship}
We now formalize the relationship between the stability of the grand coalition and the profitability of the co-investment. By definition, if the grand coalition is stable, no player is better off not being part of the grand coalition,  which implies that each player is receiving a non negative payoff. Therefore, stability guarantees profitability for all players. This is captured in the following proposition.

\begin{proposition}
Assume that the realized value of the grand coalition is strictly positive for all realizations, i.e., \( v_{\omega}(\pazocal{N}) > 0, \forall \omega \in \Omega \). Then, if the grand coalition~$\pazocal{N}$ is stable in the stochastic game~$(\pazocal{N}, \pazocal{V})$, the co-investment of the grand coalition is profitable for all players with probability~$1$.
\label{proposition:grand_stable}
\end{proposition}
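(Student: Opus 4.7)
The plan is to exploit the stability of the grand coalition in the stochastic game to extract a non-negative share profile from the robust core, and then evaluate the induced payoff in every realization. Because stability in the stochastic game already quantifies universally over $\omega\in\Omega$, no probabilistic estimate will be needed and the conclusion will come out with probability exactly $1$.

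The first step I would carry out is to establish the auxiliary fact that every singleton coalition has value zero in every realization, i.e., $v_\omega(\{i\}) = 0$ for all $i\in\pazocal N$ and $\omega\in\Omega$. For $i\neq\textit{InP}$, Proposition~\ref{prop:allocation} forces $C^*_{\{i\}}=0$ and $h_{i,\{i\}}^{t*}=0$; combining this with the cost-zero property~\eqref{eq:cost-zero} and the utility-zero condition of Assumption~\ref{ass:nonnegative} yields $v_\omega(\{i\})=0$ at once. For the singleton $\{\textit{InP}\}$, Assumption~\ref{ass:inp-zero} forces $h_{\textit{InP}}^t=0$, so the revenues vanish regardless of $C$; the optimization in Assumption~\ref{assumption: optimal} therefore selects $C^*_{\{\textit{InP}\}}=0$ (using~\eqref{eq:cost-zero} to rule out any strictly positive $C$, which would only incur cost), and again $v_\omega(\{\textit{InP}\})=0$.

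With the singleton values in hand, I would invoke the stability hypothesis. By Definition~\ref{def:stable-stochastic}, there exists $\vec y\in rcore(\pazocal N,\pazocal V)$, and specialising the stability inequality of Definition~\ref{def:rcore} to the singleton coalitions $\pazocal S=\{i\}$ gives $v_\omega(\pazocal N)\,y_i \ge v_\omega(\{i\}) = 0$ for every $\omega\in\Omega$ and $i\in\pazocal N$. Using the assumed strict positivity $v_\omega(\pazocal N)>0$, I may divide and conclude $y_i\ge 0$ for every player. Hence the realized payoff $x_{i,\omega}=v_\omega(\pazocal N)\,y_i$ is the product of a strictly positive quantity and a non-negative share, and is therefore non-negative for every $i$ and every $\omega\in\Omega$. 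By Definition~\ref{definition:profitable_coinv}, this is precisely the statement that the co-investment of the grand coalition is profitable for all players with probability $1$.

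The only delicate point in the argument is making the singleton values rigorous, in particular for $\{\textit{InP}\}$, where one must argue from the optimization~\eqref{equation: optimal solution} rather than from Proposition~\ref{prop:allocation} directly; once that is settled, the remainder is a direct specialisation of the robust-core inequality and does not require any tail bound or concentration argument.
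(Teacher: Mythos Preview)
Your proposal is correct and follows essentially the same route as the paper's proof: extract $\vec y$ from the robust core, specialise the stability inequality to singleton coalitions, use $v_\omega(\{i\})=0$ together with $v_\omega(\pazocal N)>0$ to obtain $y_i\ge 0$, and conclude $x_{i,\omega}=v_\omega(\pazocal N)\,y_i\ge 0$ for all $\omega$. The only difference is that you justify the vanishing of singleton values more carefully (especially for $\{\textit{InP}\}$), whereas the paper simply invokes Proposition~\ref{proposition:pointless}.
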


We now show, in the following Proposition and Corollary, that profitability alone does not guarantee stability: even when all players benefit, the grand coalition may still fail to hold together.

\begin{proposition}
The probability that the co-investment is profitable for all players is larger than or equal to the probability that the grand coalition is stable, i.e.,
\begin{equation}
\mathbb{P}\left( x_{i,\omega} \geq 0,\ \forall i \in \pazocal N \right) \ge
    \mathbb{P}\left( \text{Grand coalition is stable} \right)
\end{equation}
\label{proposition:profitable_in_stable}
\end{proposition}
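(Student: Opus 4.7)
The plan is to establish the stronger set-theoretic inclusion that every realization in which the grand coalition is stable also yields non-negative payoffs for every player; the probability inequality then follows from monotonicity of $\mathbb{P}$. Concretely, let $\Omega' \subseteq \Omega$ be a set of realizations witnessing stability in the sense of Definition~\ref{definition: robust_restriction}, and let $B \defeq \{\omega \in \Omega : x_{i,\omega} \geq 0,\ \forall i \in \pazocal N\}$. The goal reduces to proving $\Omega' \subseteq B$, which immediately gives $\mathbb{P}(B) \geq \mathbb{P}(\Omega')$.

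The key preparatory step is to show that every singleton coalition $\{i\} \subseteq \pazocal N$ has realized value zero in every $\omega$. By Proposition~\ref{proposition:pointless}, both the \textit{InP} and the set of all SPs are veto players, so every singleton is ``pointless''. More explicitly, for any SP $i \neq \textit{InP}$, the singleton $\{i\}$ omits the \textit{InP}, so Proposition~\ref{prop:allocation} yields $C^*_{\{i\}} = 0$ and $h^{t*}_{i,\{i\}} = 0$; Assumption~\ref{ass:nonnegative} then pins the utility to zero and~\eqref{eq:cost-zero} pins the cost to zero, giving $v_\omega(\{i\}) = 0$. For the singleton $\{\textit{InP}\}$, Assumption~\ref{ass:inp-zero} forces $h_{\textit{InP}}^{t*} = 0$, so the objective in Assumption~\ref{assumption: optimal} collapses to $-\textit{Cost}(I, C)$, whose maximum over $C \geq 0$ is attained at $C = 0$ by~\eqref{eq:cost-zero}; hence $v_\omega(\{\textit{InP}\}) = 0$ as well.

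Next, fix any $\omega \in \Omega'$. Since $rcore(\pazocal N, \pazocal V(\Omega'))$ is non-empty by stability, I would pick a share vector $\vec y$ in it, so the per-player payoff in realization $\omega$ is $x_{i,\omega} = y_i \cdot v_\omega(\pazocal N)$. Applying the stability inequality of Definition~\ref{def:rcore} to the singleton $\pazocal S = \{i\}$ yields $v_\omega(\pazocal N) \cdot y_i \geq v_\omega(\{i\}) = 0$; since $v_\omega(\pazocal N) > 0$ on $\Omega'$ by Definition~\ref{definition: robust_restriction}, this forces $y_i \geq 0$ and hence $x_{i,\omega} \geq 0$ for every $i \in \pazocal N$. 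Thus $\omega \in B$, proving $\Omega' \subseteq B$ and therefore the claimed probability bound.

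The only delicate point is the identification of the payoff $x_{i,\omega}$ in Definition~\ref{definition:profitable_coinv} with a core-consistent allocation on the stability event. This identification is natural under the Shapley-value rule of Section~\ref{sec:shapley}, which, by convexity of the nominal game (Theorem~\ref{thm:convexity-of-game}), lies in the core and a fortiori inherits non-negativity on singletons; beyond that, the argument is essentially bookkeeping on singleton values and a direct application of the stability constraint.
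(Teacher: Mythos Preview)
Your proposal is correct and follows essentially the same route as the paper: show that the stability event is contained in the profitability event via the singleton argument ($v_\omega(\{i\})=0$, robust-core stability applied to $\pazocal S=\{i\}$, and $v_\omega(\pazocal N)>0$ on the stability set), then invoke monotonicity of $\mathbb P$. The only cosmetic difference is that the paper packages the singleton step as a citation of Proposition~\ref{proposition:grand_stable}, whereas you inline that content directly; your explicit verification that $v_\omega(\{\textit{InP}\})=0$ and $v_\omega(\{i\})=0$ for SPs, and your flagging of the $x_{i,\omega}$ identification issue, are if anything more careful than the paper's treatment.
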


\begin{corollary}
The lower bound on the probability that the grand coalition is stable, denoted~$\nu^{\text{LB}}$, also serves as a lower bound on the probability that the co-investment is profitable for all players. That is,
\begin{equation}
    \mathbb{P}\left( x_{i,\omega} \geq 0,\ \forall i \in \pazocal N \right) \ge \nu^{\text{LB}}
\end{equation}
\label{corollary:v_lb}
\end{corollary}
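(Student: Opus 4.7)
The plan is to obtain the corollary as an immediate consequence of chaining two results that are already available in the paper: Theorem~\ref{thm:main-theorem}, which gives a lower bound on the probability that the grand coalition is stable, and Proposition~\ref{proposition:profitable_in_stable}, which says that profitability of the co-investment is at least as likely as stability of the grand coalition. No new machinery is required; the work is purely to compose these two inequalities carefully.

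Concretely, I would first invoke Theorem~\ref{thm:main-theorem} to obtain $\mathbb{P}(\text{grand coalition is stable}) \ge \nu^{\text{LB}}$, where the event "grand coalition is stable" corresponds to the set of realizations $\Omega' \subseteq \Omega$ constructed in the proof of that theorem (the realizations on which every deviation $|z_{i,\omega,\pazocal N}|$ stays below $\hat{\delta}$, so that the robust core of the restricted game is nonempty as in Definition~\ref{definition: robust_restriction}). I would then invoke Proposition~\ref{proposition:profitable_in_stable}, which asserts
\begin{equation}
\mathbb{P}\!\left( x_{i,\omega} \ge 0,\ \forall i \in \pazocal N \right) \ge \mathbb{P}\!\left( \text{grand coalition is stable} \right).
\end{equation}
Chaining the two inequalities yields
\begin{equation}
\mathbb{P}\!\left( x_{i,\omega} \ge 0,\ \forall i \in \pazocal N \right) \ge \mathbb{P}\!\left( \text{grand coalition is stable} \right) \ge \nu^{\text{LB}},
\end{equation}
which is precisely the statement of the corollary.

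The only point that deserves a brief justification rather than a computation is that the two probabilities on the right-hand sides refer to compatible events on the same probability space $(\Omega, \pazocal F, \mathbb P)$; this is the case because the stability event in Theorem~\ref{thm:main-theorem} is defined as $\Omega'$, and the implication "stable $\Rightarrow$ profitable" used in Proposition~\ref{proposition:profitable_in_stable} holds realization by realization on $\Omega'$, so set inclusion of events translates directly into the probability inequality. Accordingly, there is no genuine obstacle here; the main thing to watch is notational consistency between the two results being invoked and ensuring that the combined inequality is stated with the same underlying measure $\mathbb P$.
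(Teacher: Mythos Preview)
Your proposal is correct and matches the paper's own proof essentially line for line: the paper also invokes that $\nu^{\text{LB}}$ lower-bounds the probability of stability (from Theorem~\ref{thm:main-theorem}/Definition~\ref{definition: robust_restriction}) and then applies Proposition~\ref{proposition:profitable_in_stable} to chain through to profitability.
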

The proofs of Proposition~\ref{proposition:grand_stable}, Proposition~\ref{proposition:profitable_in_stable}, and Corollary~\ref{corollary:v_lb} are provided in Appendix~\ref{appendix:proof_prop_poistivepayoff}, Appendix~\ref{appendix:prof_in_stable} and Appendix~\ref{appendix:proof_coro}, respectively.

\subsection{Sharing the payoff among players}
\label{sec:shapley}
After optimizing the capacity and the shares, and maximizing the total payoff of the game, the proposed co-investment scheme distributes the payoff among players.
Stochastic game $(\pazocal N, \pazocal V)$ presents uncertain payoffs~\eqref{eq: generic expression of payoff-realizations_s}. At the end of the investment period~$I$, all payoffs become known and they must be redistributed among players in an efficient way.

To this end, we adopt the nucleolus, a cooperative-game solution concept that provides a unique payoff allocation and accounts for the dissatisfaction of all coalitions~\citep{schmeidler1969nucleolus}. In particular, the nucleolus selects the allocation that lexicographically minimizes coalition excesses, thereby prioritizing the most dissatisfied coalitions. This makes it well suited to our setting, where the payoff must be allocated while preserving the stability of the grand coalition. Moreover, when the core is non-empty, the nucleolus belongs to the core.

For each realization~$\omega \in \Omega$, let $x_{\omega} = (x_{i,\omega})_{i \in \pazocal N}$ denote the nucleolus allocation of the realized game $(\pazocal N, v_{\omega})$. For any coalition $\pazocal S \subseteq \pazocal N$, we define the excess of coalition $\pazocal S$ under allocation $x_{\omega}$ as
\begin{equation}
    e_{\omega}(\pazocal S, x_{\omega})
    \defeq
    v_{\omega}(\pazocal S) - \sum_{i \in \pazocal S} x_{i,\omega},
\label{eq:excess}
\end{equation}
where $v_{\omega}(\cdot)$ is calculated as in~\eqref{eq: generic expression of payoff-realizations_s}.

The nucleolus allocation is defined as the efficient allocation of the grand-coalition payoff that lexicographically minimizes the vector of coalition excesses. Therefore, for every realization~$\omega \in \Omega$, the allocation $x_{\omega}$ satisfies
\begin{equation}
    \sum_{i \in \pazocal N} x_{i,\omega} = v_{\omega}(\pazocal N)
\label{eq:efficiency-nucleolus}
\end{equation}

Observe that $x_{i,\omega}$ is thus a random variable for every player $i \in \pazocal N$, since it depends on the realization~$\omega$ of the stochastic game. At the end of the game, the actual realization~$\omega \in \Omega$ that occurred becomes known, and revenues are redistributed so that the payoff of each player is~$x_{i,\omega}$. Recall that each player~$i$ contributes an initial payment~$p_i$ at the start of the game, such that the total cost is fully covered by the coalition~\eqref{eq:cost-sharing}.
While these payments may differ across players, the scheme ensures that all contributions are balanced through the final reward distribution. This reward, denoted as~$r_{i,\omega}$, represents the actual amount received by player~$i$ in realization~$\omega$ at the end of the game. It reflects both the payoff value and the return on investment:
\begin{equation}
    r_{i,\omega} \defeq x_{i,\omega} + p_i \quad \forall \omega \in \Omega, i \in \pazocal N
    \label{equation:reward}
\end{equation}
An important property of the reward distribution~\eqref{equation:reward} is that it ensures budget balance: the total amount redistributed to the players exactly matches the total collected utility. In other words, the mechanism guarantees that no more is distributed than what has actually been earned. This property is stated in the following Lemma.

\begin{lemma}
The sum of the individual rewards~$r_{i,\omega}$ distributed to players~$i \in \pazocal N$ equals the total utility collected across all time slots in realization~$\omega$:
\begin{equation}
    \sum_{i \in \pazocal N} r_{i,\omega}
    =
    \sum_{i \in \pazocal N} \sum_{t \in \pazocal T}
    u_{i,\omega}^t(
    \deleted{h}\added{\mathbf h}_{i, \pazocal N}^{t*}
    )
\end{equation}
\label{lemma:sum_reward}
\end{lemma}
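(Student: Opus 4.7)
The plan is to unfold the definition of the reward, invoke the efficiency property of the Shapley value, and then substitute back the explicit expression for the coalition payoff so that the cost terms cancel with the upfront payments. The argument is essentially a chain of three substitutions, none of which should require new machinery.

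First, I would sum the reward definition~\eqref{equation:reward} over $i \in \pazocal S$ to obtain
\[
\sum_{i \in \pazocal S} r_{i,\omega} \;=\; \sum_{i \in \pazocal S} x_{i,\omega} \;+\; \sum_{i \in \pazocal S} p_i .
\]
For the second term I would directly apply the cost-sharing condition~\eqref{eq:cost-sharing}, which gives $\sum_{i \in \pazocal S} p_i = \textit{Cost}(I, C_{\pazocal S}^*)$. For the first term I would use the efficiency property of the Shapley value: summing the stochastic Shapley allocation~\eqref{eq:x_i-omega} over all players in $\pazocal S$ yields $v_\omega(\pazocal S)$. This efficiency is a standard property of the Shapley value and is explicitly invoked in the paper as one of the reasons for choosing it in Section~\ref{sec:shapley}.

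Then I would substitute the realized coalition value~\eqref{eq: generic expression of payoff-realizations_s}, i.e.\
\[
v_\omega(\pazocal S) \;=\; \sum_{t \in \pazocal T} \sum_{i \in \pazocal S} u_{i,\omega}^t(h_{i,\pazocal S}^{t*}) \;-\; \textit{Cost}(I, C_{\pazocal S}^*),
\]
so that the two occurrences of $\textit{Cost}(I, C_{\pazocal S}^*)$ cancel, leaving exactly the right-hand side of the claim.

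I do not anticipate a real obstacle: the only non-trivial ingredient is the efficiency of the Shapley value, which follows from the classical identity $\sum_{i \in \pazocal N} \Delta_{i,\omega}(\pazocal S_i) = v_\omega(\pazocal N)$ for any ordering of players, already implicit in~\eqref{eq:x_i-omega}. The mild subtlety to flag is that the efficiency must be applied with respect to the coalition $\pazocal S$ (not the grand coalition $\pazocal N$), so one should read~\eqref{eq:x_i-omega} as the Shapley value inside the sub-game restricted to $\pazocal S$ when $\pazocal S \subsetneq \pazocal N$; this is consistent with the stochastic Shapley formulation of~\citep{ma2008shapley, chau2023explaining}.
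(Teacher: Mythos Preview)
Your proposal is correct and follows essentially the same approach as the paper's proof: sum the reward definition, invoke Shapley efficiency to replace $\sum_i x_{i,\omega}$ by $v_\omega(\pazocal S)$, apply the cost-sharing identity~\eqref{eq:cost-sharing}, and cancel the cost term against the payments. Your remark about reading~\eqref{eq:x_i-omega} as the Shapley value of the sub-game restricted to $\pazocal S$ is a useful clarification that the paper leaves implicit.
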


The proof of this Lemma can be found in Appendix~\ref{appendix:budget-balance}.

\section{Application to \added{Mobile} Edge computing (\added{M}EC)} \label{Section: ana-results}
We now consider a particular investment setting, namely the one of~\added{M}EC, where the \textit{InP} is a NO, such as Comcast or Orange, and SPs are actors receiving requests from users, processing them and replying to them. Examples of SPs are Netflix, Youtube or new companies providing new generation services, such as augmented or virtual reality.

\subsection{Settings}\label{sec:settings}
We aim to capture a realistic and analytically tractable expression for the 
total cost~$\text{Cost}(I,\deleted{C}\added{\mathbf C})$ incurred when 
installing computational capacity~$\deleted{C}\added{\mathbf C}$ in one or multiple edge nodes and maintaining it over an investment 
period~$I$. This cost should reflect two main components: 
(i)~the upfront cost of installing resources, which increases with the 
\deleted{total installed capacity} 
\added{installed resources at each node and for each resource type}, and 
(ii)~the operational cost of maintaining these resources throughout the 
investment period, which increases with both the capacity matrix and the duration of the investment.

A natural formulation that meets these requirements is the following matrix-based and time-dependent expression:
\begin{equation}
Cost(I, \deleted{C}\added{\mathbf C})= \deleted{d \cdot C + d' \cdot I \cdot C}
\added{
\sum_{k=1}^K \sum_{m=1}^M 
\left(
    d_{k,m}
    + \int_{0}^{I} D_{k,m}(t) dt
\right) C_{k,m}
}
\label{equation: cost}
\end{equation}
where 
\deleted{$d$ and $d'$ denote constant installation and maintenance prices.}
\added{
$d_{k,m}$ is the installation cost per unit of capacity of type~$k$ at node~$m$, 
and $D_{k,m}(t)$ is a non-negative and increasing function representing the time-varying unit maintenance cost for resource $k$ at node $m$. The integral $\int_0^I D_{k,m}(t)\, dt$ represents the maintenance expenditure over the investment period.
}

%\added{
%For the numerical evaluation in this section, we specialize to the case 
%$K = 1$ and $M = 1$, and assume a constant maintenance rate 
%$D(t)=d'$, so that \eqref{equation: cost} reduces to  
%$Cost(I,\mathbf C)=dC + C \int_0^I d'\,dt = dC + d'IC$.
%}

To realistically model the revenue generated from services provided to end users at the edge, the utility function to be adopted must meet the following criteria:  
(i)~monotonically increasing in the allocated resources, as more resources lead to higher revenue,
(ii)~exhibit diminishing returns, as the marginal benefit of additional resources decreases beyond a certain point due to capacity saturation or limited user demand, 
(iii)~load-dependent, meaning that the generated utility scales with the amount of user traffic an SP receives,  
and (4)~bounded, capturing economic limitations (e.g., no infinite revenue). We do not explicit in this work the quality of service of admitted end users, we assume that it is fulfilled using the quantity of resources allocated to the SPs.
The following utility function satisfies all of the above properties \added{ and represents the revenue obtained by SP~$i$ from serving its load using the aggregate of allocated resources across all resource types and nodes}:
\begin{equation}
u_{i,\omega}^t( 
\deleted{h}\added{\mathbf h}_i^t)
=
\beta_i \cdot 
\added{f(}l_{i,\omega}^t\added{)}
\cdot 
\left(1 - 
\exp^{-
\deleted{\xi \cdot h_i^t}
\added{
\sum_{k=1}^K \sum_{m=1}^M \xi_{k,m} \, h_{i,k,m}^t
}}
\right),
\quad 
\forall \omega \in \Omega,\; t\in\pazocal T,\; i\in\pazocal N
\label{equation: utility function}
\end{equation}
where $l_{i,\omega}^t$ is the load of end-user requests received by SP~$i$ in realization~$\omega$, \added{$f(l_{i,\omega}^t)$ is a function of this load, and $\beta_i$, which we call the benefit factor, can be interpreted as the revenue SP~$i$ gets from serving one unit of load. The allocation $h_{i,k,m}^t$ represents the amount of resource of type~$k$ allocated to SP~$i$ at node~$m$ and time slot~$t$, while}\deleted{, parameter} $\xi\added{_{k,m}}$ \deleted{represents} \added{is} the \added{corresponding} diminishing return \deleted{parameter} \added{coefficient associated with resource type~$k$ at node~$m$.} \deleted{, and 
$\beta_i$, which we call the benefit factor, can be interpreted as the revenue SP~$i$ gets from serving one unit of load. }
\added{The function $f(\cdot)$ represents the impact of the load on the utility. It can take different forms depending on how we wish to model congestion. This function can take any shape (with no impact on the theoretical part, Section 3). 
For instance, a realistic way to describe the congestion effect is to adopt a concave function, e.g.,
\begin{equation}
    f(l^t_{i,\omega})=\frac{l^t_{i,\omega}}{1+\kappa \cdot l^t_{i,\omega}}
    \label{eq:concave_f}
\end{equation}
where $\kappa \ge 0$ represents the level of congestion. Larger values of $\kappa$ correspond to higher congestion, as the marginal utility of additional load decreases more rapidly. This shape of function captures situations where the available resources may not be sufficient to accommodate a massive volume of incoming requests with a satisfactory level of service. In such cases, additional load yields smaller marginal gains due to limited capacity.
A simple case is obtained when $\kappa = 0$, for which:
\begin{equation}
    f(l_{i,\omega}^t)=l_{i,\omega}^t
    \label{eq:l_linear}
\end{equation} 
which corresponds to a 
setting without congestion effects, where a higher load directly increases 
the utility. We use this shape of function throughout the evaluation section.}

The load~$l_{i,\omega}^t$ represents the amount of requests collected by SP~$i$ at time slot~$t$ in realization~$\omega$  and is given by:
\begin{equation}
    l_{i,\omega}^t \defeq \lambda_{i,\omega}^t \cdot \Delta
    \label{equation: load}
\end{equation}
where~$\lambda_{i,\omega}^t$ is the request rate and~$\Delta$ is the duration of time slot.

Rate~$\lambda_{i, \omega}^t$ (and thus load~$l_{i, \omega}^t$) may change from realization~$\omega \in \Omega$ to another. It is thus a random variable. Therefore, the stochasticity of the utilities in our numerical results derives from the stochasticity of the user request rate, as shown in~\eqref{equation: utility function} and~\eqref{equation: load}.\footnote{Note that, the theory developed in the previous section holds as well if the sources of stochasticity are others, e.g.,  imperfect knowledge of benefit factors}
The expected amount of requests received by SP~$i$ at time slot~$t$ is defined as follows:
\begin{equation}
        \bar{l}_i^t \defeq \mathbb E_\omega [l_{i,\omega}^t] \overset{\eqref{equation: load}}{=} \bar{\lambda}_{i}^t \cdot \Delta
    \label{equation: expectedload}
\end{equation}
where~$\bar{\lambda}_{i}^t$ denotes the expected request rate, defined as:
\begin{equation}
    \bar{\lambda}_{i}^t \defeq \mathbb{E}_\omega \left[ \lambda_{i,\omega}^t \right]
    \label{equation:lambda_bar}
\end{equation}
To model the load~$l_{i,\omega}^t$~\eqref{equation: load} of each SP, we aim for a formulation that satisfies the following criteria:  
(i) periodic, to reflect the regular temporal cycles observed in network traffic (e.g., day–night patterns in user activity), 
(ii) random, as service demand includes fluctuations caused by unpredictable user behavior, and
(iii) tunable, meaning the model allows control over the influence of randomness versus expected trends, enabling flexible simulation and analysis.

To showcase the generality of the proposed co-investment scheme, we consider two settings, with fundamentally different characterizations of uncertainty. In the first setting (section~\ref{section:first_modeling}), the stochastic process underlying~$l_{i, \omega}^t$ is unknown. What is known is just that it has finite and known upper and lower bounds and that values at different timeslots are independent. 
In the second setting (section~\ref{section: brownian}), we instead allow rates to be unbounded and \deleted{correlated} \added{dependent} from a timeslot to another. \deleted{Although this case does not comply with the hypothesis of Theorem~\ref{thm:main-theorem}, we will show numerically that the proposed co-investment scheme is still profitable for all players with high probability. }
In both cases, the uncertainty captured in~$l_{i,\omega}^t$ corresponds to randomness, in the sense defined by~\citep[Page 285]{gendreau2023network}, which reflects inherent, statistically characterizable although with possible unknown fluctuations in the process. 
\subsubsection{Traffic load with \added{independent} bounded fluctuations}\label{section:first_modeling}
In this setting, the load satisfies the three desired criteria mentioned above~(Section~\ref{sec:settings}) through a bounded stochastic model with values that are independent across time.
The request rate of an SP determines the number of user requests received per time unit. For example, with a typical 5G mean capacity of 1~Gbps~\citep{xu2020understanding} and an average request size of~$r_s = 100$~Kbits, the mean service rate $(s_r)$ corresponds to approximately 10,000 requests per second. In our setup, we set the mean request rate~$(\alpha_r)$ to 20\% of~$s_r$~\citep{peng2015random}. 

Empirical observations across 
communication networks reveal daily periodic patterns driven by temporal cycles~\citep{akgul2010periodicity} \citep{shi2018discovering}.
We define the expected request rate~$\bar{\lambda}_{i}^t$~\eqref{equation:lambda_bar} of SP~$i$ at time slot~$t$ using the model proposed by large network operators~\citep{vela2016traffic} to capture periodic fluctuations:
\begin{equation}
    \bar{\lambda}_{i}^t = a_{i,0} + \sum_{v=1}^V a_{i,v} \cdot \sin\left(2v\pi \cdot \frac{t - t_{i,v}}{T}\right)
    \label{equation:request_rate}
\end{equation}
where~$a_{i,v}$ and~$t_{i,v}$ represent the amplitude and phase shift of the~$v$th sinusoidal component, respectively. 
These coefficients are chosen such that~$\bar{\lambda}_i^t$ fluctuates around~$\alpha_r$, e.g., $\bar{\lambda}_i^t \in [0.1\alpha_r,\ 1.5\alpha_r]$.

We consider the actual request rate~$\lambda_{i,\omega}^t$ of SP~$i$ at time slot~$t$ in realization~$\omega$ to fluctuate around~$\bar{\lambda}_{i}^t$, i.e.,
\begin{equation}
    \lambda_{i,\omega}^t \in \left[(1 - \sigma)\bar{\lambda}_{i}^t,\; (1 + \sigma)\bar{\lambda}_{i}^t\right]
    \label{equation: bounded request rate}
\end{equation}
where~$\sigma \in [0, 1]$ quantifies the level of uncertainty. 
 
The amount of requests received by SP~$i$ at time slot~$t$ in realization~$\omega$, given by~\eqref{equation: load}, is directly tied to~$\lambda_{i,\omega}^t$ and inherits the same bounded fluctuation~\eqref{equation: bounded request rate}. It satisfies:
\begin{equation}
    l_{i,\omega}^t \in \left[(1 - \sigma)\bar{l}_i^t,\; (1 + \sigma)\bar{l}_i^t\right], \quad \forall \omega \in \Omega,\; t \in \pazocal T,\; i \in \pazocal N
    \label{eq:samebounded}
\end{equation}
where~$\bar{l}_i^t$ is the expected amount of requests~\eqref{equation: expectedload}.

\subsubsection{Traffic load with \deleted{correlated} \added{dependent} fluctuations}\label{section: brownian}
Another formulation of \( l_{i, \omega}^t \) is based on an alternative expression of \( \lambda_{i, \omega}^t \), where request rates follow an unbounded and temporally \deleted{correlated} \added{dependent} stochastic process. This model continues to satisfy the three desired criteria~(Section~\ref{sec:settings}). 
We define the request rate as follows:
\begin{equation}
\lambda_{i,\omega}^t = d_i^t \cdot \left[ (1 - \alpha) \cdot \mathbb{E}_\omega \left[ S_{i, \omega}^t \right] + \alpha \cdot S_{i, \omega}^t \right]
\label{equation: request_rate_general}
\end{equation}
where~$S_{i,\omega}^t$ is a non-negative stochastic process (to ensure that the request rate~$\lambda_{i,\omega}^t$ remains positive), and~$\alpha \in [0, 1]$ serves as a tunable parameter that quantifies the level of uncertainty introduced in the model. The term~$d_i^t$ captures the periodic trend, assumed to be known, and follows a sinusoidal form as:
\begin{equation}
d_i^t = a_{i,0} + \sum_{v=1}^V a_{i,v} \cdot \sin\left(2v\pi\frac{t - t_{i,v}}{T}\right)
\label{equation: sinusoidal}
\end{equation}
where~$a_{i,v}$ and~$t_{i,v}$ represent the amplitude and phase shift of the $v$th components in the sinusoidal function.

Internet traffic is known to exhibit two fundamental characteristics: self similarity and long range dependence~\citep{karagiannis2004long, lee2005stochastic} (``A large number of studies have shown that the network traffic possesses self-similarity and long-term dependence properties''~\citep{bi2021hybrid}).
Self-similarity refers to the persistence of traffic patterns across different time scales, while long-range dependence implies strong temporal \deleted{correlations} \added{dependence} that decays slowly over time. It is important to clarify that long range dependence does not mean a direct, memory like dependency on individual past events. Rather, it reflects persistent patterns and \deleted{correlations} \added{dependence} over long timescales. A common cause is the presence of bursty, high-variance applications such as video streaming, social media, or large file transfers.
For example, when a video goes viral, it can continue to generate significant traffic days or even months after its initial release, as users repeatedly share and engage with it over time.
Another source of long-range dependence is aggregated user behavior. When millions of users follow regular usage patterns such as daily, weekly, or seasonal routines, these behaviors create persistent statistical patterns in the traffic. These patterns appear as long term \deleted{correlations} \added{dependence} in measurements. To capture these elements, we model stochastic process~$S_{i, \omega}^t$ contained in~\eqref{equation: request_rate_general} as:
\begin{equation}
    S_{i, \omega}^t \defeq \max(0, f_{i, \omega}^t)
    \label{equation: stochastic_term_f}
\end{equation}
where~\( f_{i, \omega}^t \) is a fractional Brownian motion (fBm) with parameter~$H$~\citep{mandelbrot1968fractional}, \citep{mishura2008stochastic}. 
This choice is motivated by the fact that
fbm effectively captures the above characteristics of internet traffic, making it a natural tool for modeling it~\citep{dieker2004simulation}. 
The behavior of fBm is governed by the Hurst parameter~\(H \in (0,1)\), which controls the \deleted{correlation} \added{dependence} structure of the process. When \(H > 0.5\), the process exhibits long-range dependence and positive \deleted{correlation} \added{dependence} between increments. When \(H = 0.5\), fBm reduces to classical Brownian motion with independent increments. When \(H < 0.5\), the process exhibits anti-persistence, meaning that an increase in the process is more likely to be followed by a decrease, and vice versa, indicating a negative \deleted{correlation} \added{dependence} between increments.
The $\max(\cdot)$ in~\eqref{equation: stochastic_term_f} ensures that \(S_{i,\omega}^t\) is always non-negative and thus remains consistent with its interpretation as a request rate component, while preserving the essential temporal structure captured by fBm.

The following Lemma provides a closed-form expression for~$\bar{l}^t_i$~\eqref{equation: expectedload}, which is essential for deriving 
the amount of resources to be installed and their dynamic allocation among SPs~(Assumption~\ref{assumption: optimal}).

\begin{lemma}
The expected amount of requests~\eqref{equation: expectedload} received by SP~$i$ at time slot~$t$ is given by:

\begin{equation}
\bar{l}_{i}^t = d_i^t \cdot \frac{t^H}{\sqrt{2\pi}} \cdot \Delta
\label{equation: expected_load_explicit}
\end{equation}
\label{lemma: mean_s}
\end{lemma}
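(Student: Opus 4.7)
The plan is to unfold the definition of $\bar l_i^t$ layer by layer, pushing the expectation through the linear combinations until the only remaining stochastic object is the raw fBm $f_{i,\omega}^t$, and then apply a standard Gaussian computation.

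First, starting from $\bar l_i^t \defeq \mathbb{E}_\omega[l_{i,\omega}^t]$ and using $l_{i,\omega}^t = \lambda_{i,\omega}^t \cdot \Delta$ from~\eqref{equation: load}, linearity of expectation gives $\bar l_i^t = \bar\lambda_i^t \cdot \Delta$, consistent with~\eqref{equation: expectedload}. Next, I substitute the expression~\eqref{equation: request_rate_general} for $\lambda_{i,\omega}^t$ and take expectation: since $d_i^t$ is deterministic and $\mathbb{E}_\omega[\mathbb{E}_\omega[S_{i,\omega}^t]] = \mathbb{E}_\omega[S_{i,\omega}^t]$, the convex combination collapses to
\begin{equation*}
\bar\lambda_i^t = d_i^t \cdot \bigl((1-\alpha)\,\mathbb{E}_\omega[S_{i,\omega}^t] + \alpha\, \mathbb{E}_\omega[S_{i,\omega}^t]\bigr) = d_i^t \cdot \mathbb{E}_\omega[S_{i,\omega}^t].
\end{equation*}
The parameter $\alpha$ thus drops out of the mean, which is the expected behavior for a tunable noise parameter.

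The heart of the proof, and the only nontrivial step, is computing $\mathbb{E}_\omega[S_{i,\omega}^t] = \mathbb{E}_\omega[\max(0, f_{i,\omega}^t)]$ from~\eqref{equation: stochastic_term_f}. Here I invoke the standard fact that a fractional Brownian motion with Hurst parameter $H$, started at $0$, has marginal distribution $f_{i,\omega}^t \sim \mathcal{N}(0, t^{2H})$~\citep{mandelbrot1968fractional}. For any $X \sim \mathcal{N}(0,\sigma^2)$, the half-normal identity gives $\mathbb{E}[\max(0,X)] = \sigma/\sqrt{2\pi}$, which is a short direct integration of $x\varphi(x)$ over the positive half-line. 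Applied with $\sigma = t^H$, this yields $\mathbb{E}_\omega[S_{i,\omega}^t] = t^H/\sqrt{2\pi}$.

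Combining the three steps gives $\bar l_i^t = d_i^t \cdot (t^H/\sqrt{2\pi}) \cdot \Delta$, which is precisely~\eqref{equation: expected_load_explicit}. The main subtlety — and the only place where care is needed — is ensuring the marginal law of fBm at time $t$ is correctly identified as $\mathcal{N}(0, t^{2H})$ rather than some rescaled version, since this is where the factor $t^H$ (as opposed to, e.g., $\sqrt{t}$ for ordinary Brownian motion) enters; everything else reduces to linearity of expectation and the half-normal mean formula.
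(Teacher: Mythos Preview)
Your proof is correct and follows essentially the same approach as the paper's: both unfold the definitions to reduce $\bar{l}_i^t$ to $d_i^t \cdot \mathbb{E}_\omega[S_{i,\omega}^t] \cdot \Delta$, identify the marginal of the fBm as $\mathcal{N}(0, t^{2H})$, and then compute $\mathbb{E}[\max(0, f_{i,\omega}^t)] = t^H/\sqrt{2\pi}$. The only cosmetic difference is that the paper carries out the Gaussian integral explicitly via the substitution $t = x^2$, whereas you invoke the half-normal mean formula directly.
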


The proof of Lemma~\ref{lemma: mean_s} can be found in Appendix~\ref{appendix: mean}.

\subsection{Characterization of the resource sharing} \label{section: KKT}
In our model, we consider an \textit{InP} and several SPs.\deleted{ The following proposition allows computing resource sharing}
\added{We first derive a general characterization of the resource-sharing mechanism}, according to Assumption~\ref{assumption: optimal}, \added{which applies to resource types~$k$ and nodes~$m$. We then show that, under a simplified setting, resource sharing can be computed} in closed form (see Proposition~\ref{prop:convexity-of-problem}). \deleted{Its} \added{The} proof\added{s} (in Appendices~\ref{appendix:multi} and \ref{appendix: kkt}) rel\added{y}\deleted{ies} on the convexity of \added{the optimization problem}~\eqref{equation: optimal solution}\added{--}\eqref{eq:constraint-positive}.
The notation below is compliant with \tablename~\ref{tab:notations} and \tablename~\ref{tab:params}. 

\added{We first define the expected transformed load as
\begin{equation}
    \bar f_i^t \;\defeq\; \mathbb E_\omega \big[\, f(l_{i,\omega}^t) \,\big]
\end{equation}
and we then state the following two propositions:} 
\added{
\begin{proposition}
\label{prop:multi}
Consider the setting with $K$ resource types and $M$ nodes, where the
coalition installs a capacity matrix 
$\mathbf C \in \mathbb R_+^{K\times M}$ and where each SP receives an
allocation matrix $\mathbf h_i^t \in \mathbb R_+^{K\times M}$ at every time
slot $t\in\pazocal T$.
From Proposition~\ref{prop:convexity-of-problem},
problem~\eqref{equation: optimal solution}--\eqref{eq:constraint-positive} admits a unique optimal solution
$(\mathbf C_{\pazocal S}^*,\, \{\mathbf h_{i,\pazocal S}^{t*}\}_{t\in\pazocal T})$.
This solution is fully characterized by the Karush--Kuhn--Tucker (KKT)
conditions associated with the multidimensional Lagrangian.
\end{proposition}
For simplicity, we now focus on the special case of a single resource type and a single node (i.e., $K=1$ and $M=1$). In this case, problem~\eqref{equation: optimal solution}--\eqref{eq:constraint-positive} admits closed-form expressions for the installed capacity $C$, which becomes a scalar, and for the allocation, which becomes the set of vectors $\{\vec{h}\}_{t \in \pazocal T} = \{\{h_{i}^{t}\}_{i \in \pazocal{S}}\}_{t \in \pazocal T}$. For the sake of clarity, we set $d = d_{1,1}$, $D(t) = D_{1,1}(t)$, $\xi = \xi_{1,1}$, $C = C_{1,1}$, and $h_i^t = h_{i,1,1}^t$.}

\begin{proposition}
Given a coalition $\pazocal S$, let $\pazocal S' \subseteq \pazocal S \setminus \{\textit{InP}\}$ denote the subset of SPs with strictly positive load, i.e., those $i$ for which $\deleted{\bar{l}}\added{\bar{f}}_i^t > 0$ for all $t \in \pazocal T$. The resources installed and their allocation are as follows:
\begin{itemize}
    \item Case 1:  \\
    If $\pazocal S' \neq \emptyset$, the total installed capacity is given by:
    \begin{equation}
        C_{\pazocal S}^* = \frac{|\pazocal S'|}{\xi} \ln \left( \frac{\sum_{t \in \pazocal{T}} \left( \prod_{i \in \pazocal{S'} } \xi \cdot \beta_i \cdot \deleted{\bar{l}}\added{\bar{f}}_i^t \right)^{\frac{1}{|\pazocal S'|}}}{d + \deleted{d' \cdot I}\added{\int_0^I D(t)\, dt}} \right)
    \end{equation}
    and the share of the capacity~$C_{\pazocal S}^*$ allocated to each SP~$i \in \pazocal S'$  is:
    \begin{equation}
        h_{i, \pazocal S}^{t*} = \frac{C_{\pazocal S}^*}{|\pazocal S'|} + \frac{1}{\xi} \ln \left( \frac{\beta_i \cdot \deleted{\bar{l}}\added{\bar{f}}_i^t}{\left( \prod_{j \in \pazocal{S'}} \beta_j \cdot \deleted{\bar{l}}\added{\bar{f}}_j^t \right)^{\frac{1}{|\pazocal S'|}}} \right)
        \label{eqaution:share_h_active}
    \end{equation}
    \item Case 2:  \\
    If $\pazocal S' = \emptyset$ (i.e., all SPs have zero load at all $t \in \pazocal{T}$), then no capacity is installed and no resources are allocated:
    \begin{equation}
        C_{\pazocal{S}}^* = 0, \quad h_{i, \pazocal{S}}^{t*} = 0,\; \forall\, i \in \pazocal{S},\; t \in \pazocal{T}
        \label{equation:share_zero}
    \end{equation}
\end{itemize}
\label{Proposition: kkt}
\end{proposition}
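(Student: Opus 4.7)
The plan is to reduce the proof to a direct application of the Karush-Kuhn-Tucker (KKT) conditions. By Proposition~\ref{prop:convexity-of-problem}, problem~\eqref{equation: optimal solution}-\eqref{eq:constraint-positive} is convex (in fact, strictly concave in $h_i^t$ thanks to the exponential utilities, and linear in $C$ once the constraints couple them), so the KKT conditions are both necessary and sufficient. I would split the analysis along the two cases of the statement.

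For \textbf{Case 2} ($\pazocal S' = \emptyset$), every SP in the coalition has zero expected load at every time slot, so by~\eqref{equation: utility function}--\eqref{equation: expectedload}, $\bar u_i^t(h_i^t)=0$ for all feasible allocations. The objective in~\eqref{equation: optimal solution} collapses to $-\textit{Cost}(I,C) = -(d+d'I)\,C$, which is strictly decreasing in $C\ge 0$ and is therefore maximized at $C_\pazocal{S}^*=0$; the capacity constraint~\eqref{eq:constraint-sum} then forces $h_{i,\pazocal S}^{t*}=0$, matching~\eqref{equation:share_zero}.

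For \textbf{Case 1} ($\pazocal S'\neq\emptyset$), I would write the Lagrangian with multipliers $\mu^t\ge 0$ attached to~\eqref{eq:constraint-sum} and $\nu_i^t\ge 0$ attached to~\eqref{eq:constraint-positive}, and impose stationarity in $h_i^t$ and in $C$, together with complementary slackness. For $i\in\pazocal S'$, the marginal utility at $h_i^t=0$ is $\xi\beta_i\bar l_i^t>0$, so the non-negativity constraint is inactive ($\nu_i^t=0$) and stationarity yields
\begin{equation*}
\xi\beta_i\bar l_i^t\,e^{-\xi h_i^{t*}} = \mu^t \quad\Longrightarrow\quad h_i^{t*} = \tfrac{1}{\xi}\ln\!\bigl(\xi\beta_i\bar l_i^t/\mu^t\bigr).
\end{equation*}
Since increasing $C$ relaxes~\eqref{eq:constraint-sum} at linear cost $(d+d'I)$ while strictly increasing the objective when some $\mu^t>0$, at the optimum the capacity constraint is active at every $t$, i.e.\ $\sum_{i\in\pazocal S'} h_i^{t*} = C_\pazocal{S}^*$. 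Summing the closed-form expression for $h_i^{t*}$ over $i\in\pazocal S'$ and solving for $\mu^t$ gives $\mu^t = \bigl(\prod_{i\in\pazocal S'}\xi\beta_i\bar l_i^t\bigr)^{1/|\pazocal S'|}\,e^{-\xi C_\pazocal{S}^*/|\pazocal S'|}$. Plugging this into the $C$-stationarity condition $\sum_t\mu^t = d+d'I$ (which comes from $\partial L/\partial C=0$) and solving for $C_\pazocal{S}^*$ yields the stated expression; substituting back into $h_i^{t*}$ and simplifying (the $\xi$ factors inside the geometric mean cancel) gives~\eqref{eqaution:share_h_active}.

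The main obstacle is the \emph{justification of the active sets}: namely, that (i)~$h_{i,\pazocal S}^{t*}>0$ for every $i\in\pazocal S'$, (ii)~the capacity constraint is tight at every $t$, and (iii)~when the InP is absent from $\pazocal S$ but $\pazocal S'\neq\emptyset$, the closed-form Case~1 expression must be reconciled with Proposition~\ref{prop:allocation}, which forces $C_\pazocal S^*=0$. Point~(i) follows because the formula for $h_{i,\pazocal S}^{t*}$ is a sum $\tfrac{C_\pazocal S^*}{|\pazocal S'|} + \tfrac{1}{\xi}\ln(\cdot)$ whose logarithmic term is the deviation of $\beta_i\bar l_i^t$ from the geometric mean across $\pazocal S'$, and if any $h_{i}^{t*}$ computed this way were non-positive one could re-solve the KKT system on the reduced active set; for the parameter regimes considered in Section~\ref{Section: ana-results} this situation does not arise. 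Point~(ii) is established by a perturbation argument: were $\sum_{i}h_i^{t*}<C_\pazocal{S}^*$, one could reduce $C$ strictly, lowering the cost without affecting utilities, contradicting optimality. Point~(iii) is already covered upstream by Proposition~\ref{prop:allocation}, so the KKT derivation is invoked only under the implicit assumption that $\textit{InP}\in\pazocal S$; when $\textit{InP}\notin\pazocal S$, Proposition~\ref{prop:allocation} applies instead and the conclusion of Case~2 holds trivially.
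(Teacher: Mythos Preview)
Your approach is essentially identical to the paper's: both set up the Lagrangian for~\eqref{equation: optimal solution}--\eqref{eq:constraint-positive}, take stationarity in $h_i^t$ to get $h_i^{t*}=\tfrac{1}{\xi}\ln(\xi\beta_i\bar l_i^t/\mu^t)$, impose tightness of~\eqref{eq:constraint-sum} to extract $\mu^t$ as a geometric mean times $e^{-\xi C/|\pazocal S'|}$, and then use $\sum_t\mu^t=d+d'I$ to solve for $C_\pazocal S^*$; Case~2 is handled by the same zero-utility, cost-monotonicity argument. If anything, your write-up is slightly more careful than the paper's, which silently drops the non-negativity multipliers and does not justify the tightness of the capacity constraint; your perturbation argument for point~(ii) and the remark on reconciling with Proposition~\ref{prop:allocation} are welcome additions that the paper leaves implicit.
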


\added{From Proposition~\ref{Proposition: kkt}, the optimal $C_{\pazocal N}^*$ and its shares depend only on the total cost term $d \cdot C + C \int_0^I D(t)\, dt$, i.e., on the scalar \(
d + \int_0^I D(t)\,dt
\), regardless of the specific shape of $D(t)$. In other words, decisions are influenced solely by the total cost over the entire investment period, whatever $D(t)$. Consequently, recomputing all results of Section~\ref{Section: ana-results} with different shapes of $D(t)$ always leads to identical outcomes. In particular, any two functions $D(t)$ that yield the same integral $\int_0^I D(t)\,dt$ lead to identical investment and sharing decisions. For this reason, we assume a constant unit maintenance cost $D(t)=d'$ in the numerical results.}

We now show the following proposition \added{(proved in Appendix~\ref{appendix:h_increase_with_b_l})}:
\begin{proposition}\label{Proposition:h_increase}
From Proposition~\ref{Proposition: kkt}, the resources~\( h_{i, \pazocal{S}}^{t*}, \forall t \in \pazocal T\) allocated to SP~$i$ are strictly increasing with respect to its expected load \( \deleted{\bar{l}}\added{\bar{f}}_i^t \) and its revenue per unit of load~\( \beta_i \), for all cases where $\deleted{\bar{l}}\added{\bar{f}}_i^t > 0$ and $\beta_i > 0$.
\end{proposition}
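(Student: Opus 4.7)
The plan is to establish strict monotonicity directly from the closed-form expressions in Proposition~\ref{Proposition: kkt}, by decomposing $h_{i,\pazocal S}^{t*}$ into two parts whose dependence on $\bar{l}_i^t$ and on $\beta_i$ can be analyzed separately, and then showing that the sum of the partial derivatives is strictly positive. Only Case~1 needs to be treated since Case~2 has $h_{i,\pazocal S}^{t*}\equiv 0$ and is excluded by the hypothesis $\bar{l}_i^t>0$.

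First, I would use the identity $\ln\bigl((\prod_{j\in\pazocal S'}\beta_j\bar{l}_j^t)^{1/|\pazocal S'|}\bigr)=\frac{1}{|\pazocal S'|}\sum_{j\in\pazocal S'}\ln(\beta_j\bar{l}_j^t)$ to rewrite~\eqref{eqaution:share_h_active} as
\begin{equation*}
h_{i,\pazocal S}^{t*}
=\frac{C_{\pazocal S}^{*}}{|\pazocal S'|}
+\frac{1}{\xi}\ln\bigl(\beta_i\bar{l}_i^t\bigr)
-\frac{1}{\xi|\pazocal S'|}\sum_{j\in\pazocal S'}\ln\bigl(\beta_j\bar{l}_j^t\bigr).
\end{equation*}
This separates the direct dependence on $(\beta_i,\bar{l}_i^t)$ (second term minus its average) from the indirect dependence through the installed capacity $C_{\pazocal S}^*$. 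Setting $A_t\defeq\bigl(\prod_{j\in\pazocal S'}\xi\beta_j\bar{l}_j^t\bigr)^{1/|\pazocal S'|}$, I would then differentiate under the logarithm defining $C_{\pazocal S}^{*}$, using $\partial\ln A_t/\partial\bar{l}_i^t=1/(|\pazocal S'|\bar{l}_i^t)$ and, for the $\beta_i$ derivative, the analogous identity $\partial\ln A_{t'}/\partial\beta_i=1/(|\pazocal S'|\beta_i)$ valid for every $t'\in\pazocal T$.

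Second, I would assemble the two partial derivatives. For $\bar{l}_i^t$, the direct term contributes $\tfrac{|\pazocal S'|-1}{\xi|\pazocal S'|\bar{l}_i^t}\geq 0$ and the capacity term contributes $\tfrac{1}{\xi|\pazocal S'|}\cdot\tfrac{A_t}{\bar{l}_i^t\sum_{t'\in\pazocal T}A_{t'}}>0$, so their sum is strictly positive. For $\beta_i$, the direct term contributes $\tfrac{|\pazocal S'|-1}{\xi|\pazocal S'|\beta_i}\geq 0$ while the capacity term contributes $\tfrac{1}{\xi|\pazocal S'|\beta_i}\cdot\tfrac{\sum_{t'}A_{t'}}{\sum_{t'}A_{t'}}=\tfrac{1}{\xi|\pazocal S'|\beta_i}>0$, summing to $\tfrac{1}{\xi\beta_i}>0$. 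Both derivatives are therefore strictly positive on $\{\bar{l}_i^t>0,\beta_i>0\}$, which is the claim.

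The main obstacle I anticipate is not technical but conceptual: the correction term $\tfrac{1}{\xi}\ln(\beta_i\bar{l}_i^t)-\tfrac{1}{\xi|\pazocal S'|}\sum_{j}\ln(\beta_j\bar{l}_j^t)$ has a derivative that is only non-negative, vanishing when $|\pazocal S'|=1$, so strict monotonicity must be recovered from the capacity term. I would therefore emphasize that $C_{\pazocal S}^{*}$ is itself strictly increasing in $\bar{l}_i^t$ and in $\beta_i$, because $A_t$ and $\sum_{t'}A_{t'}$ are strictly increasing in every $\beta_j\bar{l}_j^{t'}$ with $j\in\pazocal S'$, and this remains true even in the degenerate case $|\pazocal S'|=1$, where $h_{i,\pazocal S}^{t*}=C_{\pazocal S}^{*}$ reduces to $\tfrac{1}{\xi}\ln\bigl(\sum_{t'}\xi\beta_i\bar{l}_i^{t'}/(d+d'I)\bigr)$ and monotonicity is immediate.
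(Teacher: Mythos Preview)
Your proposal is correct and follows essentially the same route as the paper: rewrite~\eqref{eqaution:share_h_active} so that the logarithm of the geometric mean becomes an arithmetic mean of logs, then differentiate. The only difference is that you carry the dependence of $C_{\pazocal S}^{*}$ on $(\beta_i,\bar{l}_i^t)$ throughout, whereas the paper's proof in Appendix~\ref{appendix:h_increase_with_b_l} treats $C_{\pazocal S}^{*}$ as fixed when $|\pazocal S'|>1$ (obtaining $\tfrac{1}{\xi}(1-\tfrac{1}{|\pazocal S'|})\tfrac{1}{\beta_i}$ and the analogous expression for $\bar{l}_i^t$) and only reintroduces that dependence in the separate special case $|\pazocal S'|=1$. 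Your unified treatment is slightly more rigorous---the omitted capacity term is positive, so the paper's conclusion is unaffected, but your argument makes clear why strict monotonicity survives even when the correction term vanishes.
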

\deleted{The proof is provided in Appendix~\ref{appendix:h_increase_with_b_l}.}
\added{Note that Proposition \ref{Proposition:h_increase} states that SPs with higher traffic or more profitable service $\beta_i$ receive more resources.}

\added{For the sake of clarity and without loss of generality, our numerical results focus on one type of resource, namely virtual CPU (K = 1) and a single edge node (M = 1), so that the installed capacity $C_{\pazocal S}^*$ and the allocations $h_{i,\pazocal S}^{t*}$ are scalars obtained from Proposition~\ref{Proposition: kkt}. This simplification allows us to isolate and highlight the co-investment behavior without the added complexity of multi-node or multi-resource management.}

%%%%%%%%%%%%%%%%%%%%%%%%%%%%%%
%%%%%%%%%Evaluation%%%%%%%%%%%
%%%%%%%%%%%%%%%%%%%%%%%%%%%%%%

We now evaluate our model. We test a thorough set of scenarios with different levels of uncertainty, different numbers of SPs, and different profiles of SPs. Our settings are thus hypothetical but comprehensive. Moreover, great care is given in setting parameters according to real-world scenarios as detailed in \tablename~\ref{tab:params}.

\begin{table*}
    \centering
    \caption{Settings}
    \label{tab:params}
    \resizebox{\textwidth}{!}{ 
    \begin{tabular}{p{0.44\textwidth} p{0.44\textwidth} p{0.44\textwidth}}
    \toprule
    Functions & Symbols & Values \\
    \midrule
    Investment period  & $I=|\pazocal{T}| \cdot \Delta$ & 5 years \\
    Price per vcore
    \eqref{equation: cost}
    & $d$ & 10.94 $\$/\text{vcore}$~\citep{azureStackEdgePricing} \\
    Price per hour for one virtual CPU (vcore)
    \eqref{equation: cost}
    & $d'$ & $0.0225 \, \$/\text{hour}/\text{vcore}$~\citep{azureStackEdgePricing} \\
    Benefit factor
    \eqref{equation: utility function} 
    & $\beta_i=\beta$ & $6 \times 10^{-6} \, \$/\text{req}$ \citep{AWSLambdaPricing} \\
    Diminishing return 
    \eqref{equation: utility function} 
    & $\xi$ & 0.03 \\
    Time slot length 
    \eqref{equation: load}
    & $\Delta$ & 1 hour \\
    Number of time slots per day (periodicity of load)
    \eqref{equation:request_rate}
    & $T$ & 24 time slots \\
    Parameter controlling the level of uncertainty in the \deleted{uncorrelated} \added{independent} load~\eqref{equation: bounded request rate} 
    & $\sigma$ & $[0, 1]$ \\
    Parameter controlling the level of uncertainty in the \deleted{correlated} \added{dependent} load~\eqref{equation: request_rate_general} 
    & $\alpha$ & $[0, 1]$ \\
    Hurst parameter \eqref{equation: expected_load_explicit} & $H$ & 0.7 \\
    \bottomrule
    \end{tabular}
    }
\end{table*}

\subsection{Evaluation of the traffic load with \added{independent} bounded fluctuations}

We start by evaluating the first traffic model introduced in Section~\ref{section:first_modeling}.
\subsubsection{Illustration of traffic load realizations}
\figurename~\ref{fig:stochastic_icc} illustrates the traffic load corresponding to one SP over a five-day period, where the x-axis represents time in days and the y-axis denotes the load in requests per second. The red solid line represents the expected load, which follows a clear periodic pattern, reflecting cyclic daily variations in traffic demand. The dashed lines correspond to multiple stochastic load realizations, which exhibit significant variability around the expected load. In some instances, the stochastic realizations reach the double of the expected load, while in others, they drop close to zero, highlighting the unpredictable nature of traffic fluctuations. Additionally, the stochastic realizations are uncorrelated, meaning that each trajectory evolves independently without following a common trend beyond the underlying periodicity. This variability suggests the presence of a compensation effect, where high and low deviations cancel each other out over time, leading to an overall balance despite short-term fluctuations. 
\begin{figure}
    \centering
    \includegraphics[width=0.8\linewidth]{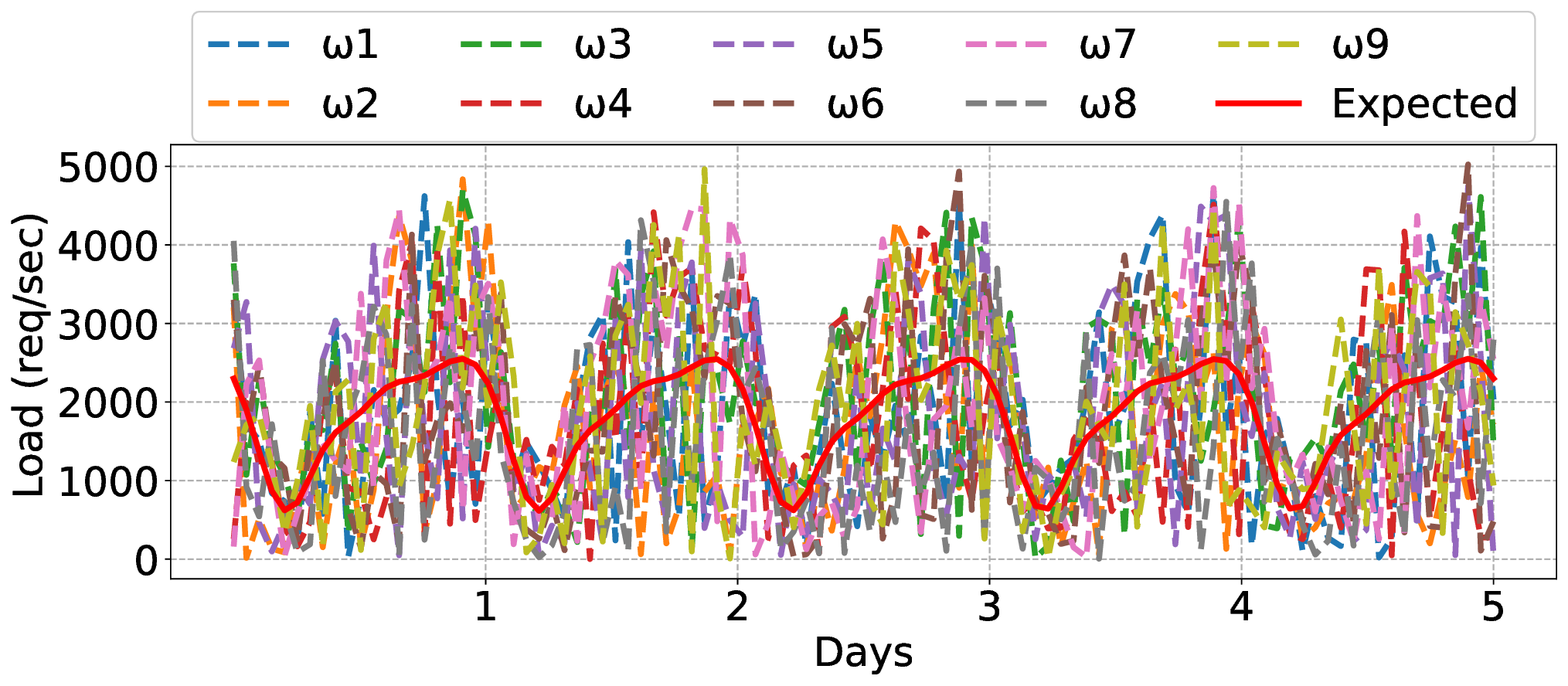}
    \caption{Traffic load \added{of SP1} with bounded fluctuations over five days within the investment period}
    \label{fig:stochastic_icc}
\end{figure}
\subsubsection{Dynamic resource sharing} 
We first consider a coalition~$\pazocal N$ including one \textit{InP} and two SPs. \figurename~\ref{fig:load_and_share} illustrates the expected traffic load~(\ref{fig:expecte-load}) and the corresponding dynamic resource allocation~(\ref{fig:share}) over a one-day period within the investment period. 
As shown in \figurename~\ref{fig:expecte-load}, the traffic load patterns of SP1 and SP2 reflect the distinct characteristics of the areas they serve. 
SP1 exhibits a low traffic load during the early morning hours, followed by a steady increase throughout the day, peaking in the evening. This pattern is typical of residential areas, where usage ramps up as people return home and engage in online activities like streaming and gaming. 
In contrast, SP2 shows lower traffic in the early morning, followed by a rapid rise during the daytime, peaking in the afternoon and declining after business hours. This trend resembles business or commercial usage patterns, where activity is driven by daytime operations.
To accommodate these variations, the resource allocation strategy dynamically adjusts to them, as shown in \figurename~\ref{fig:share}. During high-demand periods for SP1, a greater share of resources is allocated to it, whereas SP2 receives a larger share when SP1's demand decreases. This dynamic approach ensures optimal utilization of available resources, improving efficiency and performance for both SPs.
\begin{figure}[h!]
    \centering
    \begin{subfigure}{0.49\linewidth}
        \centering
        \includegraphics[width=\linewidth]{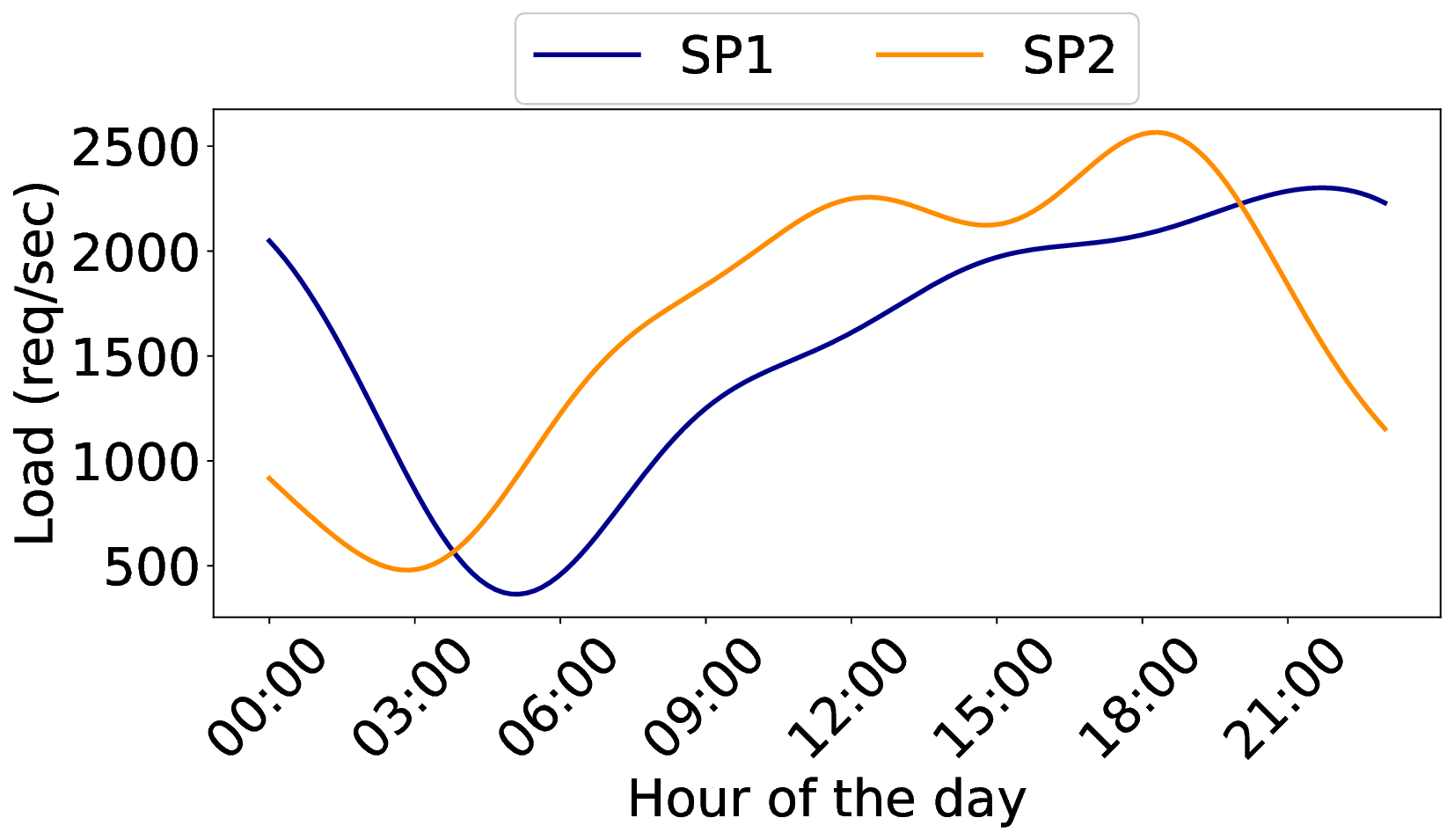}
        \caption{Expected \added{traffic} load \added{of SP1 and SP2} over one day \added{, highlighting residential and business usage patterns}}
        \label{fig:expecte-load}
    \end{subfigure}
    \hfill
    \begin{subfigure}{0.49\linewidth}
        \centering
        \includegraphics[width=\linewidth]{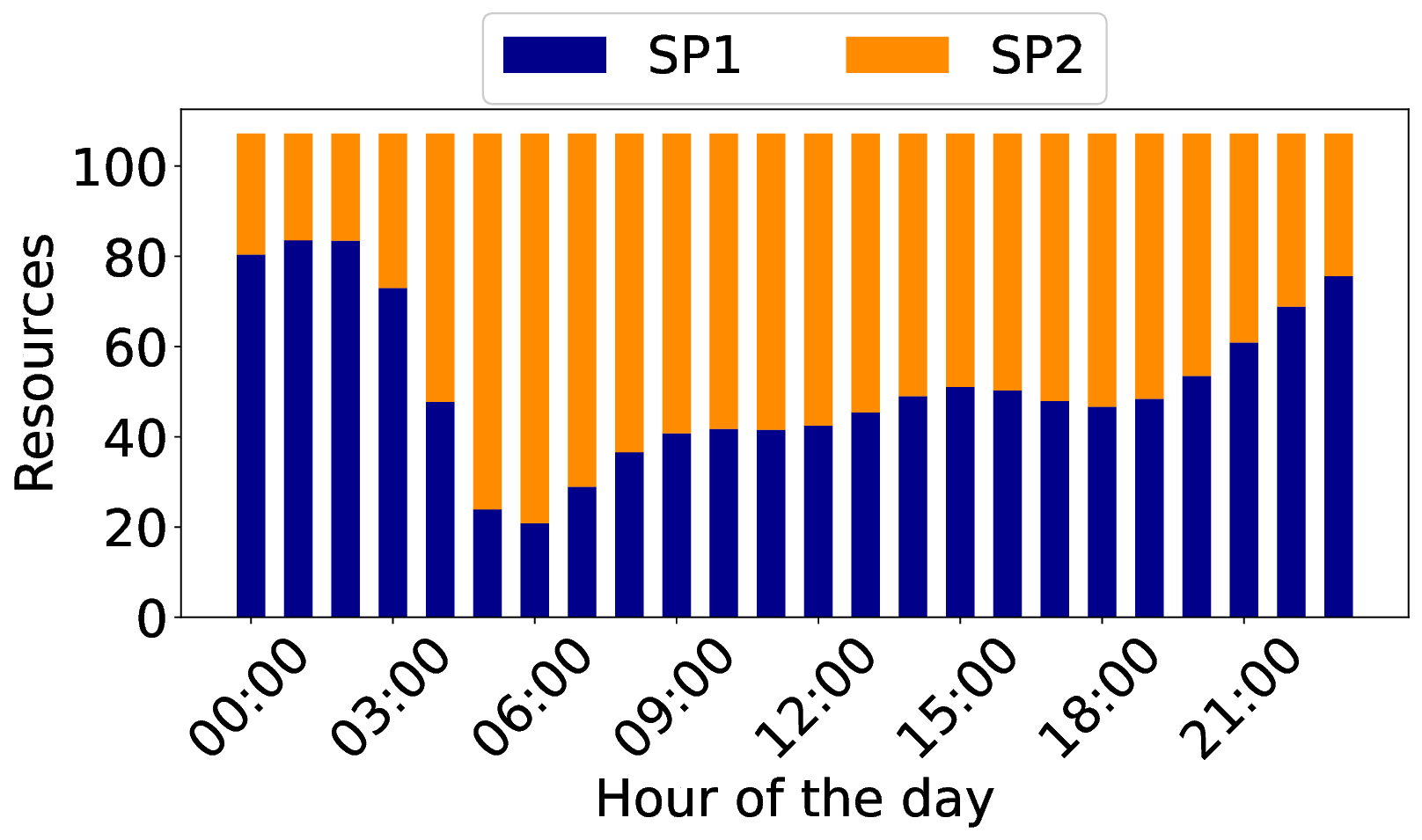}
        \caption{\added{Dynamic} \deleted{S}\added{s}hare of \added{infrastructure} resources \added{ over one day, showing how resource shares adapt to demand}}
        \label{fig:share}
    \end{subfigure}
    \caption{Expected traffic load and the corresponding resource allocation \added{in a coalition with one \textit{InP} and two SPs} for a typical day during the investment period}
    \label{fig:load_and_share}
\end{figure}

\subsubsection{Impact of coalition size and player heterogeneity on stability}
We analyze different coalitions~$\pazocal N$, each including one \textit{InP} along with 2, 3, 4, or 5 SPs. Among the SPs, SP1, SP2, and SP3 have comparable traffic loads, SP4 has high traffic load, and SP5 has low traffic load.  
\figurename~\ref{fig:lower_bound} presents the lower bound $\nu^{\text{LB}}$ on the probability that the grand coalition is stable (Theorem~\ref{thm:main-theorem}), across different levels of uncertainty~$\sigma$.
In the extreme case, $\sigma=1$ means that the uncertainty in the load is equal to 100\% of its expected value, indicating the maximum range of unknown requests, i.e., $[0, 2 \bar{l}_i^t]$~\eqref{eq:samebounded}. 
We observe that coalitions composed of SPs with comparable traffic loads, such as SP1, SP2, SP3, and even SP4, achieve a lower bound equal to 1, indicating stable cooperation across all uncertainty levels~$\sigma$. Expanding the coalition always increases the total coalition payoff. When the added SPs have comparable or higher traffic levels, this increase is sufficient to ensure that all players remain in the grand coalition, resulting in a lower bound of 1. However, when a low traffic load SP, like SP5, is included, as in coalition $\{1,2,3,5\}$, the lower bound on stability decreases significantly, and drops to zero beyond moderate values of~$\sigma$ (e.g., when $\sigma=0.5$). This suggests that the coalition becomes less attractive for at least some players. Although some stability may still be achievable under low uncertainty, the disparity in traffic patterns reduces the incentive of players to form and stay in the grand coalition and thus co-invest.
This effect is most pronounced in the full coalition~$\{1,2,3,4,5\}$, where the lower bound remains zero across all uncertainty levels~$\sigma$, except when $\sigma = 0.001$. Despite the potential for a higher coalition payoff in larger coalitions, the inclusion of highly heterogeneous SPs, especially those with low traffic load, leads to instability. When individual profiles are not aligned due to unequal contribution or risk exposure, players are unlikely to form the coalition.
\begin{figure}[h!]
    \centering
    \includegraphics[width=0.6\linewidth]{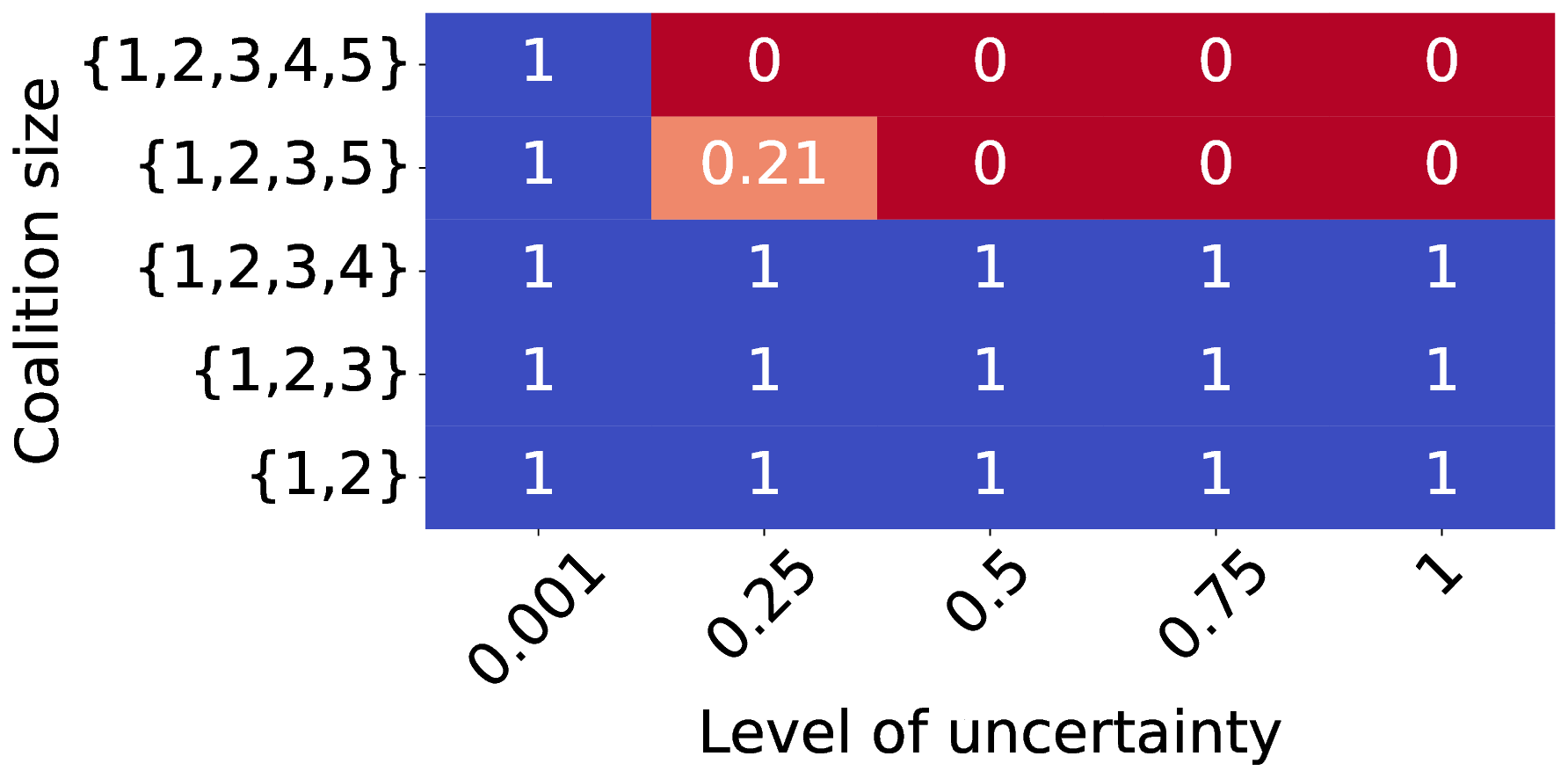}
    \caption{Lower bound on the probability that the grand coalition is stable under
different levels of uncertainty~$\sigma$ and varying coalition sizes}
    \label{fig:lower_bound}
\end{figure}

We now consider the coalition~$\pazocal N$ formed by one \textit{InP} and the set~$\{1,2,3,4,5\}$ of SPs. \tablename~\ref{tab:com_lower} shows the individual lower-bound probabilities $\mathbb{P}_i^{\text{LB}}$, $\forall i \in \pazocal{N}$~\eqref{equation: lb}, which represent the probability that each player has an incentive to form and stay in the grand coalition~$\pazocal N$ across different values of~$\sigma$. It also shows~$\nu^{\text{LB}}$~(Theorem~\ref{thm:main-theorem}), the lower-bound probability for the stability of the grand coalition~$\pazocal{N}$. We observe from the table that SP5, the smallest player with the lowest traffic load, is highly motivated to join the coalition, as it benefits from partnering with bigger players regardless of uncertainty. In contrast, SP4, the biggest player with the highest traffic load, is more hesitant and has an incentive to remain in the grand coalition only when uncertainty is negligible~$(\sigma=0.001)$, highlighting its risk-averse nature. SP1, SP2, and SP3, which have comparable medium traffic loads, exhibit similar behaviors: their lower-bound probability of stability drops rapidly, reaching zero shortly after $\sigma = 0.25$. 

\added{From~\eqref{equation: lower bound}, the stability probability lower bound increases when a player has a smaller range of utility fluctuations, as defined in~\eqref{eq:delta-u}. Since SP5 is a low-load SP, its revenue is relatively small, and therefore its utility $\text{range}(u_i^t)$ remains small (relative to the ranges of the others) even when the uncertainty level $\sigma$ is high. A smaller range $\text{range}(u_i^t)$  reduces the denominator inside the exponential term of~\eqref{equation: lower bound}, which makes the expression $1 - 2\exp(\cdot)$ closer to one. As a result, SP5 achieves a higher individual stability probability than high-load SPs, whose larger revenue variations lead to larger ranges and hence smaller probability values.}

Note that $\mathbb{P}_{\text{InP}}^{\text{LB}} = 1$ for all values of~$\sigma$. This indicates that the \textit{InP} always benefits from the co-investment, since the coalition cannot form without it (as established in Proposition~\ref{proposition:pointless}).

\begin{table*}[h!]
    \centering
    \caption{Lower-bound probabilities for individual players and the grand coalition~$\pazocal{N}$ across different levels of uncertainty~$\sigma$}
    \label{tab:com_lower}
    \resizebox{0.6\textwidth}{!}{
    \begin{tabular}{ccccccc}
        \toprule
        $\sigma$ & \multicolumn{6}{c}{Lower bound probability of stability} \\
        \cmidrule(lr){2-7}
        & \(\mathbb{P}^{\text{LB}}_{\text{SP1}}\) & \(\mathbb{P}^{\text{LB}}_{\text{SP2}}\) & \(\mathbb{P}^{\text{LB}}_{\text{SP3}}\) & \(\mathbb{P}^{\text{LB}}_{\text{SP4}}\) & \(\mathbb{P}^{\text{LB}}_{\text{SP5}}\) & \( \nu^{\text{LB}} \) \\
        \midrule
        0.001 & 1    & 1    & 1    & 1    & 1    & 1 \\
        0.25  & 0.31 & 0.20 & 0.30 & 0    & 1    & 0 \\
        0.5   & 0    & 0    & 0    & 0    & 0.99 & 0 \\
        \bottomrule
    \end{tabular}}
\end{table*}

\added{\subsubsection{Profitability of the co-investment under independent and bounded fluctuations}
As shown above (see Fig.~\ref{fig:lower_bound}, Table~\ref{tab:com_lower}), both the individual probabilities and the lower bound on the probability of stability of the grand coalition decrease as uncertainty increases when the SPs forming the coalition are not of similar size (e.g., SP4 and SP5). However, co-investment opportunities still exist in this case, even when the stability lower bound $\nu^{\text{LB}}$ vanishes, because the lower bound on the probability of profitable co-investment, denoted $\nu_{\pazocal N}^{\text{LB}}$, remains high and is found to be close to~$1$. Table~\ref{tab:lower_icc_profit} reports this lower bound together with the empirical probabilities, computed over $100$ realizations, that each player (\textit{InP} and SP1--SP5) achieves a non-negative payoff. These empirical probabilities are equal to~$1$ for all players, even at the maximum level of uncertainty $\sigma = 1$. This empirical evidence confirms the tightness of the lower bound derived in Theorem~\ref{thm:prof}.}
\begin{table*}[t]
\centering
\caption{\added{Lower-bound and empirical probabilities of profitable co-investment for individual players and for the coalition~$\pazocal N$ at the maximum level of uncertainty, $\sigma = 1$}}
\label{tab:lower_icc_profit}
\resizebox{0.82\textwidth}{!}{
\begin{tabular}{c|c|ccccccc}
\toprule
$\sigma$ 
& \multicolumn{1}{c|}{Lower bound} 
& \multicolumn{7}{c}{Empirical probability} \\
\cmidrule(lr){2-2}
\cmidrule(lr){3-9}
& $\nu_{\pazocal N}^{\text{LB}}$ 
& $\tilde{\mathbb{P}}_{\textit{InP}}$ 
& $\tilde{\mathbb{P}}_{\text{SP1}}$ 
& $\tilde{\mathbb{P}}_{\text{SP2}}$ 
& $\tilde{\mathbb{P}}_{\text{SP3}}$ 
& $\tilde{\mathbb{P}}_{\text{SP4}}$ 
& $\tilde{\mathbb{P}}_{\text{SP5}}$ 
& $\tilde{\mathbb{P}}_{\pazocal N}$ \\
\midrule
1 
& 1 
& 1 
& 1 
& 1 
& 1 
& 1 
& 1 
& 1 \\
\bottomrule
\end{tabular}}
\end{table*}

\subsection{Evaluation of the traffic load with \deleted{correlated} \added{dependent} fluctuations}

We now evaluate the second traffic model introduced in Section~\ref{section: brownian}.

\subsubsection{Illustration of stochastic realizations}
\figurename~\ref{fig:fbm_load} illustrates the stochastic process~$S_{i, \omega}^t$~\eqref{equation: stochastic_term_f} corresponding to one SP in an \added{M}EC environment over a five-day period. The expected load (red line) follows a gradual increase, reflecting the assumption that \added{M}EC is a newly introduced technology. Initially, there is little to no traffic, but as the system is adopted, more applications offload workloads to edge nodes, leading to a steady rise in demand. However, the stochastic realizations ($\omega1$ to $\omega9$) exhibit significant divergence from the expected trend, demonstrating highly unpredictable traffic behavior. Some realizations remain close to zero throughout the period, indicating scenarios where edge adoption is minimal or delayed. Others experience sharp increases, suggesting rapid deployment and high usage of edge resources, while some trajectories exhibit fluctuations, either rising and falling repeatedly or showing a delayed start followed by a sudden increase or decline.
\begin{figure}[h!]
    \centering
    \includegraphics[width=0.8\linewidth]{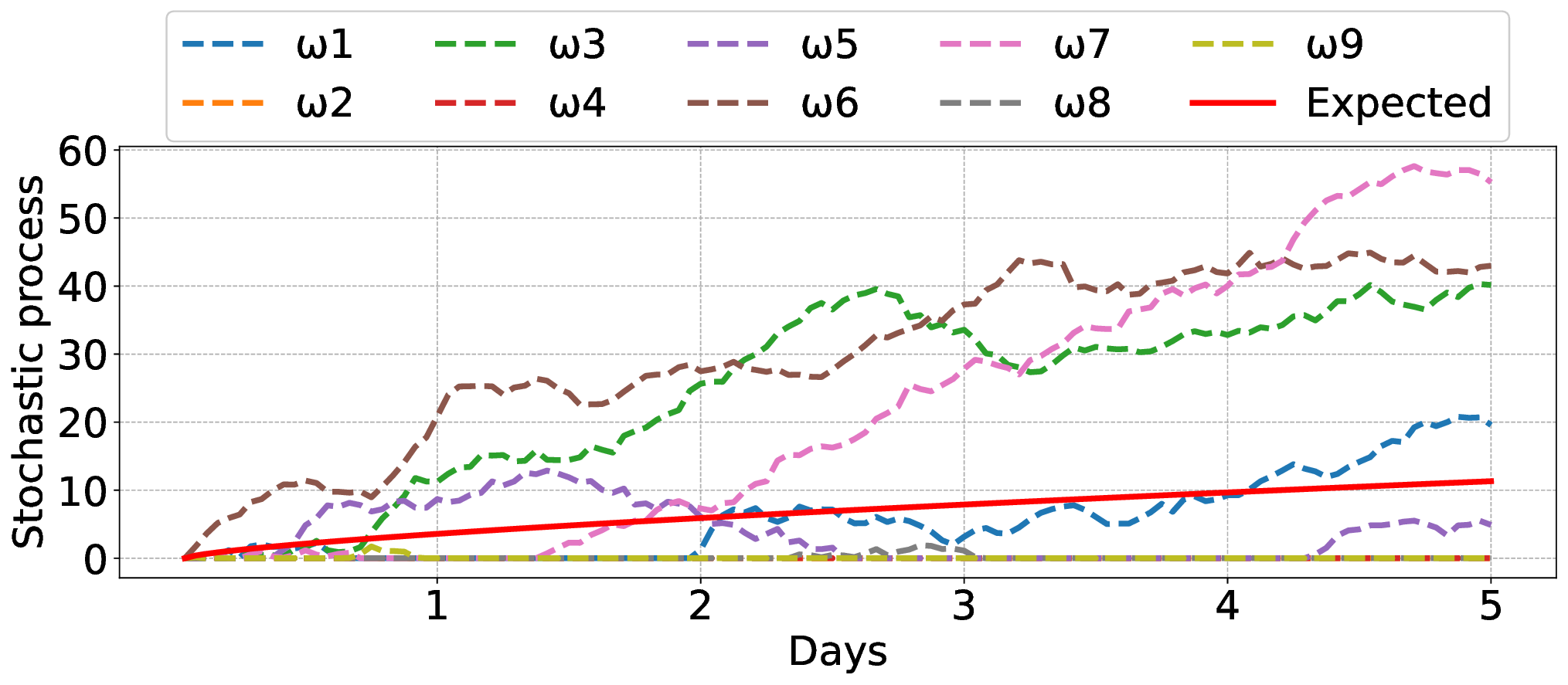}
    \caption{Stochastic realizations used to model traffic load variations \added{of SP1} over five days within the investment period}
    \label{fig:fbm_load}
\end{figure}

\subsubsection{Outcome of the stochastic game}
We provide in \figurename~\ref{fig:all_distributions_new} a detailed analysis of revenue, payment and payoff redistributions using the nucleolus (Section~\ref{sec:shapley}) across different players (\textit{InP}, SP1, SP2) under varying levels of uncertainty represented by~$\alpha$. In the case of~$\alpha = 1$, the traffic load depends entirely on the stochastic realization, with no contribution from its expected value~(according to~\eqref{equation: request_rate_general}). The figure shows outcome redistributions across multiple realizations~$\omega$.
To better understand the impact of uncertainty, we present in \figurename~\ref{fig:revenue}, \figurename \ref{fig:payment} and \figurename \ref{fig:payoff} box plots of these redistributions. The length of the box represents the variability of the distribution.

Starting with revenue redistribution (\figurename~\ref{fig:revenue})~\eqref{equation:reward}, we observe that at low values of~$\alpha$, the \deleted{violins} \added{box plots} are short, with a compact interquartile range (IQR), represented by this box. The IQR represents the middle 50\% of the data, ranging from the first quartile (25th percentile) to the third quartile (75th percentile), while the white line inside the box indicates the median revenue.
As $\alpha$ increases, the IQR increases, indicating that revenue becomes dispersed and unpredictable.
Additionally, the elongation of the upper tail further indicates that while higher uncertainty may provide opportunities for greater revenues, the lower tail suggests the possibility of lower revenue, raising the risk of extreme outcomes. This effect is particularly evident for SP1 and SP2, as they are affected by fluctuating demand and service usage. The \textit{InP}'s revenues remain stable and do not experience significant drops. It is important to note that this revenue is not intrinsic, it is a revenue that arises from the redistribution of the total revenue generated by the SPs across all players, including the \textit{InP} itself. Interestingly, negative revenues also appear under high uncertainty, meaning that for some realizations, players consume more resources than the value they generate.

A similar pattern is observed in the payment redistribution (\figurename~\ref{fig:payment}), where at low $\alpha$ values (e.g., $\alpha = 0.25$), payments remain tightly clustered, as shown by the small IQR. However, as uncertainty grows, the IQR expands,  meaning payments are no longer concentrated around a single expected value but are spread across a broader range. The increased length of the box plots at high $\alpha$ values (e.g., $\alpha = 1$) indicates that payments fluctuate significantly under uncertain conditions, leading to a higher financial risk and variability for the players. This suggests that in some cases, players may pay significantly more or less than expected, reinforcing their exposure to financial uncertainty. 
Notably, we observe negative payments, particularly for the \textit{InP} and occasionally for SP1 and SP2. Payments are calculated based on payoffs and revenues redistribution, not just raw cost. When a player has a high payoff because they are essential to the coalition or contribute significantly, they may receive money from others to reflect their role. For the \textit{InP}, this often occurs because it acts as a veto player~(Proposition~\ref{proposition:pointless}). %without it the coalition cannot exist. 
For SPs, it may happen because they generate the revenue through user services, making them key contributors to the coalition’s payoff.

These dynamics come together in the payoff redistribution. We show in \figurename~\ref{fig:payoff} that at low $\alpha$, all players enjoy stable and positive payoffs, meaning they recover more than they pay. But under higher uncertainty (e.g., $\alpha = 1$), the spread of the payoff redistributions increases, and the payoffs of SP1 and SP2 become negative in some realizations. In these cases, not only does variability increase, but some players experience negative payoffs, meaning they end up paying more than they earn overall. This highlights a key risk of operating under high uncertainty: while some scenarios may lead to high profits, others can result in financial losses. 

\begin{figure}[h!]
    \centering
    \begin{subfigure}{0.49\textwidth}
        \centering
        \includegraphics[width=\linewidth]{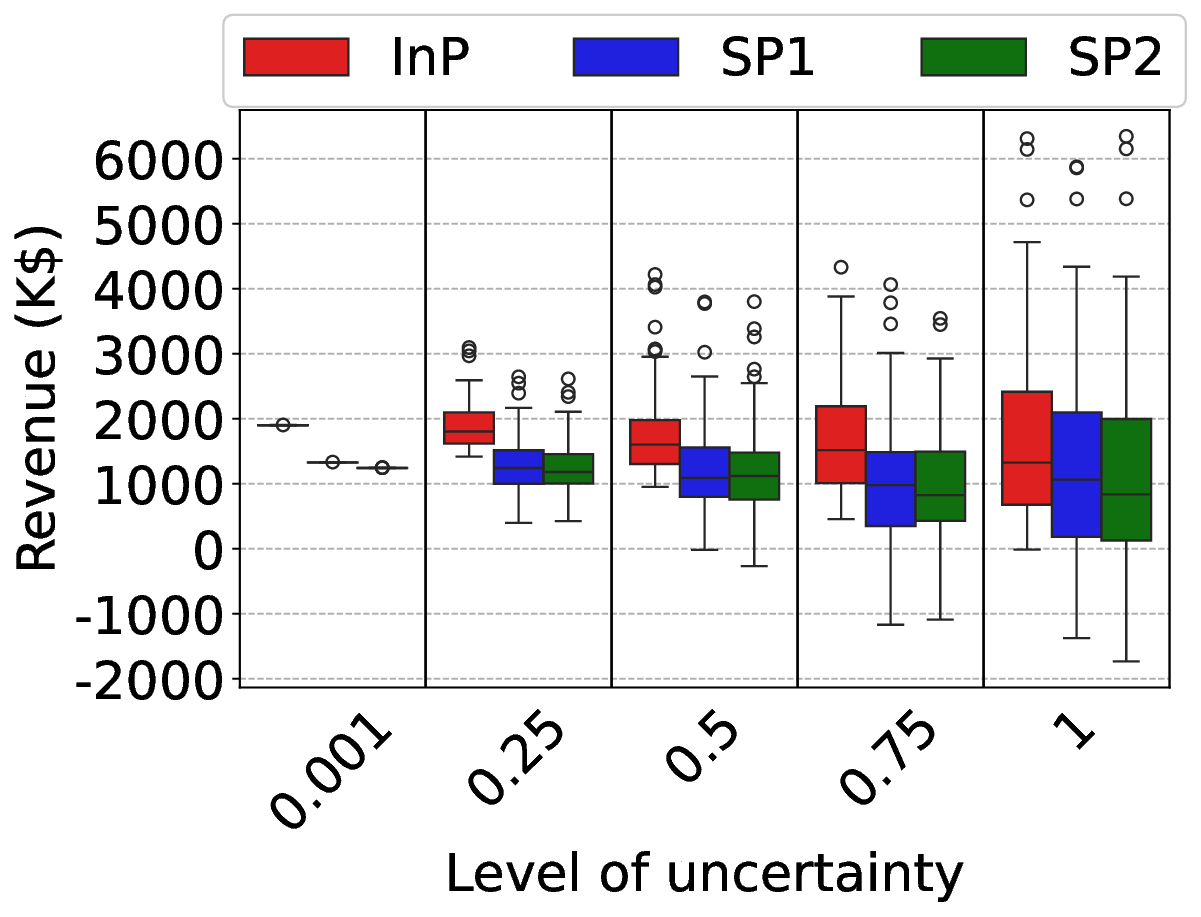}
        \caption{Revenue redistribution}
                \label{fig:revenue}
    \end{subfigure}
    \hfill
    \begin{subfigure}{0.49\textwidth}
        \centering
        \includegraphics[width=\linewidth]{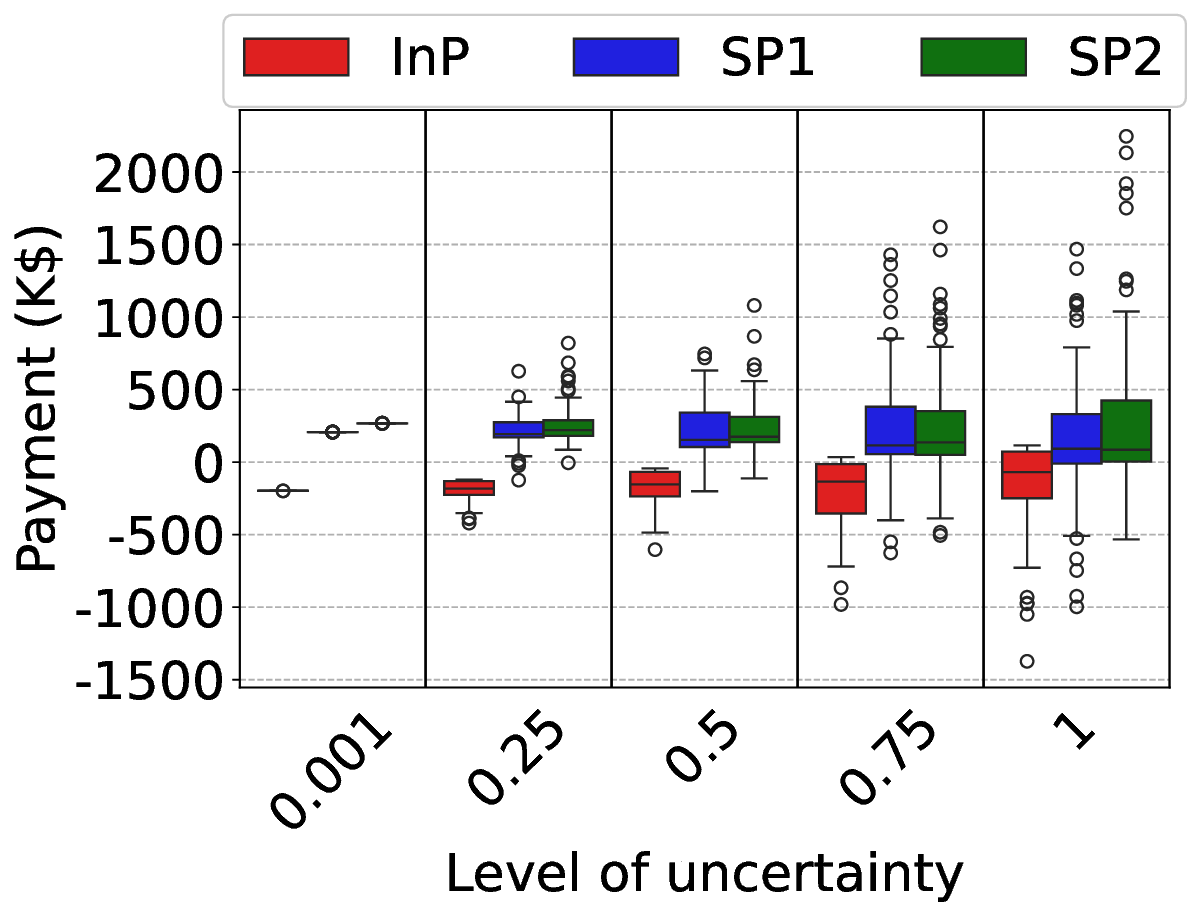}
        \caption{Payment redistribution}
        \label{fig:payment}
    \end{subfigure}
    \hfill
    \begin{subfigure}{0.49\textwidth}
        \centering
        \includegraphics[width=\linewidth]{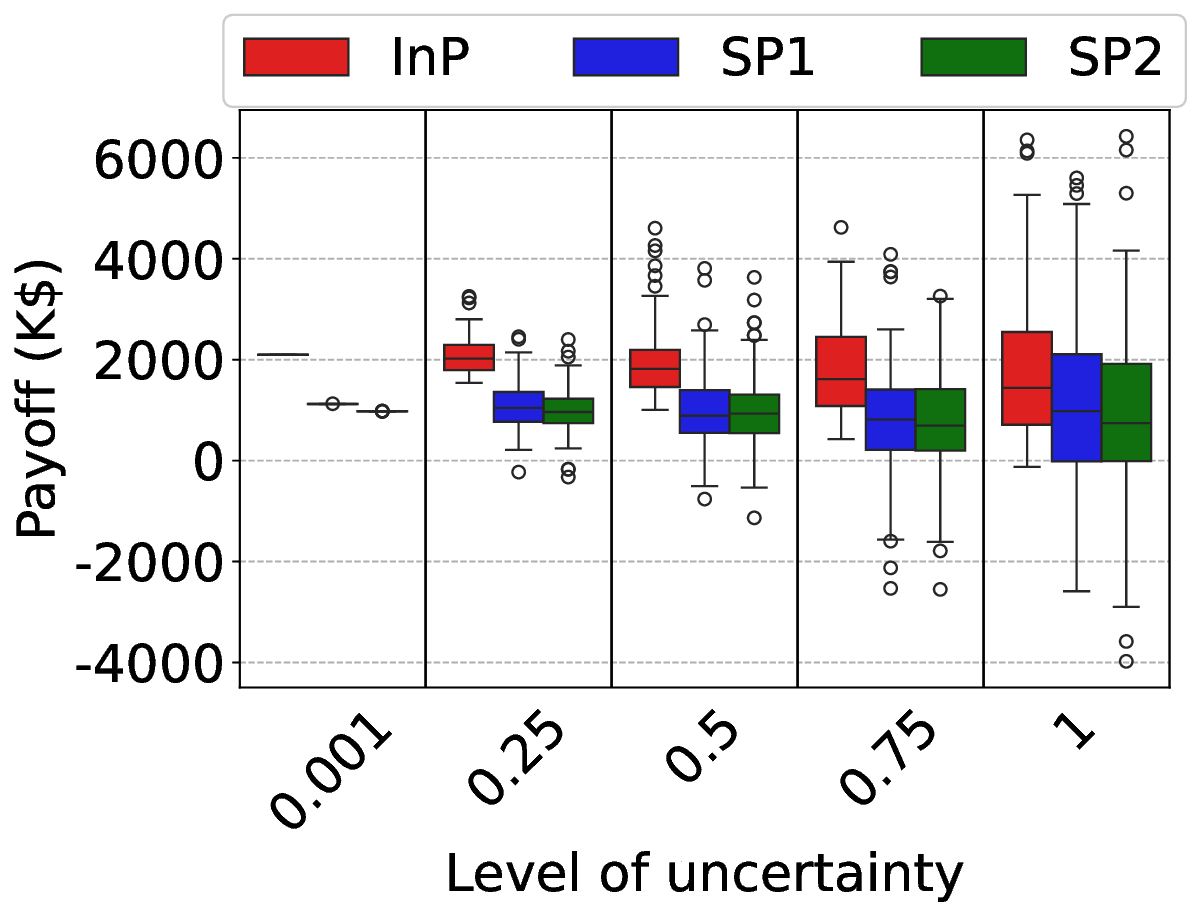}
        \caption{Payoff redistribution}
                \label{fig:payoff}
    \end{subfigure}
\caption{
\added{Box}\deleted{Violin} plots of revenue, payment, and payoff redistributions for \textit{InP}, SP1, and SP2 under varying levels of uncertainty~$\alpha$, using 100 generated realizations. 
\added{Each box represents the interquartile range, the median is shown by the horizontal line, the whiskers indicate the variability across realizations, and individual points beyond the whiskers correspond to values that fall outside the main dispersion range}
}
    \label{fig:all_distributions_new}
\end{figure}
\added{\subsubsection{Stability of the grand coalition under strong dependence}
We initially attempted to derive analytically a lower bound on the probability that the grand coalition is stable using Chebyshev’s inequality (Theorem~\ref{thm:main-theorem}~(1)). However, the resulting bound was overly conservative and therefore not informative. This behavior can be explained by the presence of highly dependent traffic modeled via fractional Brownian motion (Section~\ref{section: brownian}), which induces large variance and, in turn, significantly lowers the resulting probability bound. This motivates the use of the lower bound on co-investment profitability given in Theorem~\ref{thm:prof}, which applies to all traffic fluctuation models and all degrees of dependence, including strong long-range dependence, as discussed in the following section.}

\subsubsection{\deleted{p}\added{P}rofitability of the co-investment under uncertainty}
\deleted{We initially attempted analytically to estimate a lower bound on the probability that the grand coalition is stable using Chebyshev's inequality, but the result was overly conservative and therefore not useful. Instead, we estimate the probability of profitable co-investment empirically, across multiple realizations~$\omega$, to assess its profitability according to Definition~\ref{definition:profitable_coinv}.}
We show in \tablename~\ref{tab:comparison} how \added{the lower bound ($\nu_{\pazocal N}^{\text{LB}}$) on the coalition’s profitability and the empirical} probabilit\deleted{y}\added{ies} of profitable co-investment for individual players \added{ ($\tilde{\mathbb{P}}_i$) and the coalition (\(\mathbb{\tilde{P}}_\pazocal N\))}
vary with different $\alpha$ values. 

\added{The lower bound applies to the coalition as a whole and provides a conservative theoretical guarantee on its profitability probability.}
The probability~\(\mathbb{\tilde{P}}\) represents the empirical Monte Carlo probability, where profitability is determined by directly evaluating the probability that revenue exceeds payment. This makes it the most economically meaningful and realistic measure of positive payoffs. 
We observe that \added{both} the \added{lower bound on the probability and the empirical} probability of profitable co-investment decreases as uncertainty increases, with the SPs becoming increasingly exposed to risk. At low uncertainty levels (e.g, $\alpha=0.001$ and $\alpha=0.25$), all players, including the coalition as a whole, are almost guaranteed to make profit. However, as uncertainty grows, the gap widens: SPs see their chances of profitability fall significantly, while the \textit{InP} has probability 1 throughout. This reflects the \textit{InP}'s role as a veto player, since the infrastructure cannot be deployed without it, it faces less risk when joining the coalition. 
\added{Moreover, the lower bound is tight, since the lower-bound values are close to the empirical probability of profitable co-investment for the coalition as a whole.}

\begin{table*}[h!]
\centering
\caption{Lower-bound and empirical probabilities of profitable co-investment for individual players and for the coalition~$\pazocal N$, across different levels of uncertainty~$\alpha$}
\label{tab:comparison}
\resizebox{0.7\textwidth}{!}{
\begin{tabular}{c|c|cccc}
\toprule
$\alpha$ 
& \multicolumn{1}{c|}{\added{Lower bound}} 
& \multicolumn{4}{c}{Empirical probability} \\
\cmidrule(lr){2-2}
\cmidrule(lr){3-6}
& $\nu_{\pazocal N}^{\text{LB}}$
& $\tilde{\mathbb{P}}_{\textit{InP}}$ 
& $\tilde{\mathbb{P}}_{\text{SP1}}$ 
& $\tilde{\mathbb{P}}_{\text{SP2}}$ 
& $\tilde{\mathbb{P}}_{\pazocal N}$ \\
\midrule
0.001 & 1        & 1    & 1    & 1    & 1    \\
0.25  & 0.95     & 1    & 0.96    & 1 & 0.96 \\
0.50  & 0.78     & 1    & 0.92 & 0.87 & 0.80\\
0.75  & 0.60     & 1    & 0.84 & 0.77 & 0.64 \\
1.00  & 0.50     & 1    & 0.79 & 0.69 & 0.54 \\
\bottomrule
\end{tabular}}
\end{table*}

\subsubsection{Effect of the investment period on the outcome of the game}
\figurename~\ref{fig:investment_outcomes_new} presents the impact of increasing the investment period $I$ (from 5 to 10 years) on revenue, payment and payoff redistributions for each player when $\alpha = 1$.
As shown in \figurename~\ref{fig:investment_revenue}, the revenue redistribution per player increases with longer investment periods, reflecting the natural increase in cumulative earnings over time. However, the more elongated \deleted{violin} \added{box} plots indicate that uncertainty in revenue grows. This means that while financial gains increase, the risk of fluctuations in earnings also rises.
The payment redistribution per player, observed in \figurename~\ref{fig:investment_payment}, shows that longer investment periods require larger financial contributions from all players. We observe that SP1 and SP2 experience the highest increase in payment, whereas the \textit{InP} sees a gradual rise. This difference is likely because the \textit{InP} plays a veto role, whereas the SPs' contributions depend more on service demand and market fluctuations.
\figurename~\ref{fig:investment_payoff} shows the payoff redistribution per player. It reveals that all players benefit from longer investment periods, with the \textit{InP} achieving the highest payoff. SP1 and SP2 also see significant growth in their payoffs.
Overall, our findings suggest that extending the investment period is generally beneficial, leading to higher cumulative revenues, greater payoffs, and increased payments. However, it also amplifies uncertainty, particularly for SPs.

\begin{figure}[h!]
    \centering
    \begin{subfigure}[b]{0.49\textwidth}
        \centering
        \includegraphics[width=\textwidth]{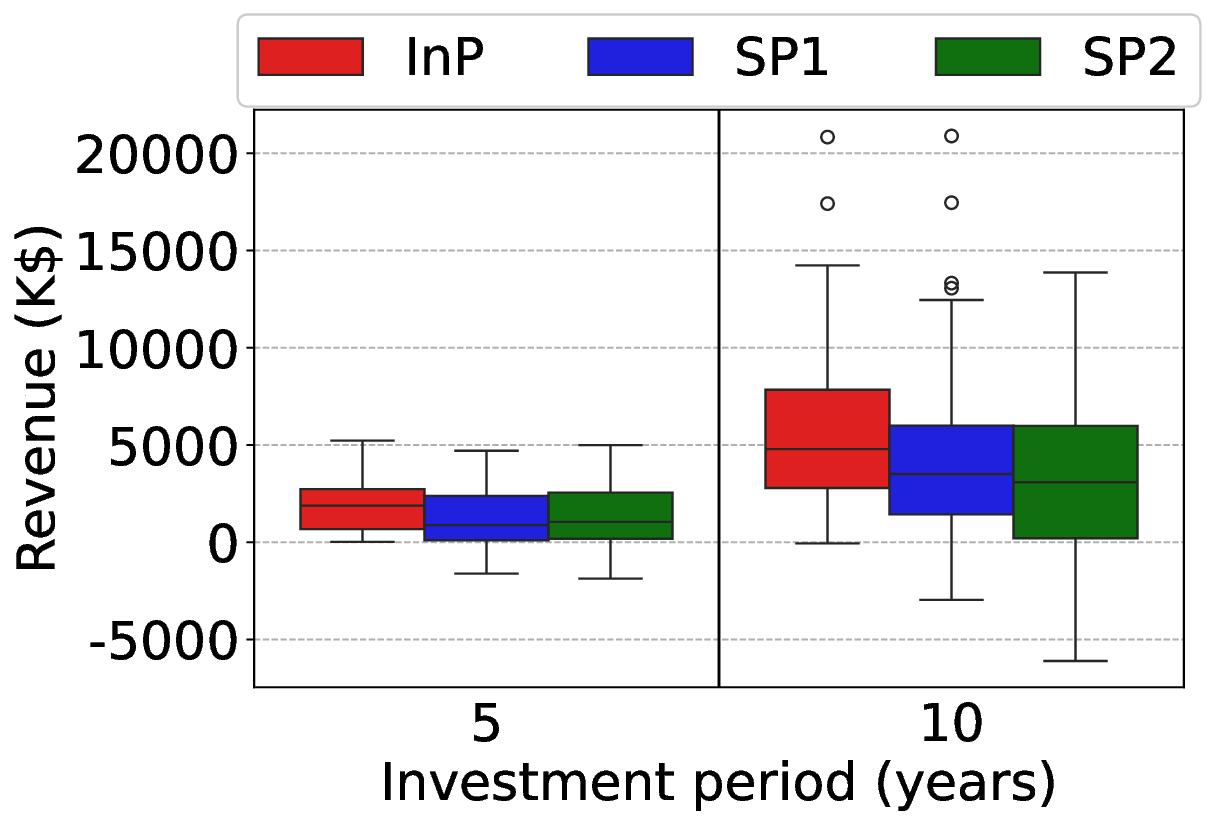}
        \caption{Revenue redistribution}
        \label{fig:investment_revenue}
    \end{subfigure}
    \hfill
    \begin{subfigure}[b]{0.49\textwidth}
        \centering
        \includegraphics[width=\textwidth]{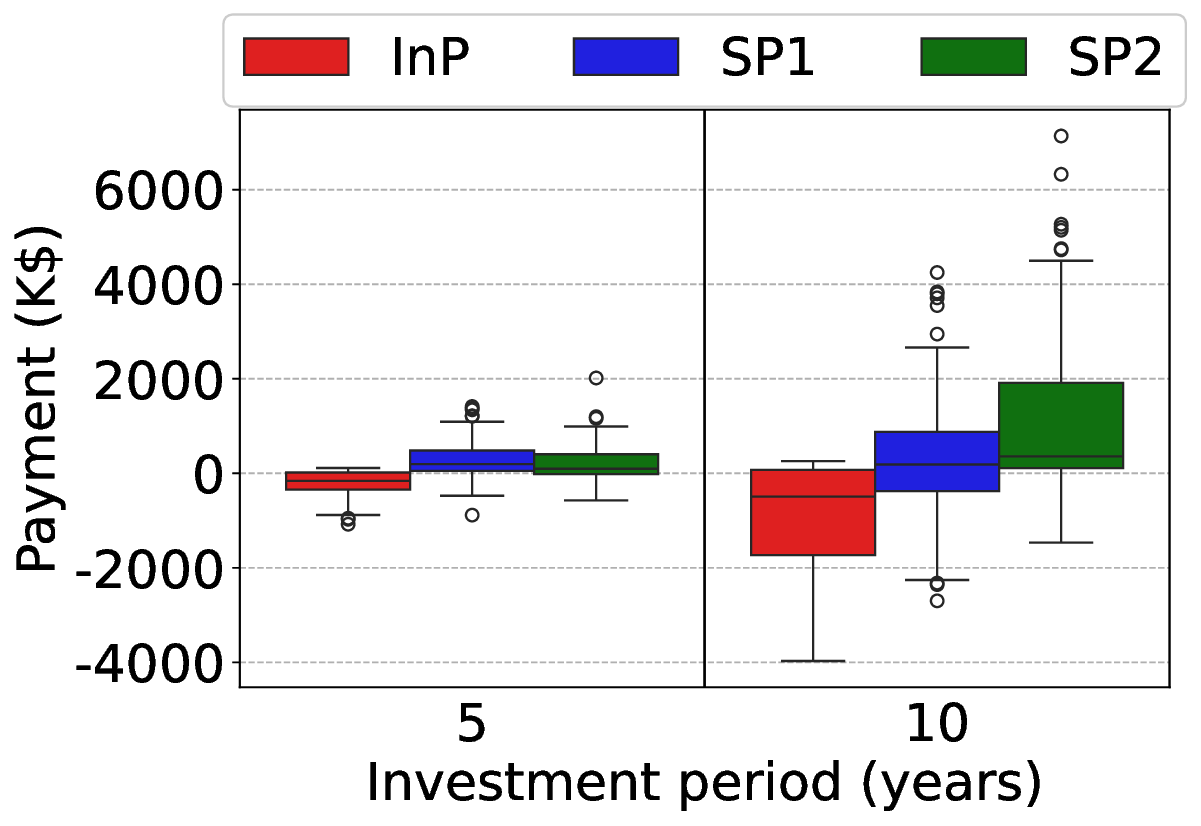}
        \caption{Payment redistribution}
        \label{fig:investment_payment}
    \end{subfigure}
    \hfill
    \begin{subfigure}[b]{0.49\textwidth}
        \centering
        \includegraphics[width=\textwidth]{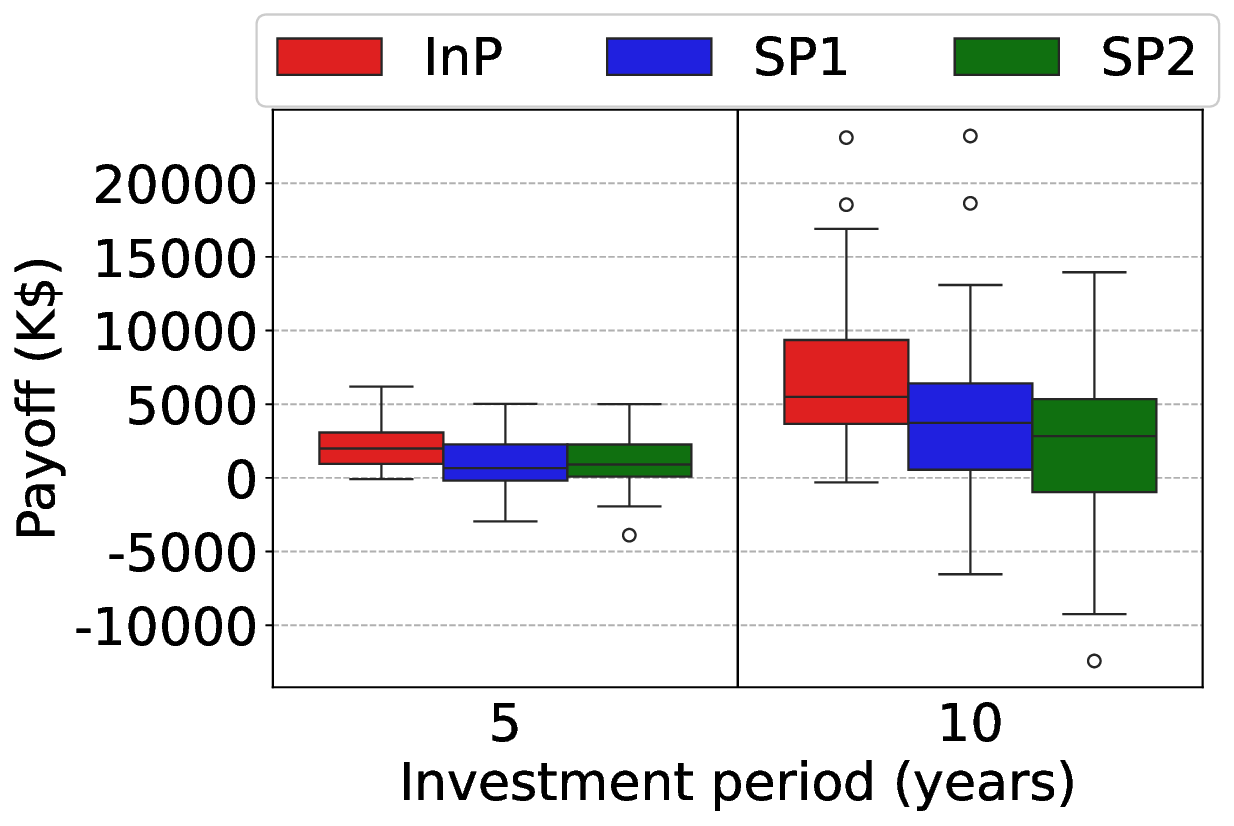}
        \caption{Payoff redistribution}
        \label{fig:investment_payoff}
    \end{subfigure}
    \caption{Effect of the investment period length on revenue, payments, and payoff redistribution, \added{shown using box plots, where the box indicates the interquartile range, the line inside the box is the median, the whiskers represent the variability across realizations, and individual points beyond the whiskers correspond to values that fall outside the main dispersion range.}}

    \label{fig:investment_outcomes_new}
\end{figure}
\subsubsection{Impact of investment period on co-investment profitability}
We show in \figurename~\ref{fig:lower_invest} how the \added{lower bound on} probability that the co-investment is profitable (Fig.~\ref{figure:9a}) \added{, together with the corresponding empirical probability (Fig.~\ref{figure:9b})} evolves with different investment periods $I$ and uncertainty levels~$\alpha$. The results indicate that as the investment period increases, profitability improves significantly.
For low uncertainty level ($\alpha \le 0.25$), 
\added{both the lower bound and the empirical probability remain high, i.e., }
profitability remains guaranteed\added{,} for both investment periods (5 and 10 years). However, as uncertainty increases, shorter investment periods become less profitable than longer ones. 
The increase in profitability with longer investment periods is primarily due to the accumulation of revenues over time, which allows the players to better absorb market fluctuations and operational costs. A longer time horizon enables co-investors to amortize their fixed investment costs over a greater number of operational years, reducing the impact of upfront capital expenditures on overall financial performance. Moreover, demand growth over time contributes to increased revenue, further improving investment viability. Importantly, uncertainty does not increase proportionally with the investment period, meaning that while revenues grow, the relative level of risk remains manageable.
\added{Notably, the empirical probability in Fig.~\ref{figure:9b} closely follows the lower bound in Fig.~\ref{figure:9a}, indicating that the bound is tight across the considered uncertainty levels.}

\begin{figure}[h!]
    \centering
    \begin{subfigure}[b]{0.49\textwidth}
        \centering
        \includegraphics[width=\textwidth]{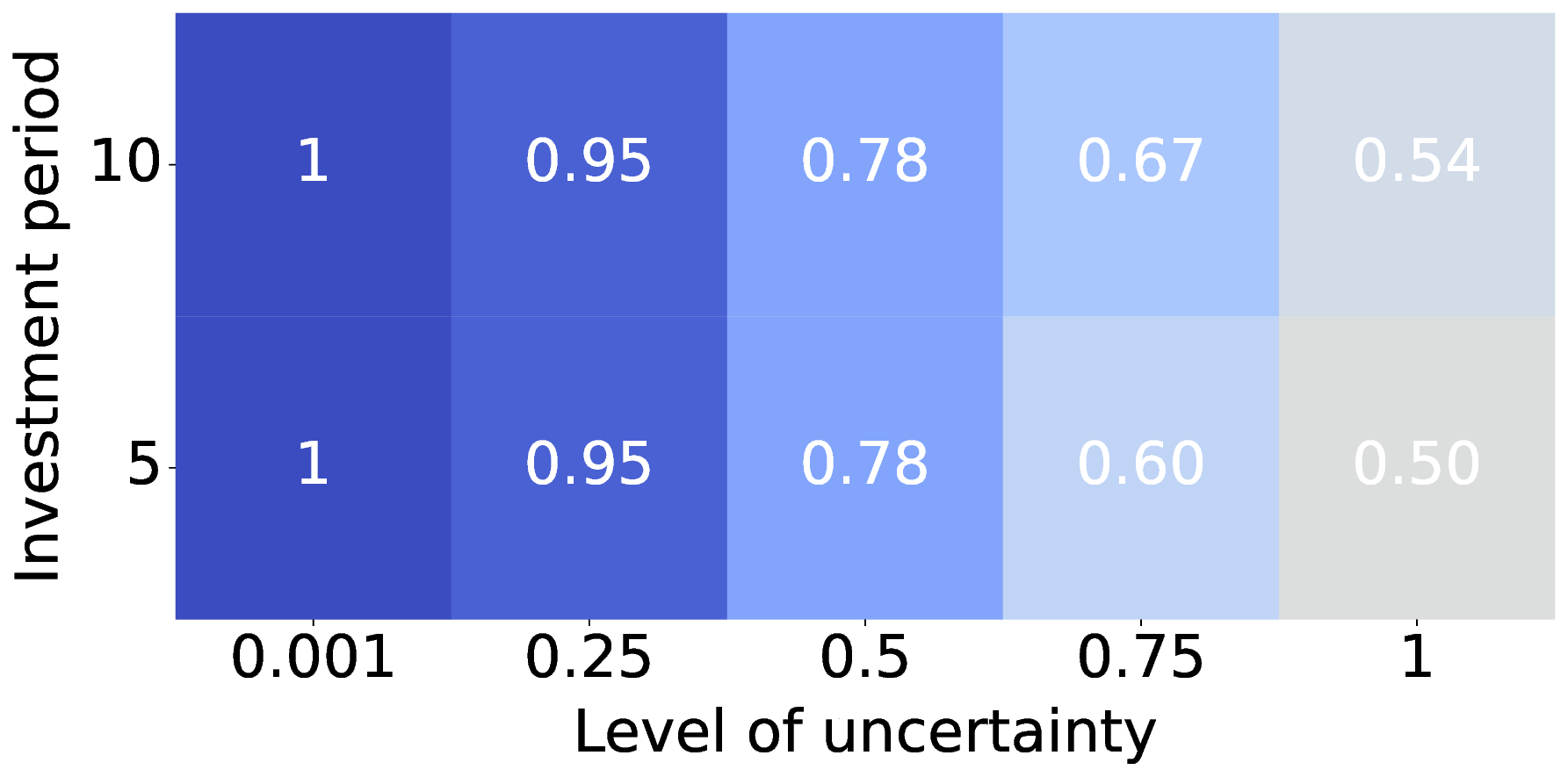}
        \caption{\added{Lower bound on the probability that co-investment is profitable}}
        \label{figure:9a}
    \end{subfigure}
    \hfill
    \begin{subfigure}[b]{0.49\textwidth}
        \centering
        \includegraphics[width=\textwidth]{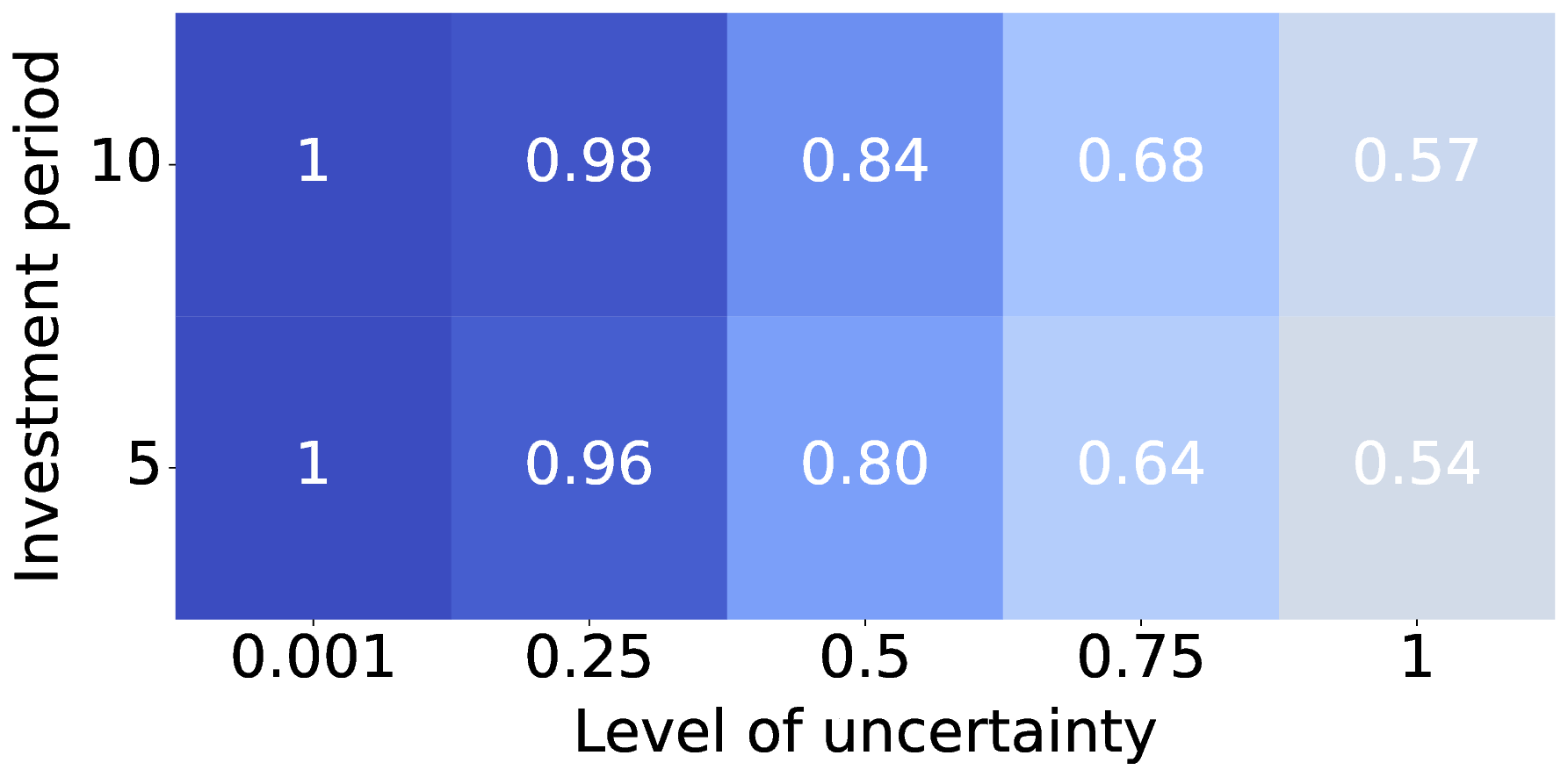}
        \caption{Probability that co-investment is profitable \deleted{for different investment periods  across different levels of uncertainty~$\alpha$}}
        \label{figure:9b}
    \end{subfigure}
    \caption{\added{Lower bound on the probability that co-investment is profitable, together with the empirical probability of profitability obtained from 100 realizations, across different investment periods (5 and 10 years) and different levels of uncertainty $\alpha \in [0,1]$}}
    \label{fig:lower_invest}
\end{figure}

\subsubsection{Impact of investment period and uncertainty on payback year}
At the start of the game, all players make an initial investment to purchase a package that includes computational resources and maintenance. A key financial question is when do players recover their investment? This is determined by the payback year, the point at which the total payoff surpasses the initial investment costs for a given investment period \(I\). 
To track this, we proceed as follows: the realized coalition payoff~$v_{\omega}(\pazocal{N})$~\eqref{eq: generic expression of payoff-realizations_s} is typically calculated over the full investment period~$I$. We now compute this payoff over different periods $Z \leq I$, in order to identify when the payoff becomes positive.
From this moment onward, all additional earnings represent pure profit.
\figurename~\ref{fig:payback} presents this analysis using box plots. Each box plot summarizes the variability of outcomes: the box captures the IQR, representing the middle 50\% of observations, while whiskers extend to the most extreme values within 1.5 times the IQR; points beyond the whiskers are outliers caused by  extreme randomness.
\figurename~\ref{fig:payback5} and \figurename~\ref{fig:payback10} illustrate the distribution of payback years under varying investment periods (\(I = 5\) and \(I = 10\) years) and uncertainty levels (\( \alpha = 0.25, 0.5, 0.75, 1\)).
The red line in each figure represents the expected payback year, which serves as a reference point for understanding deviations due to stochastic effects.  
However, when uncertainty is introduced, the actual payback year can deviate from this expected value. 
For a 5-year investment period, we observe in \figurename~\ref{fig:payback5} that at low~$\alpha$ values (e.g., $\alpha = 0.25$), the payback year is highly concentrated around its expected value. This means that investment recovery time is predictable. However, as $\alpha$ value increases (e.g., $\alpha = 0.75$), the IQR increases and outliers become frequent, reflecting greater variability in recovery time.
In these cases, the payback year may occur earlier or later than expected, depending on the realized stochastic fluctuations.
Similarly, for a longer investment period (\(I = 10\) years), we show in \figurename~\ref{fig:payback10} that extending the investment period leads to an earlier payback year in relative terms. This makes longer investment horizons more attractive. This effect is most evident at low $\alpha$ values, where payoff accumulation follows the expected trend.
Additionally, as $\alpha$ value increases, the payback year becomes broader, reinforcing the idea that greater uncertainty makes it more difficult to predict the exact payback year.
\begin{figure}[h!]
    \centering
    \begin{subfigure}[b]{0.49\textwidth}
        \centering
        \includegraphics[width=\textwidth]{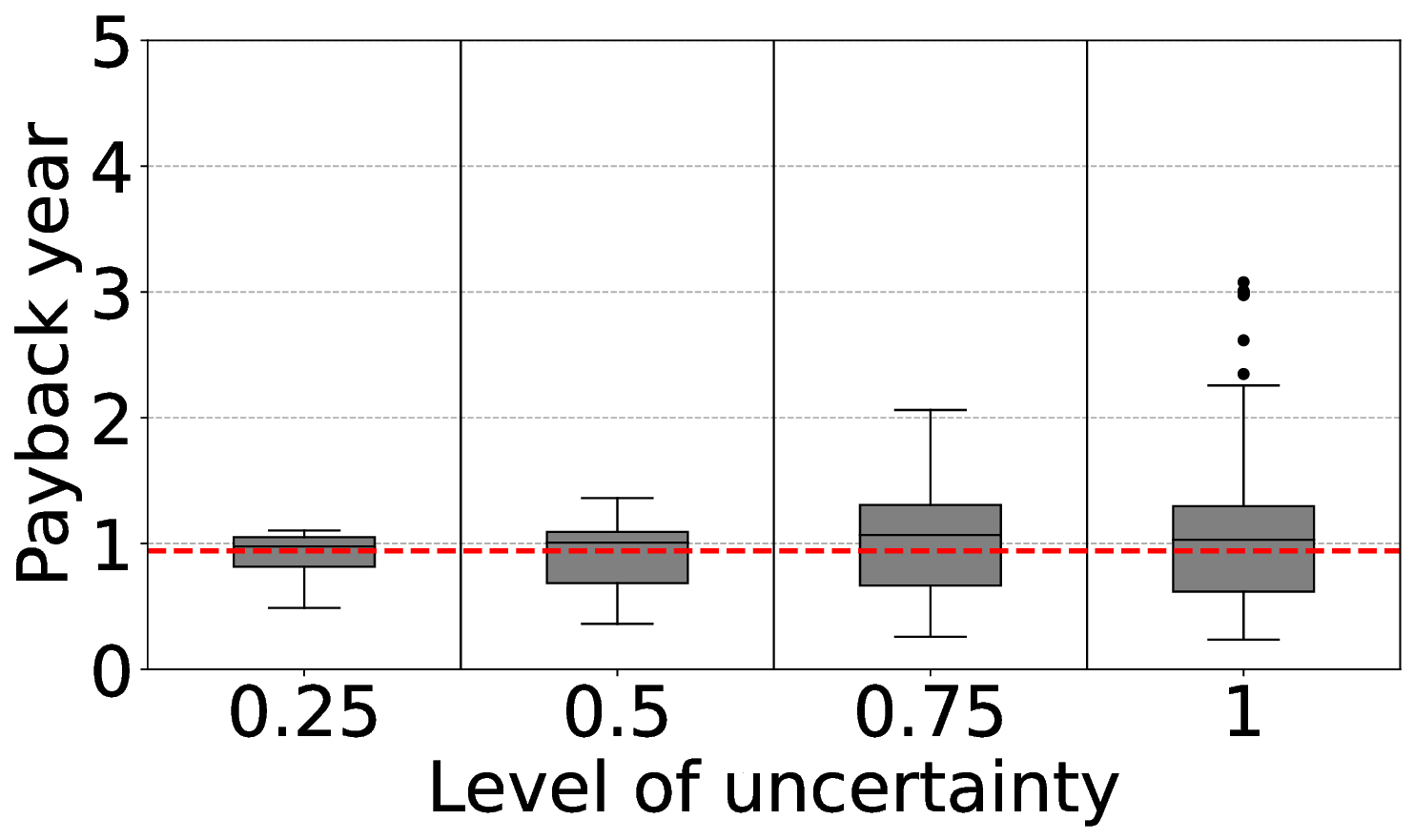}
        \caption{\( I = 5 \) years}
        \label{fig:payback5}
    \end{subfigure}
    \hfill
    \begin{subfigure}[b]{0.49\textwidth}
        \centering
        \includegraphics[width=\textwidth]{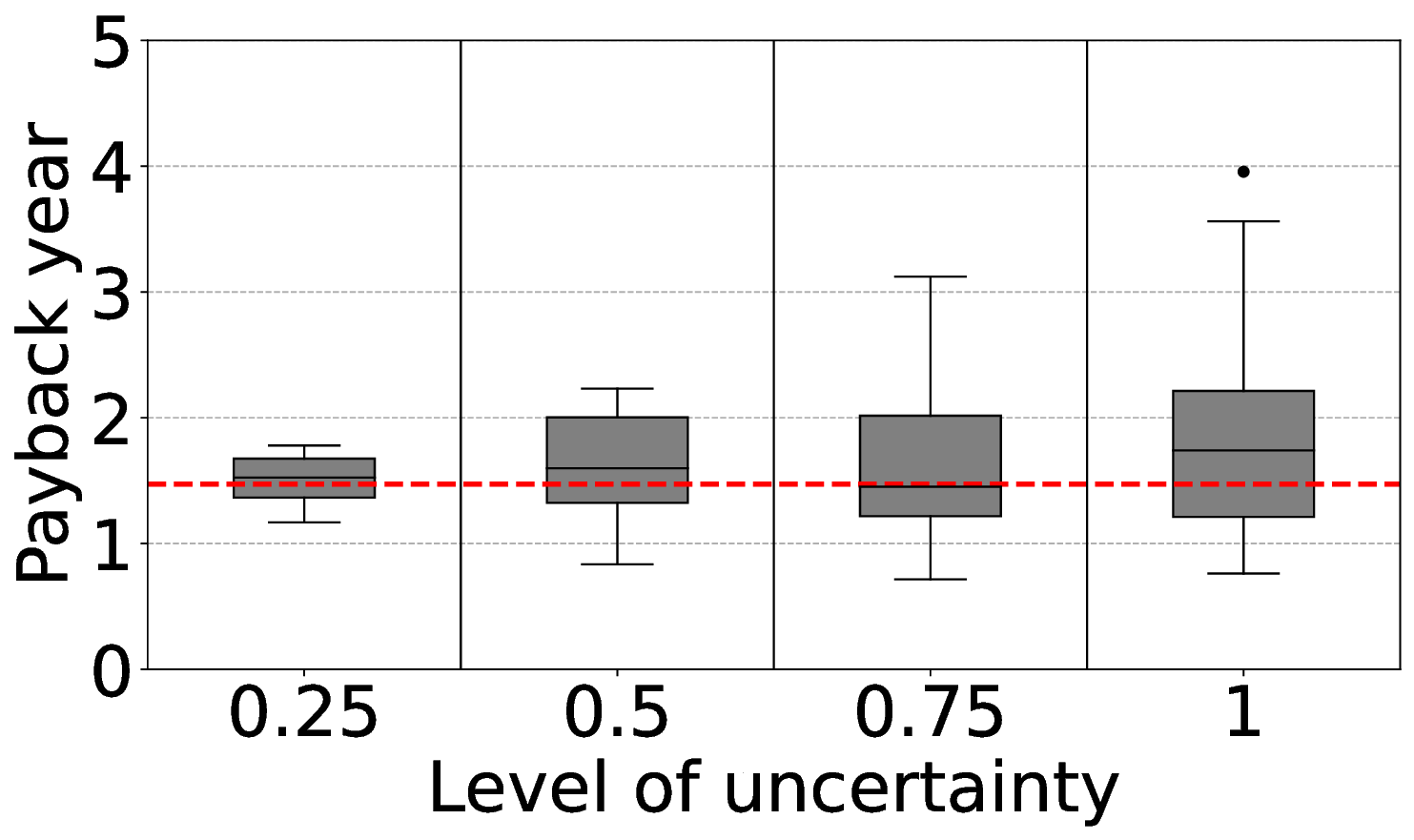}
        \caption{\( I = 10 \) years}
        \label{fig:payback10}
    \end{subfigure}
    \caption{Distribution of the payback year for different investment periods and levels of uncertainty~$\alpha$, presented using box plots. The red line represents the expected payback year}
    \label{fig:payback}
\end{figure}

\subsubsection{Impact of the Hurst parameter on co-investment profitability}
The Hurst parameter, denoted as \( H \), is a key characteristic of fractional Brownian motion and is used in our model to represent the degree of long-range dependence in traffic load fluctuations~(see Section~\ref{section: brownian}). 

\added{Recall that Lemma~\ref{lem:log_derivative} states that the lower bound on the
probability of profitable co-investment provides a profitability guarantee that
is non-decreasing in \(H\) whenever the logarithmic derivative of the variance of
the total utility is smaller than the logarithmic derivative of the squared
expected coalition payoff.
Figure~\ref{fig:11c} directly evaluates this condition by plotting the two
logarithmic derivatives with respect to~\(H\), at the maximum uncertainty level~\(\alpha = 1\).
The figure shows that the condition of Lemma~\ref{lem:log_derivative} is satisfied
for all considered values of~\(H\), and therefore guarantees that the lower bound~\(\nu_{\pazocal N}^{\mathrm{LB}[H]}\) is non-decreasing in \(H\). Moreover, the gap between the two derivatives is largest at \(H = 0.3\) and decreases as \(H\) increases. A large gap implies that uncertainty grows much faster, in relative terms, than the expected payoff, which weakens the resulting profitability guarantee and decreases the lower bound.}

\added{The monotonic behavior of the lower bound is clearly reflected in
Figure~\ref{figure:11a}, which reports
the lower bound on the probability of profitable co-investment across different
levels of uncertainty \(\alpha \in [0,1]\) and across different Hurst parameters
\(H \in \{0.3, 0.5, 0.7\}\), and is further confirmed by the empirical
probabilities shown in Figure~\ref{figure:11b}.}
\deleted{\figurename~\ref{fig:lower_hurst} illustrates its impact on the probability that the co-investment is profitable under different levels of uncertainty ($\alpha$ values).} 
We show that, when \( H = 0.7 \), where traffic patterns exhibit strong long-term dependence and persistency, \added{both the lower bound and the empirical probability of} profitability remain\deleted{s} high even as uncertainty increases. At \(\alpha = 0.75 \), the \added{lower bound} \deleted{probability} remains above 0.5, and even at \( \alpha = 1 \), there is still a chance of profitability. This suggests that when traffic fluctuations are positively \deleted{correlated} \added{dependent} over time, payoff is enhanced, reducing the impact of short-term fluctuations that could negatively impact investment profitability.
On the other hand, when \( H = 0.5 \), traffic follows an \deleted{uncorrelated} \added{independent} pattern with no long-range dependence. In this case, \added{both the lower bound and the empirical probability of} profitability follow a similar trend as in the case of positive \deleted{correlation} \added{dependence} but decline more rapidly as $\alpha$ increases. 
However, the situation changes for \( H = 0.3 \), where traffic exhibits negative \deleted{correlation} \added{dependence}, an increase in demand is often followed by a drop, and vice versa. We show that the lower bound decreases to about 0.55 at \(\alpha=0.25\) and drops sharply for larger \(\alpha\). This suggests that in bursty and highly fluctuating traffic conditions, payoff is significantly compromised, making co-investment more risky.
\added{Notably, Fig.~\ref{figure:11b} closely follows Fig.~\ref{figure:11a}, guaranteeing the bound’s tightness.}

\begin{figure}[h!]
    \centering
    \begin{subfigure}[b]{0.28\textwidth}
        \centering
        \includegraphics[width=\textwidth]{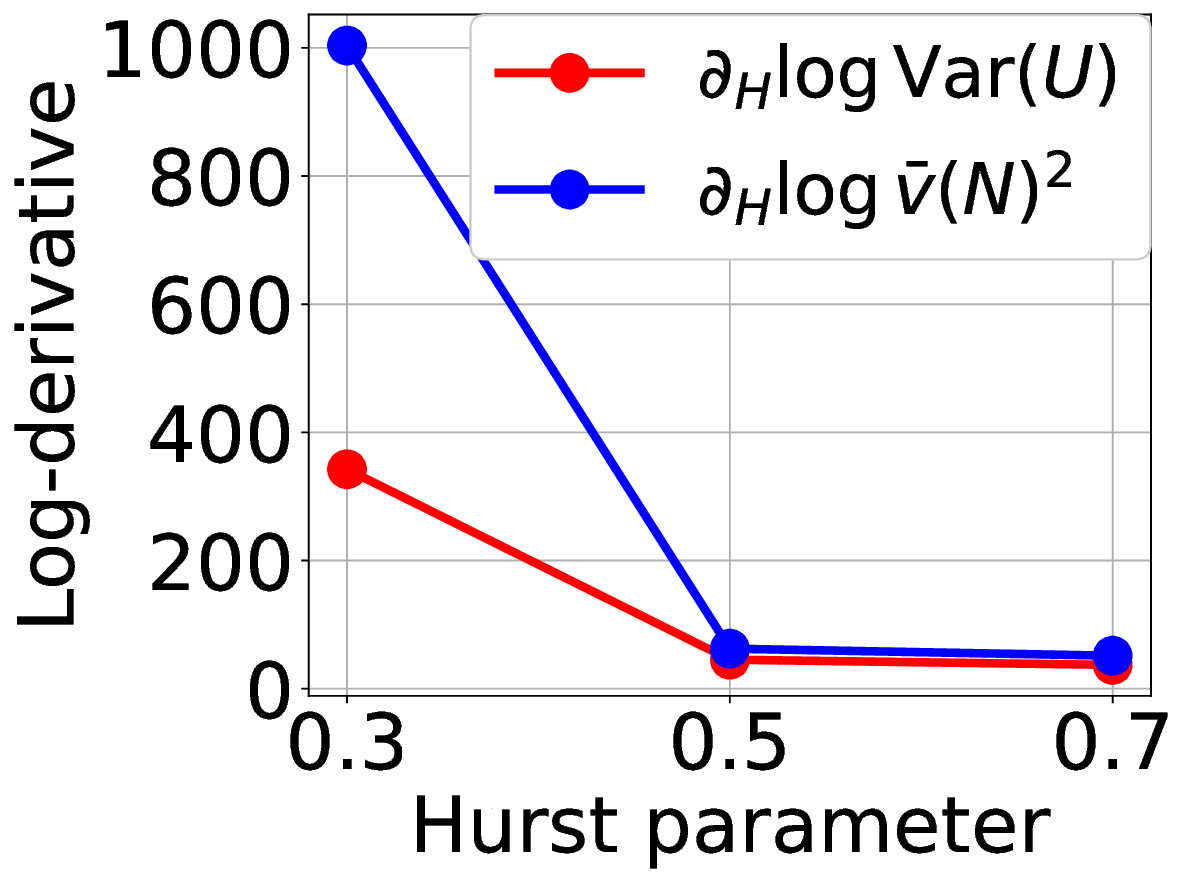}
        \caption{Derivatives with respect to $H$ for $\alpha = 1$}
        \label{fig:11c}
    \end{subfigure}
    \hfill
    \begin{subfigure}[b]{0.35\textwidth}
        \centering
        \includegraphics[width=\textwidth]{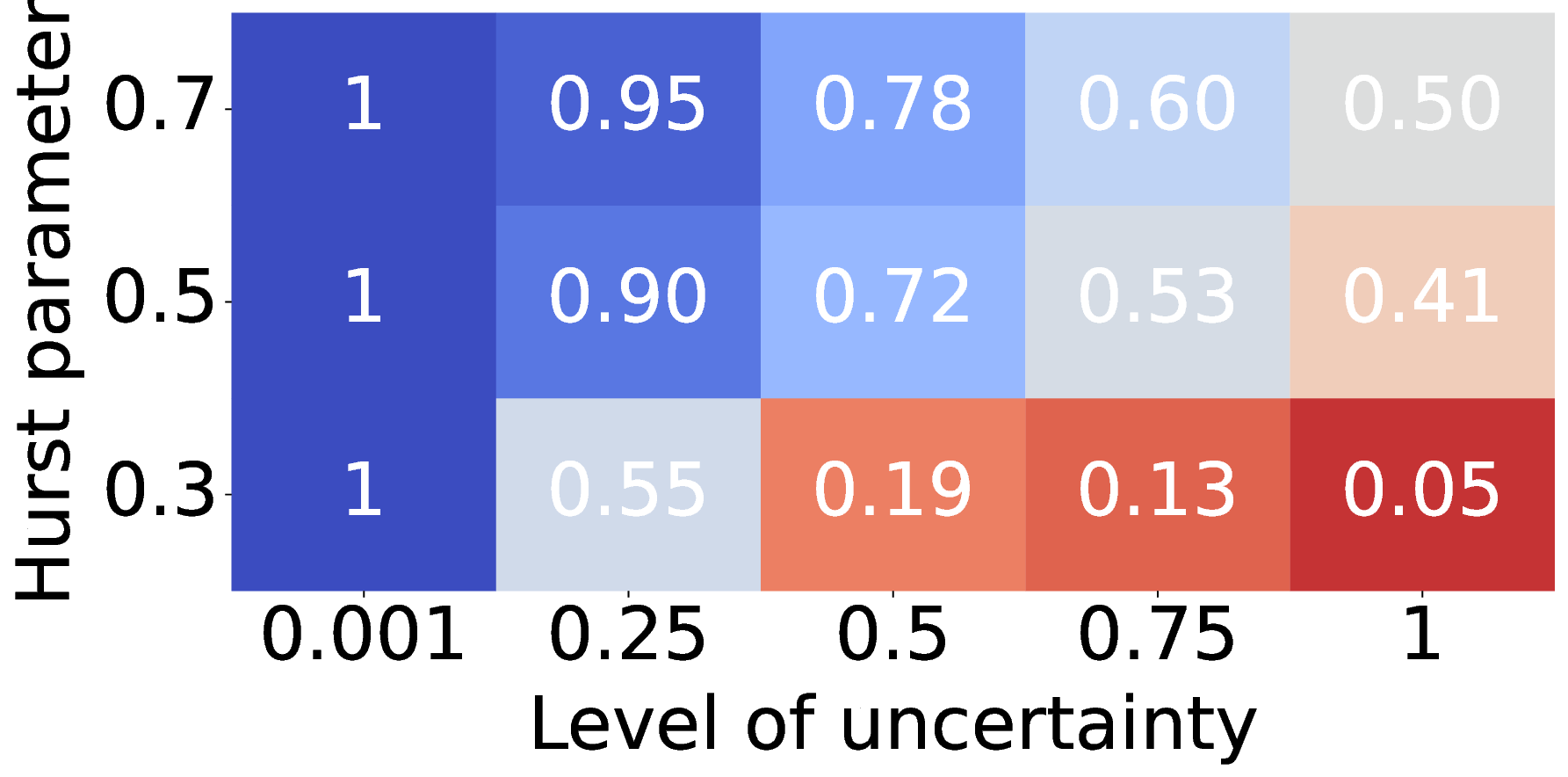}
        \caption{\added{Lower bound on the probability of profitable co-investment}}
        \label{figure:11a}
    \end{subfigure}
    \hfill
    \begin{subfigure}[b]{0.35\textwidth}
        \centering
        \includegraphics[width=\textwidth]{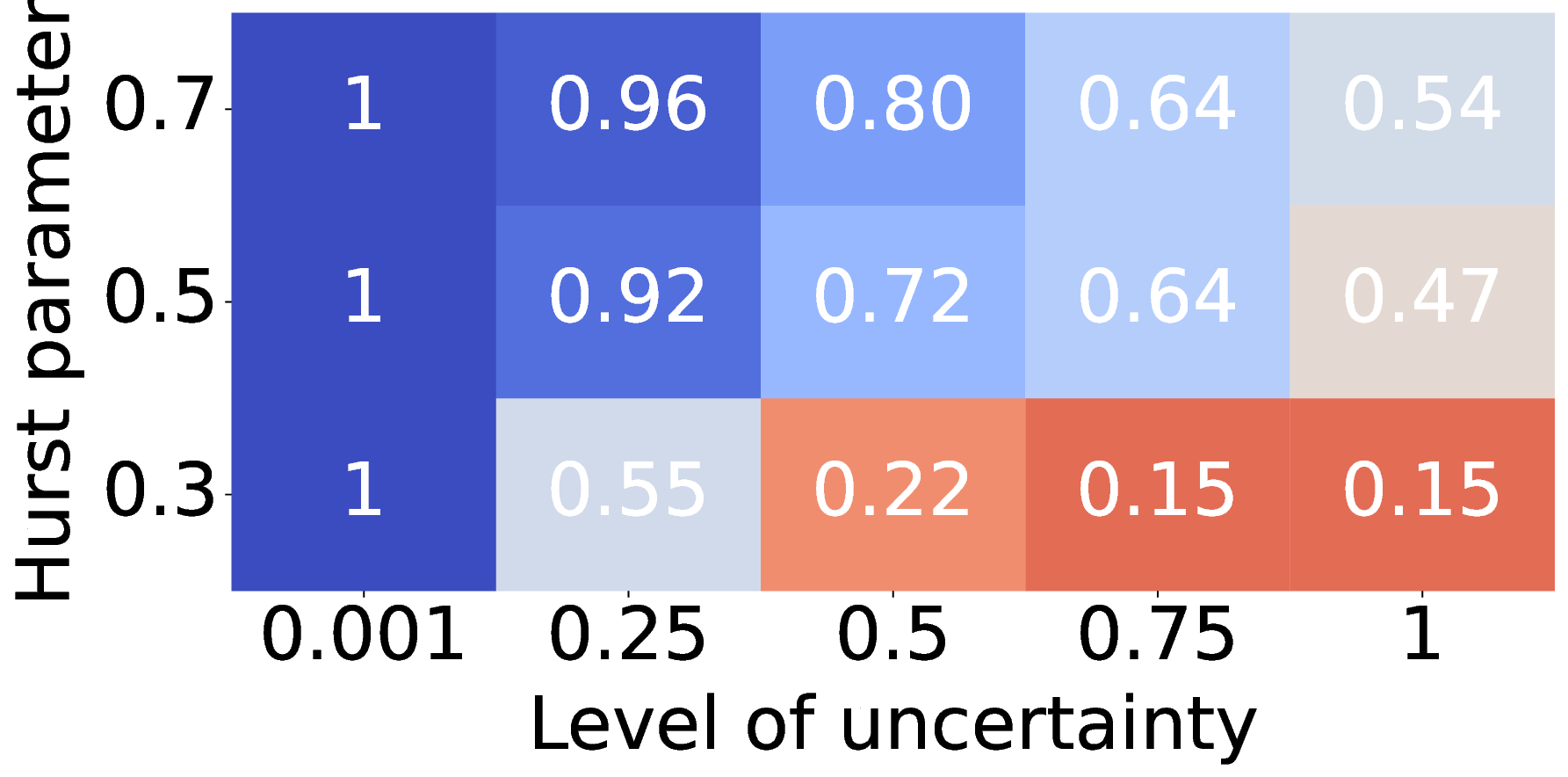}
        \caption{Probability that co-investment is profitable \deleted{under different Hurst parameter values and levels of uncertainty~$\alpha$}}
        \label{figure:11b}
    \end{subfigure}
    \caption{\added{Logarithmic derivatives of the variance of total utility and of the squared total expected payoff with respect to the Hurst parameter $H$ evaluated at $\alpha = 1$, together with the lower bound on the probability that co-investment is profitable and the empirical probability of profitability obtained from 100 realizations, across different Hurst parameter values (0.3, 0.5, and 0.7) and levels of uncertainty $\alpha \in [0,1]$}}
    \label{fig:lower_hurst}
\end{figure}

\section{Conclusion} \label{section: conclusion}
We proposed a co-investment scheme between an \textit{InP} and multiple SPs to support the deployment and maintenance of \added{M}EC infrastructure under uncertain traffic loads and hence revenues. We modeled the problem using stochastic coalitional game theory, introducing a framework to assess whether all players benefit from joining the grand coalition. 
\deleted{In the case of uncorrelated traffic, w}
\added{W}e \added{derived} a theoretical lower bound on the probability that the grand coalition is stable. Our results show that this lower bound is high when the SPs' \deleted{load} \added{revenue}s are of similar magnitude. \deleted{and when the investment period is sufficiently long to smooth out uncertainty. For \deleted{correlated} \added{dependent} traffic, modeled using fractional Brownian motion, we compute the empirical probability of profitability through simulations. The results indicate a high probability of profitability even under a high level of uncertainty.}
\added{In addition, we established a lower bound on the probability of profitable co-investment. Unlike stability, this profitability guarantee remains robust under strong temporal dependence, high levels of uncertainty, and pronounced asymmetries among SPs, and it increases with the investment period. We validate these findings through numerical experiments covering different numbers of SPs, multiple uncertainty levels, and diverse traffic models, including independent demand fluctuations, dependent traffic modeled via fractional Brownian motion, and real traffic traces.}
The \textit{InP}, as a veto player, faces minimal risk due to its important role in infrastructure deployment. Overall, the proposed \added{co-investment scheme} \deleted{strategy} enables \deleted{fair, and efficient infrastructure sharing,} \added{the deployment and maintenance of new large-scale infrastructure by allowing players to share costs and revenues and to achieve payoff maximization,} even in the presence of significant demand uncertainty.

\deleted{Our model can be extended to a dynamic framework. In a dynamic setting, players may join or leave the grand coalition over time, based on individual incentives or evolving traffic load. Such an extension would require modeling coalition formation as a multi-stage game, where payoff redistribution mechanisms ensure stability at each stage. This constitutes our future work perspective. }

\added{Several directions could extend the proposed scheme.
First, an important research avenue is to analyse how regulatory instruments, such as entry and exit rules, or access pricing, affect the long-term viability of the co-investment scheme.
A second extension is to adapt the model to multi-phase, evolving infrastructure projects, where investment, demand, and costs evolve across successive epochs. In such cases, players may repeatedly adjust their participation and investment as new technologies emerge or as demand conditions change. This too relates to dynamic participation, in which players may enter, remain in, or leave the co-investment at predetermined epochs. Their decisions would be guided by the evolution of demand, technology, and coalition stability and profitability, while investment decisions would need to be updated accordingly.
Another promising direction is to introduce risk-aware infrastructure dimensioning. The present scheme assumes players decide the capacity to be deployed so as to maximize expected payoffs. In practice, actors may exhibit risk aversion when making such decisions, especially under volatile demand. Incorporating risk measures, such as Conditional Value at Risk (CVaR), would allow players to better protect against adverse future outcomes and may affect (positively, we believe) coalition stability.
Finally, integrating machine-learning techniques for improved demand forecasting could enhance the predictive power of the model. Data-driven prediction methods, including reinforcement learning, can serve as inputs to the uncertainty parameters of the co-investment scheme.}

\bmhead{Acknowledgements}
This work was supported by the ANR under the France 2030 program, grant NF-NAI: ANR-22-PEFT-0003.

\section*{Declarations}
\textbf{Conflict of interest}
The authors declare that they have no conflicts of interest.

\begin{appendices}
\section{Proof of Proposition~\ref{proposition:pointless}}\label{appendix:pointless-propo}
The \textit{InP} is responsible for installing the infrastructure. Without the \textit{InP}, there is no infrastructure, and thus no revenues can be collected, due to Assumption~\ref{ass:nonnegative}.
Similarly, if no SP joins a coalition, no revenues are collected from end-users, according to Assumption~\ref{ass:inp-zero}, and thus such a coalition would be pointless.

\section{Proof of Proposition~\ref{prop:allocation}}\label{appendix:proof_allocation}
If the \textit{InP} does not belong to coalition~$\pazocal S$, then no resources can be installed, as only the \textit{InP} is responsible for infrastructure deployment (see Proposition~\ref{proposition:pointless}). In this case, the total installed capacity is zero and no player receives any allocation, which directly leads to~\eqref{eq:InPnot}.

If the \textit{InP} is included in~$\pazocal S$, then the coalition can install resources~$\deleted{C}\added{\mathbf C}_{\pazocal S}^*$ and apply the optimal dynamic allocation~$\{\deleted{h}\added{\mathbf h}_{i, \pazocal S}^{t*}\}$ computed from the solution of the optimization problem~\eqref{equation: optimal solution}. In this case, the installed capacity and the shares allocated to players in~$\pazocal S$ are equal to those determined by the global optimization.

\section{Proof of Proposition~\ref{prop:convexity-of-problem}}
\label{appendix:prop_convex}
We prove that the optimization problem~\eqref{equation: optimal solution}--\eqref{eq:constraint-positive} is convex \added{(in the sense of a concave maximization over a convex set)} by showing that:
(i) the objective function is concave, and 
(ii) the feasible set defined by the constraints is convex.

From~\eqref{equation: expected payoff}, the objective function in~\eqref{equation: optimal solution} consists of two parts: the sum of utility functions and a cost function \added{and can be written as}:
\begin{equation}
    \label{eq:objective-appendix}
    \sum_{t \in \pazocal T} \sum_{i \in \pazocal N}
    \bar{u}_i^t(\deleted{h}\added{\mathbf h}_i^t)
    - \text{Cost}(I, \deleted{C}\added{\mathbf C})
\end{equation}

Each utility function~$\bar{u}_i^t(\cdot)$ is concave \added{in $\mathbf h_i^t$} by assumption. Therefore, their sum is concave as well \added{in the collection of variables $\{\mathbf h_i^t\}_{i\in\pazocal S,\,t\in\pazocal T}$} \citep[Section~3.2]{Boyd2004}.
\deleted{The cost function~$\text{Cost}(I, C)$ is convex in~$C$. To express it in terms of the decision variables~$\{h_i^t\}$, we use the affine mapping:
\begin{equation}
C = \frac{1}{|\pazocal T|} \sum_{t \in \pazocal T} \sum_{i \in \pazocal N} h_i^t    
\end{equation}
Since the composition of a convex, non-decreasing function with an affine function preserves convexity~\citep[Page~2]{roberts1974convex}, 
it follows that~$\text{Cost}(I, C)$ is convex in~$\{h_i^t\}$. Therefore, the objective function~\eqref{equation: optimal solution}, which is a concave term (sum of utilities) minus a convex term (cost), is concave overall (thus, (i) holds). The constraints~\eqref{eq:constraint-sum}-\eqref{eq:constraint-positive} are affine, so the feasible set is convex~\citep[Section~4.2.1]{Boyd2004} (thus (ii) holds). Thus, the entire optimization problem is convex. Therefore, the Karush-Kuhn-Tucker (KKT) conditions can be applied to determine the optimal solution~\citep[Section~5.5.3]{Boyd2004}.}

\added{
By assumption, the cost function
\[
\text{Cost}(I,\cdot): \mathbb R_+^{K\times M} \to \mathbb R,
\quad
\mathbf C \mapsto \text{Cost}(I,\mathbf C)
\]
is convex in $\mathbf C$. Hence, the function $-\text{Cost}(I,\mathbf C)$ is concave in $\mathbf C$. Therefore, the objective function~\eqref{eq:objective-appendix} is the sum of two concave functions: the concave sum of expected utilities and the concave function $-\text{Cost}(I,\mathbf C)$. It follows that the objective in~\eqref{equation: optimal solution} is concave in the joint decision variables $(\mathbf C,\{\mathbf h_i^t\}_{i\in\pazocal S,\,t\in\pazocal T})$. This establishes~(i).}

\added{Since all constraints~\eqref{eq:constraint-sum}--\eqref{eq:constraint-positive} are affine functions of the decision variables $(\mathbf C,\{\mathbf h_i^t\})$, the feasible set is convex~\citep[Section 4.2.1]{Boyd2004}. This establishes~(ii).}

\added{
We have shown that the objective function of Problem~\eqref{equation: optimal solution}--\eqref{eq:constraint-positive} is concave and that its feasible set is convex. Therefore, this problem is a concave maximization over a convex set, and is thus a convex optimization problem in the sense of convex analysis. Thus, the Karush--Kuhn--Tucker (KKT) conditions are both necessary and sufficient for optimality~\citep[Section~5.5.3]{Boyd2004}. Hence, the optimal solution of~\eqref{equation: optimal solution}--\eqref{eq:constraint-positive} can be characterized via its KKT conditions.}

\section{Proof of Theorem~\ref{thm:convexity-of-game}}
\label{appendix:th_nominal}
Thanks to~Proposition~\ref{prop:allocation} and Assumption~\ref{ass:nonnegative} and~\eqref{eq:cost-zero}, \eqref{equation:v_bar_s} becomes:
\begin{equation}
\Bar{v}(\pazocal S) = 
\begin{cases}
\displaystyle
\sum\limits_{t \in \pazocal T} \sum\limits_{i \in \pazocal S} 
\bar{u}_i^t(
\deleted{h}
\added{\mathbf h}_{i, \pazocal S}^{t*}
)
- \text{Cost}(I,
\deleted{C}\added{\mathbf C}^*_{\pazocal S}
), & \text{if  \textit{InP} } \in \pazocal S \\[8pt]
0, & \text{otherwise}
\end{cases}
\label{equation: Fc}
\end{equation}
We first show that the nominal value function~$\bar{v}(\cdot)$ is monotonically non-decreasing, i.e., for all coalitions $\pazocal R \subseteq \pazocal S \subseteq \pazocal N$, it holds that:
\begin{equation}
    \bar{v}(\pazocal R) \leq \bar{v}(\pazocal S)
    \label{equation:v_monot}
\end{equation}
To prove~\eqref{equation:v_monot}, we distinguish three cases based on the presence of the \textit{InP} in the coalitions:

\begin{itemize}
    \item Suppose $\textit{InP} \notin \pazocal S$. Since $\pazocal R \subseteq \pazocal S$, it follows that $\textit{InP} \notin \pazocal R$ as well. Then, via~\eqref{equation: Fc}, we have $\bar{v}(\pazocal R) = \bar{v}(\pazocal S) = 0$, so~\eqref{equation:v_monot} holds.
    
    \item Suppose $\textit{InP} \notin \pazocal R$ but $\textit{InP} \in \pazocal S$. Then, again via~\eqref{equation: Fc}, we have $\bar{v}(\pazocal R) = 0$, while $\bar{v}(\pazocal S) \ge 0$. The inequality in~\eqref{equation:v_monot} is trivially satisfied.
    
    \item Suppose $\textit{InP} \in \pazocal R \subseteq \pazocal S$. In this case, define a suboptimal solution for coalition~$\pazocal S$ by using the optimal capacity~
    \deleted{$C$}\added{matrix $\mathbf C$}$_{\pazocal R}^*$
    and resource allocation \added{matrices}~
    $\{\deleted{h}\added{\mathbf h}_{i, \pazocal R}^{t*}\}_{t \in \pazocal T}$
    computed for~$\pazocal R$, via~\eqref{equation: optimal solution}–\eqref{eq:constraint-positive}, for all players in~$\pazocal R$, and assigning \added{the} zero resource \added{matrices} to all players in~$\pazocal S \setminus \pazocal R$. Let the value of this suboptimal solution be denoted by~$\bar{v}_{\text{sub}}(\pazocal S)$. By construction, this gives:
    \begin{equation}
        \bar{v}_{\text{sub}}(\pazocal S) = \bar{v}(\pazocal R)    
    \end{equation}
    Since~$\bar{v}(\pazocal S)$ is the optimal value over all feasible allocations for coalition~$\pazocal S$, and the allocation used to define~$\bar{v}_{\text{sub}}(\pazocal S)$ is one such feasible solution, we have:
    \begin{equation}
        \bar{v}(\pazocal S) \ge \bar{v}_{\text{sub}}(\pazocal S) = \bar{v}(\pazocal R)
    \end{equation}
\end{itemize}

Therefore, the value function~$\bar{v}(\cdot)$ is monotonically non-decreasing (\eqref{equation:v_monot} holds).

We now prove that there is an allocation that satisfies the conditions of the core, namely efficient,
\(
\sum_{i\in \pazocal N} x_i=\bar v(\pazocal N)
\)
and stable,
\(
\sum_{i\in \pazocal S} x_i \ge \bar v(\pazocal S), \qquad \forall \pazocal S \subseteq \pazocal N.
\)

The idea is to construct an allocation in which the whole payoff of the grand coalition is assigned to the \textit{InP}, while all the other players receive zero. This allocation is introduced only to prove that the core is non-empty; it is not intended to be the payoff-sharing rule used in practice.

Consider the allocation
\(
\mathbf{x}=(x_i)_{i\in \pazocal N}\in\mathbb{R}^{|\pazocal N|}
\)
defined by
\[
x_{\textit{InP}}=\bar v(\pazocal N), \qquad x_i=0,\quad \forall i\in \pazocal N\setminus\{\textit{InP}\}.
\]
This allocation is efficient because
\(
\sum_{i\in \pazocal N} x_i=\bar v(\pazocal N).
\)

Let \(\pazocal S\subseteq \pazocal N\) and we proceed in two cases depending on the presence of the \textit{InP} in the coalition.
\begin{itemize}
    \item If \(\textit{InP}\notin \pazocal S\), then by~\eqref{equation: Fc},
\(
\bar v(\pazocal S)=0.
\)
Therefore,
\(
\sum_{i\in \pazocal S} x_i = 0 = \bar v(\pazocal S).
\)
In other words, coalitions that exclude the \textit{InP} do not lose anything under this allocation, since they would obtain zero even outside the grand coalition.

    \item If \(\textit{InP}\in \pazocal S\), then
\(
\sum_{i\in \pazocal S} x_i = x_{\textit{InP}}=\bar v(\pazocal N).
\)
Since \(\pazocal S\subseteq \pazocal N\) and \(\bar v(\cdot)\) is monotone non-decreasing by~\eqref{equation:v_monot}, we have
\(
\bar v(\pazocal S)\le \bar v(\pazocal N).
\)
Hence, even under this radical allocation, no coalition containing the \textit{InP} can obtain more by deviating from the grand coalition. Therefore,
\(
\sum_{i\in \pazocal S} x_i=\bar v(\pazocal N)\ge \bar v(\pazocal S).
\)
\end{itemize}

Therefore,
\(
\sum_{i\in \pazocal S} x_i \ge \bar v(\pazocal S), \qquad \forall \pazocal S\subseteq \pazocal N.
\)

Thus, \(\mathbf{x}\) belongs to the core of the nominal game, i.e.,
\(
\mathbf{x}\in \mathrm{core}(\pazocal N,\bar v).
\)
Hence,
\(
\mathrm{core}(\pazocal N,\bar v)\neq \emptyset.
\)
This concludes the proof.

\section{Proof of Lemma~\ref{lemma: tightness-bound}}
\label{appendix:delta_hut}
The stability value~$\sigma(\pazocal{N}, \bar{v})$ is given by the following optimization problem~[Page 7]\citep{doan2014robust}:  
Given a parameter $\epsilon \geq 0$
\begin{equation}
\sigma(\pazocal{N}, \bar{v}) = \max_{\mathbf{x}, \epsilon} -\epsilon
\label{equation:max_sigma}
\end{equation}
subject to:
\begin{align}
\sum_{i \in \pazocal{N}} x_i &= \bar{v}(\pazocal{N}) \label{eq:efficiency_constraint} \\
\sum_{i \in \pazocal{S}} x_i - \bar{v}(\pazocal{S}) &\geq -\epsilon \quad \forall \pazocal{S} \subsetneq \pazocal{N}, \pazocal{S} \neq \emptyset \label{eq:core_constraint}
\end{align}
Here, $\mathbf{x} = (x_1, x_2, \dots, x_n)$ represents a payoff vector, where $x_i$ is the allocation assigned to player~$i$.

Constraint~\eqref{eq:core_constraint} can be rewritten as:
\begin{equation}
-\epsilon \leq \min_{\substack{\pazocal{S} \subsetneq \pazocal{N} \\ \pazocal{S} \neq \emptyset}} \left( \sum_{i \in \pazocal{S}} x_i - \bar{v}(\pazocal{S}) \right)
\label{eq:core_rewr}
\end{equation}
Since the objective is to maximize $-\epsilon$~(\eqref{equation:max_sigma}), the optimal value is obtained when the inequality in~\eqref{eq:core_rewr} holds with equality,
which leads to the following reformulation of~\eqref{equation:max_sigma}-\eqref{eq:core_constraint}:
\begin{equation}
\sigma(\pazocal{N}, \bar{v}) = \max_{\mathbf{x}} \min_{\substack{\pazocal{S} \subsetneq \pazocal{N} \\ \pazocal{S} \neq \emptyset}} \left( \sum_{i \in \pazocal{S}} x_i - \bar{v}(\pazocal{S}) \right)
\label{equation: sigma}
\end{equation}
subject to~\eqref{eq:efficiency_constraint}, where we maximize $\sigma(\pazocal N, \bar{v})$ by choosing the best vector $\mathbf{x}$, ensuring that the minimum over all coalitions $\pazocal S$ is as large as possible.

Now, we define:
\begin{equation}
\hat{\sigma}(\pazocal{N}, \bar{v}) \defeq \min_{\substack{\pazocal{S} \subsetneq \pazocal{N} \\ \pazocal{S} \neq \emptyset}} \left( \sum_{i \in \pazocal{S}} \bar{x}_i - \bar{v}(\pazocal{S}) \right)
\label{eq:sigma_hut}
\end{equation}
where~$\bar{x}_i$ is the expected payoff of player $i$.
It follows directly from the definition~\eqref{eq:sigma_hut} that:
\begin{equation}
\hat{\sigma}(\pazocal{N}, \bar{v}) \leq \sigma(\pazocal{N}, \bar{v})
\label{eq:sigma_bound}
\end{equation}

We then define:
\begin{equation}
y_{\hat{\sigma}}(\pazocal{S}) \defeq
\begin{cases}
\frac{\bar{v}(\pazocal{S}) + \hat{\sigma}(\pazocal{N}, \bar{v})}{\bar{v}(\pazocal{N})}, & \text{if } \pazocal{S} \subsetneq \pazocal{N} \\
1, & \text{if } \pazocal{S} = \pazocal{N}
\end{cases}
\end{equation}

and
\begin{equation}
\hat{\delta} \defeq \min\left( \frac{\bar{v}(\pazocal{N})}{|\pazocal{N}|}, \frac{\hat{\sigma}(\pazocal{N}, \bar{v})}{\displaystyle\max_{\substack{\pazocal{S} \subsetneq \pazocal{N} \\ \pazocal{S} \neq \emptyset}} ( |\pazocal{S}| + (|\pazocal{N}| - 2|\pazocal{S}|) \cdot y_{\hat{\sigma}}(\pazocal{S}) )} \right)
\end{equation}
%From~\eqref{eq:sigma_bound}, it follows that:
%\begin{equation}
%y_{\hat{\sigma}}(\pazocal{S}) \leq y_{\sigma}(\pazocal{S}) \quad \forall \pazocal{S} \subseteq \pazocal{N}
%\label{eq:y_hut_less}
%\end{equation}

%Furthermore, we observe that~\eqref{equation: sigma} can be bounded by considering an arbitrary coalition~$\pazocal S$. That is:
%\begin{equation}
%\sigma(\pazocal{N}, \bar{v}) \leq \max_{\mathbf{x}} \left( \sum_{i \in \pazocal{S}} x_i - \bar{v}(\pazocal{S}) \right) \quad \text{for any } \pazocal{S} \subsetneq \pazocal{N}, \pazocal{S} \neq \emptyset
%\end{equation}
%which implies:
%\begin{equation}
%\sigma(\pazocal{N}, \bar{v}) \leq \sum_{i \in \pazocal{S}} x_i - \bar{v}(\pazocal{S})
%\end{equation}
Similarly, from~\citep[Theorem 2]{doan2014robust}, $y_{\sigma}(\pazocal{S})$ is defined as:
\begin{equation} 
y_{\sigma}(\pazocal{S}) \defeq
\begin{cases}
\frac{\bar{v}(\pazocal{S}) + \sigma(\pazocal{N}, \bar{v})}{\bar{v}(\pazocal{N})}, & \text{if } \pazocal{S} \subsetneq \pazocal{N} \\
1, & \text{if } \pazocal{S} = \pazocal{N}
\end{cases}
\label{eq:y_doan}
\end{equation}
and
\begin{equation}
\delta \defeq \min\left( \frac{\bar{v}(\pazocal{N})}{|\pazocal{N}|}, \frac{\sigma(\pazocal{N}, \bar{v})}{\displaystyle\max_{\substack{\pazocal{S} \subsetneq \pazocal{N} \\ \pazocal{S} \neq \emptyset}} ( |\pazocal{S}| + (|\pazocal{N}| - 2|\pazocal{S}|) \cdot y_{\sigma}(\pazocal{S}) )} \right)
\label{eq:delta_doan}
\end{equation}

From the definition of $\delta$ in~\eqref{eq:delta_doan}, and by directly substituting the expression for $y_{\sigma}(\pazocal{S})$ from~\eqref{eq:y_doan}, we define the function:
\begin{equation}
f(\sigma(\pazocal{N}, \bar{v})) = \frac{\sigma(\pazocal{N}, \bar{v})}{\displaystyle\max_{\substack{\pazocal{S} \subsetneq \pazocal{N} \\ \pazocal{S} \neq \emptyset}} \left( |\pazocal{S}| + (|\pazocal{N}| - 2|\pazocal{S}|) \cdot \left( \frac{\bar{v}(\pazocal{S}) + \sigma(\pazocal{N}, \bar{v})}{\bar{v}(\pazocal{N})} \right) \right)}
\label{eq:f_sigma}
\end{equation}
Dividing both the numerator and denominator in~\eqref{eq:f_sigma} by $\sigma(\pazocal{N}, \bar{v}) > 0$, we obtain:
\begin{equation}
    f(\sigma(\pazocal{N}, \bar{v})) = \frac{1}{\displaystyle\max_{\substack{\pazocal{S} \subsetneq \pazocal{N} \\ \pazocal{S} \neq \emptyset}} \left( \frac{|\pazocal{S}|}{\sigma(\pazocal{N}, \bar{v})} + \frac{(|\pazocal{N}| - 2|\pazocal{S}|)}{\bar{v}(\pazocal{N})} \cdot \left( \frac{\bar{v}(\pazocal{S})}{\sigma(\pazocal{N}, \bar{v})} + 1 \right) \right)}
    \label{eq:f_divide_sig}
\end{equation}

It is clear that the denominator in~\eqref{eq:f_divide_sig} decreases as \(\sigma(\pazocal N, \bar{v})\) increases, implying~\eqref{eq:f_sigma} is increasing. Therefore, since \(\hat{\sigma} \leq \sigma\)~\eqref{eq:sigma_bound}, we conclude:
\begin{equation}
\hat{\delta} \leq \delta
\end{equation}
Thus, the lemma is proved.

\section{\added{Proof of Lemma~\ref{lem:log_derivative}}}\label{appendix:lemma_parametrize}
\added{
From Theorem~\ref{thm:prof}, the lower bound on the probability of
profitable co-investment is given by:
\begin{equation}
\nu_{\pazocal N}^{\mathrm{LB}[H]}
=
\frac{1}{
1
+
\dfrac{
\mathrm{Var}_{\omega}(U^{[H]})
}{
\bar v^{[H]}(\pazocal N)^2
}
},
\label{eq:nu_LB_H}
\end{equation}
where
\(
\bar v^{[H]}(\pazocal N)
=
\mathbb{E}_{\omega}[U^{[H]}]
-
\text{Cost}(I,\mathbf C^{[H]}),
\qquad
\bar v^{[H]}(\pazocal N) > 0
\)}

\added{
Define the ratio
\begin{equation}
R(H)
\;\defeq\;
\frac{
\mathrm{Var}_{\omega}(U^{[H]})
}{
\bar v^{[H]}(\pazocal N)^2
}
\label{eq:R_H}
\end{equation}
The expression~\eqref{eq:nu_LB_H} shows that
$\nu_{\pazocal N}^{\mathrm{LB}[H]}$
is a decreasing function of $R(H)$.
Therefore, to prove that
$\nu_{\pazocal N}^{\mathrm{LB}[H]}$
is non-decreasing in $H$,
it is sufficient to show that $R(H)$ is non-increasing in $H$.}

\added{Taking the logarithmic derivative of $R(H)$ with respect to $H$ yields
\begin{align}
\frac{\mathrm d}{\mathrm d H} \log R(H)
&=
\frac{\mathrm d}{\mathrm d H}
\log\!\left(
\frac{
\mathrm{Var}_{\omega}(U^{[H]})
}{
\bar v^{[H]}(\pazocal N)^2
}
\right)
\nonumber\\[0.4em]
&=
\frac{\mathrm d}{\mathrm d H}
\left(
\log \bigl(\mathrm{Var}_{\omega}(U^{[H]})\bigr)
-
\log \bigl(\bar v^{[H]}(\pazocal N)^2\bigr)
\right)
\nonumber\\[0.4em]
&=
\frac{\mathrm d}{\mathrm d H}
\log \bigl(\mathrm{Var}_{\omega}(U^{[H]})\bigr)
-
\frac{\mathrm d}{\mathrm d H}
\log \bigl(\bar v^{[H]}(\pazocal N)^2\bigr)
\nonumber\\[0.4em]
&=
\frac{
\frac{\mathrm d}{\mathrm d H}
\mathrm{Var}_{\omega}(U^{[H]})
}{
\mathrm{Var}_{\omega}(U^{[H]})
}
-
\frac{
\frac{\mathrm d}{\mathrm d H}
\bigl(\bar v^{[H]}(\pazocal N)^2\bigr)
}{
\bar v^{[H]}(\pazocal N)^2
}
\label{eq:log_R}
\end{align}
By assumption~\eqref{eq:log_derivative_condition},
the right-hand side of~\eqref{eq:log_R} is non-positive.
Hence,
\begin{equation}
    \frac{\mathrm d}{\mathrm d H} \log R(H) \le 0
\end{equation}
which implies that $R(H)$ is non-increasing in $H$.
Since $\nu_{\pazocal N}^{\mathrm{LB}[H]}$
is a decreasing function of $R(H)$,
it follows that the lower bound on the probability of profitable co-investment~$\nu_{\pazocal N}^{\mathrm{LB}[H]}$
is non-decreasing in $H$.
This completes the proof.}

\section{Proof of Proposition~\ref{proposition:grand_stable}}\label{appendix:proof_prop_poistivepayoff}
If the grand coalition~$\pazocal N$ is stable in the stochastic game~$(\pazocal N, \pazocal V)$, by Definition~\ref{def:rcore}, this means that there exists an allocation policy~$\vec{y} = (y_i)_{i \in \pazocal N}$ such that:
\begin{equation}
    \begin{aligned}
        \sum_{i \in \pazocal N} y_i = 1 \quad, \qquad \sum_{i \in \pazocal S} v_\omega(\pazocal N) \cdot y_i \geq v_\omega(\pazocal S), \quad \forall \omega \in \Omega, \pazocal S \subseteq \pazocal N
    \end{aligned}
\end{equation}

We define the individual payoff of player~$i$ in realization~$\omega$ as:
\begin{equation}
    x_{i,\omega} \defeq y_i \cdot v_\omega(\pazocal N) , \quad \forall i \in \pazocal N
    \label{equation:ind_payoff}
\end{equation}

Since the robust core conditions apply to all coalitions, including singleton coalitions \( \{i\} \subseteq \pazocal{N} \), it follows that:
\begin{equation}
    y_i \geq \frac{v_\omega(\{i\})}{v_\omega(\pazocal N)}, \quad \forall i \in \pazocal N
    \label{equation:y_payoff}
\end{equation}
In our case, for any player~$i \in \pazocal N$, the singleton payoff~$v_\omega(\{i\})$ is zero; this holds for both the \textit{InP} and the SPs, as established in Proposition~\ref{proposition:pointless}. Moreover, since~$v_\omega(\pazocal N) > 0$ (by assumption), it follows from~\eqref{equation:y_payoff} that~$y_i \geq 0$ for all~$i \in \pazocal N$.
Therefore, we obtain:
\begin{equation}
x_{i,\omega} \defeq y_i \cdot v_\omega(\pazocal N) \geq 0, \quad \forall \omega \in \Omega, i \in \pazocal N
\end{equation}

Thus, the co-investment is profitable for all players in every realization~$\omega \in \Omega$ (see Definition~\ref{definition:profitable_coinv}), i.e.,
\begin{equation}
    \mathbb{P}\left(\ x_{i,\omega} \geq 0, \forall i \in \pazocal N \right) = 1
\end{equation}
which proves the Proposition.
\section{Proof of Proposition~\ref{proposition:profitable_in_stable}}\label{appendix:prof_in_stable}
Let~$\Omega_{\mathrm{stable}} \subseteq \Omega$ be the set of realizations in which the grand coalition is stable, i.e., where there exists a payoff allocation policy~$\vec{y}$ satisfying the robust core conditions (Definition~\ref{def:rcore}).

By Proposition~\ref{proposition:grand_stable}, if~$\omega \in \Omega_{\mathrm{stable}}$, then~$x_{i,\omega} \geq 0$~\eqref{equation:ind_payoff} for all~$i \in \pazocal N$. Thus:
\begin{equation}
    \Omega_{\mathrm{stable}} \subseteq \Omega_{\mathrm{profitable}} \defequiv \left\{ \omega \in \Omega \mid \forall i \in \pazocal N,\ x_{i,\omega} \geq 0 \right\}
\end{equation}
Taking the probability of both sets gives:
\begin{equation}
    \mathbb{P}(\Omega_{\mathrm{stable}}) \leq \mathbb{P}(\Omega_{\mathrm{profitable}})
\end{equation}
which proves the claim.
\section{Proof of Corollary~\ref{corollary:v_lb}}\label{appendix:proof_coro}
By definition, $\nu^{\text{LB}}$ is a lower bound on the probability that the grand coalition is stable, i.e.,
\begin{equation}
    \mathbb{P}\left( \text{Grand coalition is stable} \right) \ge \nu^{\text{LB}}
    \label{equation:lower_stable}
\end{equation}
From Proposition~\ref{proposition:profitable_in_stable}, we have:
\begin{equation}
    \mathbb{P}\left( x_{i,\omega} \geq 0,\ \forall i \in \pazocal N \right) \ge \mathbb{P}\left( \text{Grand coalition is stable} \right) 
    \label{equation:stable_prof}
\end{equation}
Combining~\eqref{equation:lower_stable} and \eqref{equation:stable_prof}  yields:
\begin{equation}
    \mathbb{P}\left( x_{i,\omega} \geq 0,\ \forall i \in \pazocal N \right) \ge \nu^{\text{LB}}
\end{equation}
which proves the corollary.

\section{Proof of Lemma~\ref{lemma:sum_reward}}
\label{appendix:budget-balance}
We start from the definition of the reward for each player~$i \in \pazocal N$~\eqref{equation:reward}:
\begin{equation}
    r_{i,\omega} = x_{i,\omega} + p_i
\end{equation}

Summing over all players~$i \in \pazocal N$, we get:
\begin{equation}
    \sum_{i \in \pazocal N} r_{i,\omega} = \sum_{i \in \pazocal N} x_{i,\omega} + \sum_{i \in \pazocal N} p_i
    \label{equation:summing}
\end{equation}

By the definition in (Section~\ref{sec:shapley}), the total payoff distributed to players satisfies:
\begin{equation}
    \sum_{i \in \pazocal N} x_{i,\omega}
    =
    v_{\omega}(\pazocal N)
    \overset{\eqref{eq: generic expression of payoff-realizations_s}, \eqref{eq: generic expression of payoff-realizations-complete}}{=}
    \sum_{i \in \pazocal N} \sum_{t \in \pazocal T}
    u_{i,\omega}^t(
    \deleted{h}\added{\mathbf h}_{i, \pazocal N}^{t*}
    )
    -
    \text{Cost}(I,
    \deleted{C}\added{\mathbf C}_{\pazocal N}^*
    ),
    \quad \forall \omega \in \Omega
\end{equation}

From~\eqref{eq:cost-sharing}, we also have:
\begin{equation}
    \sum_{i \in \pazocal N} p_i
    =
    \text{Cost}(I,
    \deleted{C}\added{\mathbf C}_{\pazocal N}^*
    )
    \label{equation:payent}
\end{equation}

Substituting~\eqref{equation:payent} into~\eqref{equation:summing}:
\begin{equation}
    \sum_{i \in \pazocal N} r_{i,\omega}
    =
    \left(
    \sum_{i \in \pazocal N} \sum_{t \in \pazocal T}
    u_{i,\omega}^t(
    \deleted{h}\added{\mathbf h}_{i, \pazocal N}^{t*}
    )
    -
    \text{Cost}(I,
    \deleted{C}\added{\mathbf C}_{\pazocal N}^*
    )
    \right)
    +
    \text{Cost}(I,
    \deleted{C}\added{\mathbf C}_{\pazocal N}^*
    )
\end{equation}

Therefore:
\begin{equation}
    \sum_{i \in \pazocal N} r_{i,\omega}
    =
    \sum_{i \in \pazocal N} \sum_{t \in \pazocal T}
    u_{i,\omega}^t(
    \deleted{h}\added{\mathbf h}_{i, \pazocal N}^{t*}
    )
\end{equation}
This completes the proof.

\section{Proof of Lemma~\ref{lemma: mean_s}}\label{appendix: mean}
The expected amount of requests~\eqref{equation: expectedload} received by SP~$i$ at time slot~$t$ is given by:
\begin{equation}
    \bar{l}_{i}^t \defeq \bar{\lambda}_{i}^t \cdot \Delta
    \label{equation:load_exp}
\end{equation}
where 
\begin{equation}
\begin{aligned}
    \bar{\lambda}_{i}^t  \overset{\eqref{equation:lambda_bar}}{\defeq}\mathbb{E}_{\omega}[\lambda_{i, \omega}^t] \overset{\eqref{equation: request_rate_general}}{=} d_i^t \cdot \left[ (1 - \alpha) \cdot \mathbb{E}_\omega \left[ S_{i, \omega}^t \right] + \alpha \cdot \mathbb{E}_\omega \left[ S_{i, \omega}^t \right] \right] 
    = d_i^t \cdot \mathbb{E}_\omega \left[ S_{i, \omega}^t \right] %\\
    %&= d_i^t \cdot \frac{t^H}{\sqrt{2\pi}}
\end{aligned}
    \label{equation:lambda_exp}
\end{equation}
We now compute the expected value of the stochastic process~$S_{i, \omega}^t$, which is defined in~\eqref{equation: stochastic_term_f} as~$S_{i, \omega}^t \defeq \max(0, f_{i, \omega}^t)$.

The term $f_{i, \omega}^t$ is a fractional Brownian motion, as described in (Section~\ref{section: brownian}). The mean of~ $f_{i, \omega}^t$ is zero, and its standard deviation is given by~$\sigma_i^t = t^H$~\citep[Page 6]{dieker2004simulation}.

The expected value of $S_{i, \omega}^t$ is:
\begin{equation}
    \mathbb{E}[S_{i, \omega}^t] = \int_0^\infty x \cdot \textit{pdf}_{f_{i, \omega}^t}(x) \, dx
    \label{eq: integral_expected_s}
\end{equation}
where~\( x \) is the possible values that~\( f_{i, \omega}^t \) can take. Since~\( S_{i, \omega}^t~\defeq~\max(0, f_{i, \omega}^t) \), only the non-negative values of \( f_{i, \omega}^t \) contribute to the expectation. Therefore, the integral is taken over \( x \geq 0 \), which accounts for all non-negative realizations of \( f_{i, \omega}^t \).

Since $f_{i, \omega}^t$ is Gaussian at each time slot \( t \)~\citep[Section 1]{enriquez2004simple}, %\citep[Section 2.1, 2.2]{rangarajan2008processes}, 
the probability density function~$\textit{pdf}_{f_{i, \omega}^t}$ is given by~\citep[Section 3.3]{bertsekas2008introduction}:

\begin{equation}
\textit{pdf}_{f_{i, \omega}^t}(x) = \frac{1}{\sqrt{2\pi} \sigma_i^t} \exp\left(-\frac{x^2}{2 (\sigma_i^t)^2}\right)
\label{eq: pdf}
\end{equation}
Substituting~\eqref{eq: pdf} into~\eqref{eq: integral_expected_s}:
\begin{equation}
    \begin{aligned}
\mathbb{E}[S_{i,\omega}^t]
&=
\int_0^\infty
   x \;
   \frac{1}{\sqrt{2\pi}\,\sigma_i^t}
   \exp\!\Bigl(-\tfrac{x^2}{2\,(\sigma_i^t)^2}\Bigr)\,dx
\\[6pt]
&=
\frac{1}{\sqrt{2\pi}\,\sigma_i^t}
   \int_0^\infty
     x\;
     \exp\!\Bigl(-\tfrac{x^2}{2\,(\sigma_i^t)^2}\Bigr)\,dx
\\[6pt]
& \text{Let } 
   \alpha = \frac{1}{2\,(\sigma_i^t)^2}
   \quad
   \text{Then:}
\\[2pt]
&=
\frac{1}{\sqrt{2\pi}\,\sigma_i^t}
   \int_0^\infty
     x\;\exp\!\bigl(-\alpha\,x^2\bigr)\,dx
\\[6pt]
%& \text{Recall the standard formula:}
%\\
%&\quad
 % \int_0^\infty
  %   x^n\,e^{-\alpha x^2}\,dx
  %=
  %\frac{1}{2}\,\alpha^{-\frac{n+1}{2}}\,\Gamma\!\bigl(\tfrac{n+1}{2}\bigr)
  %\quad
  %(\alpha>0)
%\\
%& \text{Here, set } n=1 \implies \Gamma\!(1)=1,
%\\
%&\quad
 % \int_0^\infty
  %  x\,e^{-\alpha x^2}\,dx
  %\;=\;
  %\frac{1}{2}\,\alpha^{-1}\,\Gamma\!(1)
  %\;=\;
  %\frac{1}{2\,\alpha}
%\\[6pt]
& \text{Here, make the change of variable } t = x^2
\\[6pt]
& \therefore 
  \frac{1}{\sqrt{2\pi}\,\sigma_i^t}
    \int_0^\infty
      x\,e^{-\alpha x^2}\,dx
  \;=\;
  \frac{1}{\sqrt{2\pi}\,\sigma_i^t}
    \times
    \frac{1}{2\,\alpha}
\\
&=\;
  \frac{1}{\sqrt{2\pi}\,\sigma_i^t}
    \times
    \frac{1}{2 \times \tfrac{1}{2\,(\sigma_i^t)^2}}
  \;=\;
  \frac{\sigma_i^t}{\sqrt{2\pi}}
\end{aligned}
\label{equation:expected_stochastic}
\end{equation}
Substituting~\eqref{equation:expected_stochastic} into~\eqref{equation:lambda_exp}, and subsequently~\eqref{equation:lambda_exp} into~\eqref{equation:load_exp}, yields the result stated in the Lemma.

\section{\added{Dataset Description}}\label{appendix:dataset_desc}
\added{To further enhance the realism of our simulations and address concerns related to real-world validation, we incorporate a publicly available dataset released by Telecom Italia as part of the Big Data Challenge 2014. The dataset is open-access and can be obtained from the Harvard Dataverse repository.\footnote{\url{https://doi.org/10.7910/DVN/EGZHFV}} It provides Call Detail Records (CDRs) collected from the cellular network infrastructure in the city of Milan, Italy, during the months of November and December 2013. Each record represents the mobile activity over 10-minute intervals for a specific geographical grid cell within the urban area.}

\added{The dataset includes three types of user interactions: SMS (sent and received), voice calls (incoming and outgoing), and Internet connections. Each row contains a timestamp (in milliseconds), a spatial identifier (square ID), and the corresponding load components (SMS-in, SMS-out, call-in, call-out, and Internet usage). These values reflect how user activity is distributed across time and space. To estimate the request load relevant for MEC scenarios, we focus on Internet usage.}

\added{The city is partitioned into a grid of spatial cells (identified by square IDs), and user activity is recorded every 10 minutes. In our study, we aggregate these records into hourly intervals and compute the load for each square. The resulting spatial cells are then partitioned among SPs. Although the original dataset covers only two months of mobile activity, it exhibits consistent diurnal and weekly patterns that are representative of user behavior. To obtain a traffic trace that spans the full investment period (e.g., one year), we repeat each 2-month cycle and multiply each repetition by a random factor drawn from a uniform distribution U(0,2). This allows us to generate a long multi-month sequence while preserving the temporal structure of the original dataset (daily peaks and weekly patterns).}

\added{To compute the expected traffic load, which is required to determine the optimal investment capacity and shares (Assumption \ref{assumption: optimal}), we proceed as follows. Each SP is associated with a fixed number of squares. We assume that each square corresponds to one realization of the traffic process; thus, the expected load at each time slot (e.g., hourly) is computed as the empirical average across the corresponding realizations.}

\section{\added{Proof of Proposition~\ref{prop:multi}}}
\label{appendix:multi}
\added{We now discuss how the Karush–Kuhn–Tucker (KKT) conditions can be used to characterize the optimal solution in the case where the
coalition installs a capacity matrix
$\mathbf C = \{C_{k,m}\}_{1\le k\le K,\,1\le m\le M}$ and each SP receives,
at each time slot, an allocation matrix
$\mathbf h_i^t = \{h_{i,k,m}^t\}_{1\le k\le K,\,1\le m\le M}$.
In this setting, the expected utility takes the form
\begin{equation}
    \bar u_i^t(\mathbf h_i^t)
    =
    \beta_i \bar f_i^t
    \left(
        1 - 
        \exp^
            - \sum_{k=1}^K \sum_{m=1}^M \xi_{k,m} h_{i,k,m}^t
    \right),
    \label{eq:multi-u-appendix}
\end{equation}
where the term inside the exponential is a weighted sum of all allocated
resources. To simplify the notation, we define the aggregate effective allocation
\begin{equation}
    X_i^t
    \defeq
    \sum_{k=1}^K \sum_{m=1}^M \xi_{k,m} h_{i,k,m}^t
    \label{eq:def-X-appendix}
\end{equation}
Starting from the optimization problem~\eqref{equation: optimal solution}--\eqref{eq:constraint-positive}, and substituting the cost function~\eqref{equation: cost} and the utility function~\eqref{eq:multi-u-appendix}, we obtain:
\begin{equation}
    \begin{aligned}
    \max_{\mathbf C, \, \{\mathbf H^t\}_{t\in\pazocal T}}
    \quad
    &
    \sum_{t \in \pazocal{T}} 
    \sum_{i \in \pazocal S} 
    \beta_i \bar f_i^t 
    \left(
        1 - 
        e^{- X_i^t}
    \right)
    \\
    &\hspace{1.5em}
    -
    \sum_{k=1}^K \sum_{m=1}^M
    \left(
        d_{k,m}
        + \int_{0}^{I} D_{k,m}(t)\,dt
    \right) C_{k,m}
    \\
    \text{s.t.} \quad 
    &
    \sum_{i \in \pazocal S} h_{i,k,m}^t \leq C_{k,m},
    \quad 
    \forall k,m,\; \forall t \in \pazocal{T} 
    \\
    &
    h_{i,k,m}^t \geq 0, 
    \quad 
    \forall i \in \pazocal S,\, \forall k,m,\; \forall t \in \pazocal{T} 
    \\
    &
    C_{k,m} > 0,
    \quad \forall k,m.
    \end{aligned}
    \label{eq:optimization_problem_multi}
\end{equation}
Assumption~\ref{ass:inp-zero} implies that
$\mathbf h_{\textit{InP},\pazocal S}^{t*} = \mathbf 0$, $\forall t$.
Let $\pazocal S' \subseteq \pazocal S \setminus \{\textit{InP}\}$ denote the
subset of SPs with strictly positive expected loads.\\
We introduce Lagrange multipliers $\lambda_{k,m,t} \ge 0$ for the capacity
constraints
$\sum_{i\in\pazocal S'} h_{i,k,m}^t \le C_{k,m}$, for all $(k,m,t)$.
The Lagrangian is
\begin{equation}
    \begin{aligned}
    \pazocal{L} = 
    &\sum_{t \in \pazocal{T}} 
    \bigg[
        \sum_{i \in \pazocal S'} 
        \beta_i \bar f_i^t 
        \left(1 - e^{- X_i^t} \right)
        -
        \sum_{k=1}^K \sum_{m=1}^M
        \lambda_{k,m,t} 
        \left( 
            \sum_{i \in \pazocal S'} h_{i,k,m}^t - C_{k,m}
        \right)
    \bigg] 
    \\
    &\hspace{2em}
    -
    \sum_{k=1}^K \sum_{m=1}^M
    \left(
        d_{k,m} 
        +
         \int_{0}^{I} D_{k,m}(t)\,dt
    \right)C_{k,m}
    \end{aligned}
    \label{eq:lagrangian_multi}
\end{equation}
We now take the partial derivatives of~\eqref{eq:lagrangian_multi} with
respect to $h_{i,k,m}^t$ and $C_{k,m}$.
\begin{itemize}
    \item Derivative with respect to $h_{i,k,m}^t$: From~\eqref{eq:def-X-appendix}, we have
$\partial X_i^t / \partial h_{i,k,m}^t = \xi_{k,m}$, and thus
\[
\frac{\partial}{\partial h_{i,k,m}^t}
\left( 
    1 - e^{- X_i^t}
\right)
=
\xi_{k,m} e^{- X_i^t}
\]
Therefore,
\begin{equation}
    \frac{\partial \pazocal{L}}{\partial h_{i,k,m}^t} 
    =
    \beta_i \bar f_i^t \cdot \xi_{k,m} e^{- X_i^t} 
    - \lambda_{k,m,t}
    \label{eq:partial_h_multi}
\end{equation}
Setting~\eqref{eq:partial_h_multi} to zero:
\begin{equation}
    \beta_i \bar f_i^t \cdot \xi_{k,m} e^{- X_i^t} 
    =
    \lambda_{k,m,t},
    \quad \forall i \in \pazocal S',\, \forall k,m,\; \forall t \in \pazocal T
    \label{eq:optimality_h_multi}
\end{equation}
In particular, whenever $h_{i,k,m}^t > 0$ and $h_{i,k',m'}^t > 0$ for some
$(k,m)$ and $(k',m')$, we obtain the proportionality relation
\begin{equation}
    \frac{\lambda_{k,m,t}}{\xi_{k,m}}
    =
    \frac{\lambda_{k',m',t}}{\xi_{k',m'}},
    \quad \forall t \in \pazocal T,
    \label{eq:proportionality_multi}
\end{equation}
that is, all active resource components used by SP~$i$ at time $t$ share
the same normalized multiplier.
\item Derivative with respect to $C_{k,m}$:
Taking the derivative of~\eqref{eq:lagrangian_multi} with respect to
$C_{k,m}$ yields
\begin{equation}
    \frac{\partial \pazocal{L}}{\partial C_{k,m}}
    =
    \sum_{t\in\pazocal T} \lambda_{k,m,t}
    -
    \left(
        d_{k,m}
        +
        \int_{0}^{I} D_{k,m}(t)\,dt
    \right)
    \label{eq:partial_C_multi}
\end{equation}
Setting~\eqref{eq:partial_C_multi} to zero for optimality gives
\begin{equation}
    \sum_{t \in \pazocal{T}} \lambda_{k,m,t}
    =
    d_{k,m}
    +
    \int_{0}^{I} D_{k,m}(t)\,dt,
    \quad \forall k,m
    \label{eq:lambda_sum_multi}
\end{equation}
\end{itemize}
Equations~\eqref{eq:optimality_h_multi}--\eqref{eq:lambda_sum_multi},
together with the primal feasibility conditions
(i.e., $\sum_{i\in\pazocal S'} h_{i,k,m}^t \le C_{k,m}$,
$h_{i,k,m}^t \ge 0$, and $C_{k,m} > 0$)
and the complementary slackness conditions
(i.e., $\lambda_{k,m,t}
\left(\sum_{i\in\pazocal S'} h_{i,k,m}^t - C_{k,m}\right) = 0$ for all
$k,m,t$),
form the KKT conditions for the multi-resource, multi-node case.\\
In the multidimensional setting, however, the
conditions~\eqref{eq:optimality_h_multi}--\eqref{eq:lambda_sum_multi} couple
all entries of the capacity matrix $\mathbf C$ and of the allocation
matrices $\mathbf h_i^t$ across time, resource types and nodes. As a
consequence, no closed-form expression analogous to
\eqref{eq:C_star}--\eqref{eq:h_star_final} is available in general.}

\added{Nevertheless, Problem~\eqref{equation: optimal solution}--\eqref{eq:constraint-positive}
remains convex under the assumptions of
Proposition~\ref{prop:convexity-of-problem}, and the KKT conditions above is
both necessary and sufficient for optimality. Hence, the optimal capacity
matrix $\mathbf C_{\pazocal S}^*$ and the allocations
$\mathbf h_{i,\pazocal S}^{t*}$ for all $i\in\pazocal S$, $t\in\pazocal T$
exist, are unique, and can be efficiently computed using standard convex
optimization solvers.
However, in the scalar case ($K=1$, $M=1$), the KKT conditions yield explicit formulas for $C_{\pazocal S}^*$ and $h_{i,\pazocal S}^{t*}$, as shown in Proposition~\ref{Proposition: kkt} and its proof in Appendix~\ref{appendix: kkt}. }

\section{Proof of Proposition~\ref{Proposition: kkt}} \label{appendix: kkt}
\subsection{Case 1:}\label{section:activeload}
Starting from the optimization problem~\eqref{equation: optimal solution}–\eqref{eq:constraint-positive}, and substituting each term with its expression from~\eqref{equation: expected payoff},\eqref{equation: expected_utility},\eqref{equation: utility function}, and~\eqref{equation: cost}, we obtain the following optimization problem:
\begin{equation}
    \begin{aligned}
    \max_{\{{\vec{h}^t}\}_{t \in \pazocal T},\, C \geq 0} \quad & \sum_{t \in \pazocal{T}} \sum_{i \in \pazocal S} \beta_i \deleted{\bar{l}}\added{\bar{f}}_i^t \left(1 - e^{-\xi h_i^t} \right) - \left(d + \added{\int_{0}^{I} D(t)\,dt}\deleted{d' \cdot I} \right) \cdot C \\
    \text{s.t.} \quad & \sum_{i \in \pazocal S} h_i^t \leq C, \quad \forall t \in \pazocal{T} \\
    & h_i^t \geq 0, \quad \forall i \in \pazocal S,\, t \in \pazocal{T} \\
    & C > 0
    \end{aligned}
    \label{eq:optimization_problem}
\end{equation}
According to Assumption~\ref{ass:inp-zero}, the \textit{InP} does not interact directly with end users; therefore, its load is $ \deleted{\bar{l}}\added{\bar{f}}_{\text{InP}}i^t = 0$, and no resources are allocated to the \textit{InP}, i.e.,~$h_{\textit{InP}, \pazocal{S}}^{t*} = 0, \quad \forall t \in \pazocal T$~\eqref{equation:no_resources_inp}. 

Let $\pazocal S' \subseteq \pazocal S \setminus \{\textit{InP}\}$ denote the subset of SPs with strictly positive load, i.e., $\deleted{\bar{l}}\added{\bar{f}}_i^t > 0$ for all $t \in \pazocal T$.
The capacity and allocation depend only on this active subset.

We define the Lagrangian function as follows:

\begin{equation}
    \begin{aligned}
    \pazocal{L} = 
    &\sum_{t \in \pazocal{T}} \left[ \sum_{i \in \pazocal S'} \beta_i \deleted{\bar{l}}\added{\bar{f}}_i^t \left(1 - e^{-\xi h_i^t} \right) - \lambda_t \left( \sum_{i \in \pazocal S'} h_i^t - C \right) \right] 
    - \left(d + \added{\int_{0}^{I} D(t)\,dt}\deleted{d' \cdot I} \right) \cdot C
    \end{aligned}
    \label{eq:lagrangian}
\end{equation}
Here, $\lambda_t$ is the Lagrange multiplier associated with the capacity constraint at each time slot $t \in \pazocal{T}$.

We take the partial derivatives of \eqref{eq:lagrangian} with respect to \( h_i^t \) and \( C \).
\begin{itemize}
    \item Partial derivative with respect to \( h_i^t \): 
    \begin{equation}
    \frac{\partial \pazocal{L}}{\partial h_i^t} = \beta_i \deleted{\bar{l}}\added{\bar{f}}_i^t \cdot \xi e^{-\xi h_i^t} - \lambda_t
    \label{eq:partial_h}
\end{equation}

Setting the derivative in \eqref{eq:partial_h} to zero for optimality:
\begin{equation}
    \beta_i \deleted{\bar{l}}\added{\bar{f}}_i^t \cdot \xi e^{-\xi h_i^t} = \lambda_t
    \label{eq:optimality_h}
\end{equation}

Solving for \( h_i^t \):
\begin{equation}
    h_{i, \pazocal S}^{t*} = \frac{1}{\xi} \ln \left( \frac{\xi \beta_i \deleted{\bar{l}}\added{\bar{f}}_i^t}{\lambda_t} \right)
    \label{eq:h_star}
\end{equation}

    \item Partial derivative with respect to \( C \):
\begin{equation}
    \frac{\partial \pazocal{L}}{\partial C} = \sum_{t \in \pazocal{T}} \lambda_t - \left(d + \added{\int_{0}^{I} D(t)\,dt}\deleted{d' \cdot I} \right)
    \label{eq:partial_C}
\end{equation}

Setting the derivative in \eqref{eq:partial_C} to zero for optimality:
\begin{equation}
    \sum_{t \in \pazocal{T}} \lambda_t = d + \added{\int_{0}^{I} D(t)\,dt}\deleted{d' \cdot I}
    \label{eq:lambda_sum}
\end{equation}
\end{itemize}

We substitute the optimal \( h_{i, \pazocal S}^{t*} \) from \eqref{eq:h_star} into the constraint in \eqref{eq:optimization_problem}:
\begin{equation}
    \sum_{i \in \pazocal{S'} } h_{i, \pazocal S}^{t*} = \sum_{i \in \pazocal{S'}} \frac{1}{\xi} \ln \left( \frac{\xi \beta_i \deleted{\bar{l}}\added{\bar{f}}_i^t}{\lambda_t} \right) = C, \quad \forall t \in \pazocal{T}
    \label{eq:constraint_substitution}
\end{equation}

This gives:
\begin{equation}
    \frac{1}{\xi} \left( \sum_{i \in \pazocal{S'}} \ln (\xi \beta_i \deleted{\bar{l}}\added{\bar{f}}_i^t) - |\pazocal S'| \ln \lambda_t \right) = C, \quad \forall t
    \label{eq:solve_lambda_log}
\end{equation}
where $|\pazocal S'|$ is the number of SPs in the coalition~$\pazocal S'$.

Then, solving~\eqref{eq:solve_lambda_log} for \( \lambda_t \):
\begin{equation}
    \ln \lambda_t = \frac{1}{|\pazocal S'|} \sum_{i \in \pazocal{S'}} \ln (\xi \beta_i \deleted{\bar{l}}\added{\bar{f}}_i^t) - \frac{\xi C}{|\pazocal S'|}
\quad \Rightarrow \quad
\lambda_t = \left( \prod_{i \in \pazocal{S'} } \xi \beta_i \deleted{\bar{l}}\added{\bar{f}}_i^t \right)^{\frac{1}{|\pazocal S'|}} \cdot e^{-\frac{\xi C}{|\pazocal S'|}}, \quad \forall t
\label{eq:lambda_expression}
\end{equation}

Substituting \eqref{eq:lambda_expression} into \eqref{eq:lambda_sum}:
\begin{equation}
    \sum_{t \in \pazocal{T}} \left( \prod_{i \in \pazocal{S'} } \xi \beta_i \deleted{\bar{l}}\added{\bar{f}}_i^t \right)^{\frac{1}{|\pazocal S'|}} \cdot e^{-\frac{\xi C}{|\pazocal S'|}} = d + \added{\int_{0}^{I} D(t)\,dt}\deleted{d' \cdot I}
    \label{eq:lambda_substitution}
\end{equation}

Solving \eqref{eq:lambda_substitution} for \( C \):
\begin{equation}
    e^{-\frac{\xi C}{|\pazocal S'|}} = \frac{d + \added{\int_{0}^{I} D(t)\,dt}\deleted{d' \cdot I}}{\sum_{t \in \pazocal{T}} \left( \prod_{i \in \pazocal{S'}} \xi \beta_i \deleted{\bar{l}}\added{\bar{f}}_i^t \right)^{\frac{1}{|\pazocal S'|}}}
\quad \Rightarrow \quad
C_{\pazocal S}^* = \frac{|\pazocal S'|}{\xi} \ln \left( \frac{\sum_{t \in \pazocal{T}} \left( \prod_{i \in \pazocal{S'}} \xi \beta_i \deleted{\bar{l}}\added{\bar{f}}_i^t \right)^{\frac{1}{|\pazocal S'|}}}{d + \added{\int_{0}^{I} D(t)\,dt}\deleted{d' \cdot I}} \right)
\label{eq:C_star}
\end{equation}

Substitute \eqref{eq:lambda_expression} and \eqref{eq:C_star} into \eqref{eq:h_star}:
\begin{align}
    h_{i, \pazocal S}^{t*} &= \frac{1}{\xi} \ln \left( \frac{\xi \beta_i \deleted{\bar{l}}\added{\bar{f}}_i^t}{\lambda_t} \right) 
    = \frac{1}{\xi} \ln \left( \frac{\xi \beta_i \deleted{\bar{l}}\added{\bar{f}}_i^t}{\left( \prod_{j \in \pazocal{S'}} \xi \beta_j \deleted{\bar{l}}\added{\bar{f}}_j^t \right)^{\frac{1}{|\pazocal S'|}} \cdot e^{-\frac{\xi C_{\pazocal S}^*}{|\pazocal S'|}}} \right) \label{eq:h_star_substitution} \\
    &= \frac{C_{\pazocal S}^*}{|\pazocal S'|} + \frac{1}{\xi} \ln \left( \frac{\beta_i \deleted{\bar{l}}\added{\bar{f}}_i^t}{\left( \prod_{j \in \pazocal{S'}} \beta_j \deleted{\bar{l}}\added{\bar{f}}_j^t \right)^{\frac{1}{|\pazocal S'|}}} \right)
    \label{eq:h_star_final}
\end{align}

\subsubsection{Special case: Only one SP has non-zero load}\label{appendix:special}
If $|\pazocal S'| = 1$, let the single active SP be denoted by $i$. In this case, the optimization problem simplifies, since only one SP contributes to utility and receives allocation.
The optimization problem becomes:
\begin{equation}
    \begin{aligned}
    \max_{\{{h_i^t}\}_{t \in \pazocal T},\, C \geq 0} \quad & \sum_{t \in \pazocal{T}} \beta_i \deleted{\bar{l}}\added{\bar{f}}_i^t \left(1 - e^{-\xi h_i^t} \right) - \left(d + \added{\int_{0}^{I} D(t)\,dt}\deleted{d' \cdot I} \right) \cdot C \\
    \text{s.t.} \quad & h_i^t \leq C, \quad \forall t \in \pazocal{T} \\
    & h_i^t \geq 0, \quad \forall t \in \pazocal{T} \\
    & C > 0
    \end{aligned}
    \label{eq:optimization_single}
\end{equation}

The Lagrangian becomes:
\begin{equation}
    \pazocal{L} = 
    \sum_{t \in \pazocal{T}} \left[ \beta_i \deleted{\bar{l}}\added{\bar{f}}_i^t \left(1 - e^{-\xi h_i^t} \right) - \lambda_t \left( h_i^t - C \right) \right] 
    - \left(d + \added{\int_{0}^{I} D(t)\,dt}\deleted{d' \cdot I} \right) \cdot C
    \label{eq:lagrangian_single}
\end{equation}

We take the partial derivatives of \eqref{eq:lagrangian_single} with respect to \( h_i^t \) and \( C \).

\begin{itemize}
    \item Partial derivative with respect to \( h_i^t \): 
    \begin{equation}
    \frac{\partial \pazocal{L}}{\partial h_i^t} = \beta_i \deleted{\bar{l}}\added{\bar{f}}_i^t \cdot \xi e^{-\xi h_i^t} - \lambda_t
    \label{eq:partial_h_single}
    \end{equation}

    Setting the derivative in \eqref{eq:partial_h_single} to zero for optimality:
    \begin{equation}
        \beta_i \deleted{\bar{l}}\added{\bar{f}}_i^t \cdot \xi e^{-\xi h_i^t} = \lambda_t
        \label{eq:optimality_h_single}
    \end{equation}

    Solving for \( h_i^t \):
    \begin{equation}
        h_{i, \pazocal S}^{t*} = \frac{1}{\xi} \ln \left( \frac{\xi \beta_i \deleted{\bar{l}}\added{\bar{f}}_i^t}{\lambda_t} \right)
        \label{eq:h_star_single}
    \end{equation}

    \item Partial derivative with respect to \( C \):
    \begin{equation}
        \frac{\partial \pazocal{L}}{\partial C} = \sum_{t \in \pazocal{T}} \lambda_t - \left(d + \added{\int_{0}^{I} D(t)\,dt}\deleted{d' \cdot I} \right)
        \label{eq:partial_C_single}
    \end{equation}

    Setting the derivative in \eqref{eq:partial_C_single} to zero for optimality:
    \begin{equation}
        \sum_{t \in \pazocal{T}} \lambda_t = d + \added{\int_{0}^{I} D(t)\,dt}\deleted{d' \cdot I}
        \label{eq:lambda_sum_single}
    \end{equation}
\end{itemize}

We now use the fact that \( |\pazocal S'| = 1 \), so the constraint in \eqref{eq:optimization_single} implies:
\begin{equation}
    h_{i, \pazocal S}^{t*} = C, \quad \forall t \in \pazocal{T}
    \label{eq:h_equals_C_single}
\end{equation}

Substitute \eqref{eq:h_equals_C_single} into \eqref{eq:optimality_h_single}:
\begin{equation}
    \lambda_t = \xi \beta_i \deleted{\bar{l}}\added{\bar{f}}_i^t \cdot e^{-\xi C}, \quad \forall t
    \label{eq:lambda_expression_single}
\end{equation}

Substituting \eqref{eq:lambda_expression_single} into \eqref{eq:lambda_sum_single}:
\begin{equation}
    \sum_{t \in \pazocal{T}} \xi \beta_i \deleted{\bar{l}}\added{\bar{f}}_i^t \cdot e^{-\xi C} = d + \added{\int_{0}^{I} D(t)\,dt}\deleted{d' \cdot I}
    \label{eq:lambda_substitution_single}
\end{equation}

Solving \eqref{eq:lambda_substitution_single} for \( C \):
\begin{equation}
    e^{-\xi C} = \frac{d + \added{\int_{0}^{I} D(t)\,dt}\deleted{d' \cdot I}}{\sum_{t \in \pazocal{T}} \xi \beta_i \deleted{\bar{l}}\added{\bar{f}}_i^t}
    \quad \Rightarrow \quad
    C_{\pazocal S}^* = \frac{1}{\xi} \ln \left( \frac{\sum_{t \in \pazocal{T}} \xi \beta_i \deleted{\bar{l}}\added{\bar{f}}_i^t}{d + \added{\int_{0}^{I} D(t)\,dt}\deleted{d' \cdot I}} \right)
    \label{eq:C_star_single}
\end{equation}

Using \eqref{eq:h_equals_C_single} and \eqref{eq:C_star_single}, the final allocation is:
\begin{equation}
    h_{i, \pazocal S}^{t*} = C_{\pazocal S}^* = \frac{1}{\xi} \ln \left( \frac{\sum_{t \in \pazocal{T}} \xi \beta_i \deleted{\bar{l}}\added{\bar{f}}_i^t}{d + \added{\int_{0}^{I} D(t)\,dt}\deleted{d' \cdot I}} \right)
    \label{eqaution:share_h_one_active}
\end{equation}

\subsection{Case 2:}\label{section:zeroload}
From the utility function~\eqref{equation: utility function} and \eqref{equation: expected_utility}, if~$\deleted{\bar{l}}\added{\bar{f}}_i^t = 0$, then~$\bar{u}_i^t(h_i^t) = 0$ for any~$h_i^t$. Since no utility is generated, the value of the objective function~\eqref{equation: optimal solution} is zero regardless of the installed capacity. Therefore, the optimizer will choose~$C_{\pazocal{S}}^* = 0$ and allocates no resources ($h_{i, \pazocal S}^{t*}=0, \forall t \in \pazocal T$) to avoid unnecessary costs.

%\subsection{\added{Case 3:}}\label{section:partialload}\added{Consider SPs $i \in \pazocal S'$ that satisfy $\added{\bar{f}_i^t}\deleted{\bar{l}_i^t} > 0$ for some time slots $t \in \pazocal T$, but $\added{\bar{f}_i^{t'}}\deleted{\bar{l}_i^{t'}} = 0$ for other slots $t' \in \pazocal T$.\\ Starting from the Lagrangian~\eqref{eq:lagrangian}, the stationarity condition with respect to $h_i^t$ is given by~\eqref{eq:partial_h}: \[  \frac{\partial \pazocal{L}}{\partial h_i^t} = \beta_i \added{\bar{f}_i^t}\deleted{\bar{l}_i^t} \cdot \xi e^{-\xi h_i^t} - \lambda_t.\]If $\added{\bar{f}_i^t}\deleted{\bar{l}_i^t} = 0$, then\[\frac{\partial \pazocal{L}}{\partial h_i^t} = -\lambda_t < 0,\]which implies that the objective function is strictly decreasing in $h_i^t$. Therefore, the optimal solution satisfies
%\[
%h_{i,\pazocal S}^{t*} = 0.
%\]
%If $\added{\bar{f}_i^t}\deleted{\bar{l}_i^t} > 0$, the first-order condition \eqref{eq:optimality_h} holds, and solving it yields
%\[
%h_{i,\pazocal S}^{t*}
%=
%\frac{1}{\xi}
%\ln \left(
%\frac{\xi \beta_i \added{\bar{f}_i^t}\deleted{\bar{l}_i^t}}{\lamba_t}\right),\]which coincides with the expression derived in Case~1.\\The multiplier $\lambda_t$ is determined from the capacity constraint as in
%\eqref{eq:lambda_expression}–\eqref{eq:C_star}, and the installed capacity
%$C_{\pazocal S}^*$ is unchanged. This proves~\eqref{equation:share_h_partial}.}

\section{Proof of Proposition~\ref{Proposition:h_increase}}\label{appendix:h_increase_with_b_l}
\begin{itemize}
    \item Case 1:\\
    The allocation given by~\eqref{eqaution:share_h_active} can be rewritten as:
    \begin{equation}
    h_{i, \pazocal{S}}^{t*} = \frac{C_{\pazocal{S}}^*}{|\pazocal S'|} + \frac{1}{\xi} \left( \ln(\beta_i \cdot \deleted{\bar{l}}\added{\bar{f}}_i^t) - \frac{1}{|\pazocal S'|} \sum_{j \in \pazocal{S'}} \ln(\beta_j \cdot \deleted{\bar{l}}\added{\bar{f}}_j^t) \right)
    \label{equation:rewr}
    \end{equation}
    Taking the partial derivatives of \eqref{equation:rewr}:
    \begin{equation}
    \frac{\partial h_{i, \pazocal{S}}^{t*}}{\partial \beta_i} 
    = \frac{1}{\xi} \left( \frac{1}{\beta_i} - \frac{1}{|\pazocal S'|} \cdot \frac{1}{\beta_i} \right) 
    = \frac{1}{\xi} \left(1 - \frac{1}{|\pazocal S'|} \right) \cdot \frac{1}{\beta_i} > 0 \quad \text{for } \beta_i > 0    
    \end{equation}
    
    \begin{equation}
        \frac{\partial h_{i, \pazocal{S}}^{t*}}{\partial \deleted{\bar{l}}\added{\bar{f}}_i^t} 
    = \frac{1}{\xi} \left( \frac{1}{\deleted{\bar{l}}\added{\bar{f}}_i^t} - \frac{1}{|\pazocal S'|} \cdot \frac{1}{\deleted{\bar{l}}\added{\bar{f}}_i^t} \right)
    = \frac{1}{\xi} \left(1 - \frac{1}{|\pazocal S'|} \right) \cdot \frac{1}{\deleted{\bar{l}}\added{\bar{f}}_i^t} > 0 \quad \text{for } \deleted{\bar{l}}\added{\bar{f}}_i^t > 0
    \end{equation}
    So \( h_{i, \pazocal{S}}^{t*} \) is strictly increasing in both \( \beta_i \) and \( \deleted{\bar{l}}\added{\bar{f}}_i^t \).\\
    For the special case discussed in \ref{appendix:special}, we take the partial derivatives for~\eqref{eqaution:share_h_one_active}:
    \begin{equation}
        \frac{\partial h_{i, \pazocal{S}}^{t*}}{\partial \beta_i} = \frac{1}{\xi} \cdot \frac{\sum_{t \in \pazocal T} \deleted{\bar{l}}\added{\bar{f}}_i^t}{\sum_{t \in \pazocal T} \beta_i \cdot\deleted{\bar{l}}\added{\bar{f}}_i^t} = \frac{1}{\xi \cdot \beta_i} > 0
    \end{equation}
    \begin{equation}
        \frac{\partial h_{i, \pazocal{S}}^{t*}}{\partial \deleted{\bar{l}}\added{\bar{f}}_i^t} = \frac{1}{\xi} \cdot \frac{\xi \cdot \beta_i}{\sum_{t' \in \pazocal{T}} \xi \cdot \beta_i \cdot \deleted{\bar{l}}\added{\bar{f}}_i^{t'}} = \frac{1}{\xi} \cdot \frac{1}{\sum_{t' \in \pazocal{T}} \deleted{\bar{l}}\added{\bar{f}}_i^{t'}} > 0
    \end{equation}
    So again, $h_{i, \pazocal{S}}^{t*}$ increases strictly with \( \beta_i \) and \( \deleted{\bar{l}}\added{\bar{f}}_i^t \).
    \item Case 3:\\
    If \( \deleted{\bar{l}}\added{\bar{f}}_i^t = 0 \), then \( h_{i, \pazocal{S}}^{t*} = 0 \)~\eqref{equation:share_zero}, regardless of \( \beta_i \), so the proposition holds trivially.
\end{itemize}

\end{appendices}

%%===========================================================================================%%
%% If you are submitting to one of the Nature Portfolio journals, using the eJP submission   %%
%% system, please include the references within the manuscript file itself. You may do this  %%
%% by copying the reference list from your .bbl file, paste it into the main manuscript .tex %%
%% file, and delete the associated \verb+\bibliography+ commands.                            %%
%%===========================================================================================%%

%\bibliographystyle{sn-mathphys-ay}
%% BioMed_Central_Bib_Style_v1.01

\begin{thebibliography}{65}
% BibTex style file: bmc-mathphys.bst (version 2.1), 2014-07-24
\ifx \bisbn   \undefined \def \bisbn  #1{ISBN #1}\fi
\ifx \binits  \undefined \def \binits#1{#1}\fi
\ifx \bauthor  \undefined \def \bauthor#1{#1}\fi
\ifx \batitle  \undefined \def \batitle#1{#1}\fi
\ifx \bjtitle  \undefined \def \bjtitle#1{#1}\fi
\ifx \bvolume  \undefined \def \bvolume#1{\textbf{#1}}\fi
\ifx \byear  \undefined \def \byear#1{#1}\fi
\ifx \bissue  \undefined \def \bissue#1{#1}\fi
\ifx \bfpage  \undefined \def \bfpage#1{#1}\fi
\ifx \blpage  \undefined \def \blpage #1{#1}\fi
\ifx \burl  \undefined \def \burl#1{\textsf{#1}}\fi
\ifx \doiurl  \undefined \def \doiurl#1{\url{https://doi.org/#1}}\fi
\ifx \betal  \undefined \def \betal{\textit{et al.}}\fi
\ifx \binstitute  \undefined \def \binstitute#1{#1}\fi
\ifx \binstitutionaled  \undefined \def \binstitutionaled#1{#1}\fi
\ifx \bctitle  \undefined \def \bctitle#1{#1}\fi
\ifx \beditor  \undefined \def \beditor#1{#1}\fi
\ifx \bpublisher  \undefined \def \bpublisher#1{#1}\fi
\ifx \bbtitle  \undefined \def \bbtitle#1{#1}\fi
\ifx \bedition  \undefined \def \bedition#1{#1}\fi
\ifx \bseriesno  \undefined \def \bseriesno#1{#1}\fi
\ifx \blocation  \undefined \def \blocation#1{#1}\fi
\ifx \bsertitle  \undefined \def \bsertitle#1{#1}\fi
\ifx \bsnm \undefined \def \bsnm#1{#1}\fi
\ifx \bsuffix \undefined \def \bsuffix#1{#1}\fi
\ifx \bparticle \undefined \def \bparticle#1{#1}\fi
\ifx \barticle \undefined \def \barticle#1{#1}\fi
\bibcommenthead
\ifx \bconfdate \undefined \def \bconfdate #1{#1}\fi
\ifx \botherref \undefined \def \botherref #1{#1}\fi
\ifx \url \undefined \def \url#1{\textsf{#1}}\fi
\ifx \bchapter \undefined \def \bchapter#1{#1}\fi
\ifx \bbook \undefined \def \bbook#1{#1}\fi
\ifx \bcomment \undefined \def \bcomment#1{#1}\fi
\ifx \oauthor \undefined \def \oauthor#1{#1}\fi
\ifx \citeauthoryear \undefined \def \citeauthoryear#1{#1}\fi
\ifx \endbibitem  \undefined \def \endbibitem {}\fi
\ifx \bconflocation  \undefined \def \bconflocation#1{#1}\fi
\ifx \arxivurl  \undefined \def \arxivurl#1{\textsf{#1}}\fi
\csname PreBibitemsHook\endcsname

%%% 1
\bibitem[\protect\citeauthoryear{Akg{\"u}l et~al.}{2010}]{akgul2010periodicity}
\begin{barticle}
\bauthor{\bsnm{Akg{\"u}l}, \binits{T.}},
\bauthor{\bsnm{Baykut}, \binits{S.}},
\bauthor{\bsnm{Erol-Kantarci}, \binits{M.}},
\bauthor{\bsnm{Oktug}, \binits{S.F.}}:
\batitle{Periodicity-based anomalies in self-similar network traffic flow
  measurements}.
\bjtitle{IEEE Transactions on Instrumentation and Measurement}
\bvolume{60}(\bissue{4}),
\bfpage{1358}--\blpage{1366}
(\byear{2010})
\end{barticle}
\endbibitem

%%% 2
\bibitem[\protect\citeauthoryear{Akkarajitsakul
  et~al.}{2011}]{akkarajitsakul2011coalition}
\begin{barticle}
\bauthor{\bsnm{Akkarajitsakul}, \binits{K.}},
\bauthor{\bsnm{Hossain}, \binits{E.}},
\bauthor{\bsnm{Niyato}, \binits{D.}}:
\batitle{Coalition-based cooperative packet delivery under uncertainty: A
  dynamic bayesian coalitional game}.
\bjtitle{IEEE Transactions on Mobile Computing}
\bvolume{12}(\bissue{2}),
\bfpage{371}--\blpage{385}
(\byear{2011})
\end{barticle}
\endbibitem

%%% 3
\bibitem[\protect\citeauthoryear{{Amazon Web
  Services}}{2024}]{AWSLambdaPricing}
\begin{botherref}
\oauthor{\bsnm{{Amazon Web Services}}}:
AWS Lambda Pricing.
\url{https://aws.amazon.com/fr/lambda/pricing/}.
Accessed: 2024-07-18
(2024)
\end{botherref}
\endbibitem

%%% 4
\bibitem[\protect\citeauthoryear{Azevedo and Paxson}{2018}]{azevedo2018rivalry}
\begin{barticle}
\bauthor{\bsnm{Azevedo}, \binits{A.}},
\bauthor{\bsnm{Paxson}, \binits{D.}}:
\batitle{Rivalry and uncertainty in complementary investments with dynamic
  market sharing}.
\bjtitle{Annals of Operations Research}
\bvolume{271},
\bfpage{319}--\blpage{355}
(\byear{2018})
\end{barticle}
\endbibitem

%%% 5
\bibitem[\protect\citeauthoryear{Badia et~al.}{2019}]{badia2019investment}
\begin{barticle}
\bauthor{\bsnm{Badia}, \binits{B.}},
\bauthor{\bsnm{Berry}, \binits{R.A.}},
\bauthor{\bsnm{Wei}, \binits{E.}}:
\batitle{Investment in ev charging spots for parking}.
\bjtitle{IEEE Transactions on Network Science and Engineering}
\bvolume{7}(\bissue{2}),
\bfpage{650}--\blpage{661}
(\byear{2019})
\end{barticle}
\endbibitem

%%% 6
\bibitem[\protect\citeauthoryear{Bourreau et~al.}{2021}]{bourreau2021co}
\begin{barticle}
\bauthor{\bsnm{Bourreau}, \binits{M.}},
\bauthor{\bsnm{Cambini}, \binits{C.}},
\bauthor{\bsnm{Hoernig}, \binits{S.}},
\bauthor{\bsnm{Vogelsang}, \binits{I.}}:
\batitle{Co-investment, uncertainty, and opportunism: ex-ante and ex-post
  remedies}.
\bjtitle{Information Economics and Policy}
\bvolume{56},
\bfpage{100913}
(\byear{2021})
\end{barticle}
\endbibitem

%%% 7
\bibitem[\protect\citeauthoryear{Borrero et~al.}{2016}]{borrero2016stable}
\begin{barticle}
\bauthor{\bsnm{Borrero}, \binits{D.V.}},
\bauthor{\bsnm{Hinojosa}, \binits{M.A.}},
\bauthor{\bsnm{M{\'a}rmol}, \binits{A.M.}}:
\batitle{Stable solutions for multiple scenario cost allocation games with
  partial information}.
\bjtitle{Annals of Operations Research}
\bvolume{245}(\bissue{1}),
\bfpage{209}--\blpage{226}
(\byear{2016})
\end{barticle}
\endbibitem

%%% 8
\bibitem[\protect\citeauthoryear{Bhaskaran and
  Krishnan}{2009}]{bhaskaran2009effort}
\begin{barticle}
\bauthor{\bsnm{Bhaskaran}, \binits{S.R.}},
\bauthor{\bsnm{Krishnan}, \binits{V.}}:
\batitle{Effort, revenue, and cost sharing mechanisms for collaborative new
  product development}.
\bjtitle{Management Science}
\bvolume{55}(\bissue{7}),
\bfpage{1152}--\blpage{1169}
(\byear{2009})
\end{barticle}
\endbibitem

%%% 9
\bibitem[\protect\citeauthoryear{Birge and
  Louveaux}{2011}]{birge2011introduction}
\begin{bbook}
\bauthor{\bsnm{Birge}, \binits{J.R.}},
\bauthor{\bsnm{Louveaux}, \binits{F.}}:
\bbtitle{Introduction to Stochastic Programming}.
\bpublisher{Springer},
\blocation{New {Y}ork}
(\byear{2011})
\end{bbook}
\endbibitem

%%% 10
\bibitem[\protect\citeauthoryear{Boucheron et~al.}{2013}]{Boucheron2013}
\begin{bbook}
\bauthor{\bsnm{Boucheron}, \binits{S.}},
\bauthor{\bsnm{Lugosi}, \binits{G.}},
\bauthor{\bsnm{Massart}, \binits{P.}}:
\bbtitle{Concentration Inequalities - {A} Nonasymptotic Theory of
  Independence}.
\bpublisher{Oxford University Press},
\blocation{Oxford}
(\byear{2013}).
\doiurl{10.1093/ACPROF:OSO/9780199535255.001.0001} .
\burl{https://doi.org/10.1093/acprof:oso/9780199535255.001.0001}
\end{bbook}
\endbibitem

%%% 11
\bibitem[\protect\citeauthoryear{Bertsekas and
  Tsitsiklis}{2008}]{bertsekas2008introduction}
\begin{bbook}
\bauthor{\bsnm{Bertsekas}, \binits{D.}},
\bauthor{\bsnm{Tsitsiklis}, \binits{J.N.}}:
\bbtitle{Introduction to Probability}
vol. \bseriesno{1}.
\bpublisher{Athena Scientific},
\blocation{Belmont, MA}
(\byear{2008})
\end{bbook}
\endbibitem

%%% 12
\bibitem[\protect\citeauthoryear{Boyd and Vandenberghe}{2004}]{Boyd2004}
\begin{bbook}
\bauthor{\bsnm{Boyd}, \binits{S.}},
\bauthor{\bsnm{Vandenberghe}, \binits{L.}}:
\bbtitle{Convex Optimization}.
\bpublisher{Cambridge university press},
\blocation{Cambridge}
(\byear{2004})
\end{bbook}
\endbibitem

%%% 13
\bibitem[\protect\citeauthoryear{Bi et~al.}{2021}]{bi2021hybrid}
\begin{barticle}
\bauthor{\bsnm{Bi}, \binits{J.}},
\bauthor{\bsnm{Zhang}, \binits{X.}},
\bauthor{\bsnm{Yuan}, \binits{H.}},
\bauthor{\bsnm{Zhang}, \binits{J.}},
\bauthor{\bsnm{Zhou}, \binits{M.}}:
\batitle{A hybrid prediction method for realistic network traffic with temporal
  convolutional network and lstm}.
\bjtitle{IEEE Transactions on Automation Science and Engineering}
\bvolume{19}(\bissue{3}),
\bfpage{1869}--\blpage{1879}
(\byear{2021})
\end{barticle}
\endbibitem

%%% 14
\bibitem[\protect\citeauthoryear{Charnes and
  Granot}{1976}]{charnes1976coalitional}
\begin{barticle}
\bauthor{\bsnm{Charnes}, \binits{A.}},
\bauthor{\bsnm{Granot}, \binits{D.}}:
\batitle{Coalitional and chance-constrained solutions to n-person games. i: the
  prior satisficing nucleolus}.
\bjtitle{SIAM Journal on Applied Mathematics}
\bvolume{31}(\bissue{2}),
\bfpage{358}--\blpage{367}
(\byear{1976})
\end{barticle}
\endbibitem

%%% 15
\bibitem[\protect\citeauthoryear{Chen}{2023}]{chen2023building}
\begin{bchapter}
\bauthor{\bsnm{Chen}, \binits{X.}}:
\bctitle{Building mobile edge infrastructure in an uncertain environment}.
In: \bbtitle{2023 IEEE Virtual Conference on Communications (VCC)},
pp. \bfpage{127}--\blpage{132}
(\byear{2023}).
\bcomment{IEEE}
\end{bchapter}
\endbibitem

%%% 16
\bibitem[\protect\citeauthoryear{Cong et~al.}{2022}]{cong2022coopedge}
\begin{bchapter}
\bauthor{\bsnm{Cong}, \binits{R.}},
\bauthor{\bsnm{Zhao}, \binits{Z.}},
\bauthor{\bsnm{Zhang}, \binits{L.}},
\bauthor{\bsnm{Min}, \binits{G.}}:
\bctitle{Coopedge: cost-effective server deployment for cooperative
  multi-access edge computing}.
In: \bbtitle{2022 19th Annual IEEE International Conference on Sensing,
  Communication, and Networking (SECON)},
pp. \bfpage{208}--\blpage{216}
(\byear{2022}).
\bcomment{IEEE}
\end{bchapter}
\endbibitem

%%% 17
\bibitem[\protect\citeauthoryear{Dieker}{2004}]{dieker2004simulation}
\begin{botherref}
\oauthor{\bsnm{Dieker}, \binits{T.}}:
Simulation of fractional brownian motion.
PhD thesis,
Masters Thesis, Department of Mathematical Sciences, University of Twente~…
(2004)
\end{botherref}
\endbibitem

%%% 18
\bibitem[\protect\citeauthoryear{Doan and Nguyen}{2014}]{doan2014robust}
\begin{botherref}
\oauthor{\bsnm{Doan}, \binits{X.V.}},
\oauthor{\bsnm{Nguyen}, \binits{T.-D.}}:
Robust stable payoff distribution in stochastic cooperative games.
arXiv preprint arXiv:1403.5906
(2014)
\end{botherref}
\endbibitem

%%% 19
\bibitem[\protect\citeauthoryear{Enriquez}{2004}]{enriquez2004simple}
\begin{barticle}
\bauthor{\bsnm{Enriquez}, \binits{N.}}:
\batitle{A simple construction of the fractional brownian motion}.
\bjtitle{Stochastic Processes and their Applications}
\bvolume{109}(\bissue{2}),
\bfpage{203}--\blpage{223}
(\byear{2004})
\end{barticle}
\endbibitem

%%% 20
\bibitem[\protect\citeauthoryear{Fredriksson
  et~al.}{2019}]{fredriksson2019optimal}
\begin{barticle}
\bauthor{\bsnm{Fredriksson}, \binits{H.}},
\bauthor{\bsnm{Dahl}, \binits{M.}},
\bauthor{\bsnm{Holmgren}, \binits{J.}}:
\batitle{Optimal placement of charging stations for electric vehicles in
  large-scale transportation networks}.
\bjtitle{Procedia computer science}
\bvolume{160},
\bfpage{77}--\blpage{84}
(\byear{2019})
\end{barticle}
\endbibitem

%%% 21
\bibitem[\protect\citeauthoryear{Faisal et~al.}{2019}]{faisal2019understanding}
\begin{barticle}
\bauthor{\bsnm{Faisal}, \binits{A.}},
\bauthor{\bsnm{Kamruzzaman}, \binits{M.}},
\bauthor{\bsnm{Yigitcanlar}, \binits{T.}},
\bauthor{\bsnm{Currie}, \binits{G.}}:
\batitle{Understanding autonomous vehicles}.
\bjtitle{Journal of transport and land use}
\bvolume{12}(\bissue{1}),
\bfpage{45}--\blpage{72}
(\byear{2019})
\end{barticle}
\endbibitem

%%% 22
\bibitem[\protect\citeauthoryear{Gendreau}{2023}]{gendreau2023network}
\begin{botherref}
\oauthor{\bsnm{Gendreau}, \binits{M.}}:
Network design with applications to transportation and logistics
(2023)
\end{botherref}
\endbibitem

%%% 23
\bibitem[\protect\citeauthoryear{Ghahramani}{2024}]{ghahramani2024fundamentals}
\begin{bbook}
\bauthor{\bsnm{Ghahramani}, \binits{S.}}:
\bbtitle{Fundamentals of Probability}.
\bpublisher{CRC Press},
\blocation{Boca Raton, FL}
(\byear{2024})
\end{bbook}
\endbibitem

%%% 24
\bibitem[\protect\citeauthoryear{Ghosh}{2002}]{ghosh2002probability}
\begin{barticle}
\bauthor{\bsnm{Ghosh}, \binits{B.K.}}:
\batitle{Probability inequalities related to markov's theorem}.
\bjtitle{The American Statistician}
\bvolume{56}(\bissue{3}),
\bfpage{186}--\blpage{190}
(\byear{2002})
\end{barticle}
\endbibitem

%%% 25
\bibitem[\protect\citeauthoryear{Harchol-Balter}{2023}]{harchol2023introduction}
\begin{bbook}
\bauthor{\bsnm{Harchol-Balter}, \binits{M.}}:
\bbtitle{Introduction to Probability for Computing}.
\bpublisher{Cambridge University Press},
\blocation{Cambridge, United Kingdom}
(\byear{2023})
\end{bbook}
\endbibitem

%%% 26
\bibitem[\protect\citeauthoryear{He et~al.}{2024}]{he2024co}
\begin{botherref}
\oauthor{\bsnm{He}, \binits{M.}},
\oauthor{\bsnm{Censi}, \binits{A.}},
\oauthor{\bsnm{Frazzoli}, \binits{E.}},
\oauthor{\bsnm{Zardini}, \binits{G.}}:
Co-investment with payoff sharing benefit operators and users in network
  design.
arXiv preprint arXiv:2409.19409
(2024)
\end{botherref}
\endbibitem

%%% 27
\bibitem[\protect\citeauthoryear{Inderst and Peitz}{2014}]{INDERST201428}
\begin{barticle}
\bauthor{\bsnm{Inderst}, \binits{R.}},
\bauthor{\bsnm{Peitz}, \binits{M.}}:
\batitle{Investment under uncertainty and regulation of new access networks}.
\bjtitle{Information Economics and Policy}
\bvolume{26},
\bfpage{28}--\blpage{41}
(\byear{2014})
\doiurl{10.1016/j.infoecopol.2013.11.001}
\end{barticle}
\endbibitem

%%% 28
\bibitem[\protect\citeauthoryear{Jeanjean}{2022}]{jeanjean2022co}
\begin{barticle}
\bauthor{\bsnm{Jeanjean}, \binits{F.}}:
\batitle{Co-investment in the sharing of telecommunications infrastructures}.
\bjtitle{Theoretical Economics Letters}
\bvolume{12}(\bissue{5}),
\bfpage{1297}--\blpage{1314}
(\byear{2022})
\end{barticle}
\endbibitem

%%% 29
\bibitem[\protect\citeauthoryear{Kahane}{1985}]{kahane1985some}
\begin{bbook}
\bauthor{\bsnm{Kahane}, \binits{J.-P.}}:
\bbtitle{Some Random Series of Functions}
vol. \bseriesno{5}.
\bpublisher{Cambridge University Press},
\blocation{Cambridge, New York}
(\byear{1985})
\end{bbook}
\endbibitem

%%% 30
\bibitem[\protect\citeauthoryear{Khan et~al.}{2019}]{khan2019edge}
\begin{barticle}
\bauthor{\bsnm{Khan}, \binits{W.Z.}},
\bauthor{\bsnm{Ahmed}, \binits{E.}},
\bauthor{\bsnm{Hakak}, \binits{S.}},
\bauthor{\bsnm{Yaqoob}, \binits{I.}},
\bauthor{\bsnm{Ahmed}, \binits{A.}}:
\batitle{Edge computing: A survey}.
\bjtitle{Future Generation Computer Systems}
\bvolume{97},
\bfpage{219}--\blpage{235}
(\byear{2019})
\end{barticle}
\endbibitem

%%% 31
\bibitem[\protect\citeauthoryear{Ketelaars et~al.}{2023}]{ketelaars2023dynamic}
\begin{botherref}
\oauthor{\bsnm{Ketelaars}, \binits{M.}},
\oauthor{\bsnm{Borm}, \binits{P.}},
\oauthor{\bsnm{Kort}, \binits{P.}}:
Dynamic stability of cooperative investment under uncertainty.
Technical report,
Tilburg University, School of Economics and Management
(2023)
\end{botherref}
\endbibitem

%%% 32
\bibitem[\protect\citeauthoryear{Karagiannis
  et~al.}{2004}]{karagiannis2004long}
\begin{barticle}
\bauthor{\bsnm{Karagiannis}, \binits{T.}},
\bauthor{\bsnm{Molle}, \binits{M.}},
\bauthor{\bsnm{Faloutsos}, \binits{M.}}:
\batitle{Long-range dependence ten years of internet traffic modeling}.
\bjtitle{IEEE internet computing}
\bvolume{8}(\bissue{5}),
\bfpage{57}--\blpage{64}
(\byear{2004})
\end{barticle}
\endbibitem

%%% 33
\bibitem[\protect\citeauthoryear{Kiedanski
  et~al.}{2020}]{kiedanski2020discrete}
\begin{bchapter}
\bauthor{\bsnm{Kiedanski}, \binits{D.}},
\bauthor{\bsnm{Orda}, \binits{A.}},
\bauthor{\bsnm{Kofman}, \binits{D.}}:
\bctitle{Discrete and stochastic coalitional storage games}.
In: \bbtitle{Proceedings of the Eleventh ACM International Conference on Future
  Energy Systems},
pp. \bfpage{351}--\blpage{362}
(\byear{2020})
\end{bchapter}
\endbibitem

%%% 34
\bibitem[\protect\citeauthoryear{Kauppinen
  et~al.}{2018}]{kauppinen2018investing}
\begin{barticle}
\bauthor{\bsnm{Kauppinen}, \binits{L.}},
\bauthor{\bsnm{Siddiqui}, \binits{A.S.}},
\bauthor{\bsnm{Salo}, \binits{A.}}:
\batitle{Investing in time-to-build projects with uncertain revenues and costs:
  A real options approach}.
\bjtitle{IEEE Transactions on Engineering Management}
\bvolume{65}(\bissue{3}),
\bfpage{448}--\blpage{459}
(\byear{2018})
\end{barticle}
\endbibitem

%%% 35
\bibitem[\protect\citeauthoryear{Lee and Fapojuwo}{2005}]{lee2005stochastic}
\begin{barticle}
\bauthor{\bsnm{Lee}, \binits{I.W.}},
\bauthor{\bsnm{Fapojuwo}, \binits{A.O.}}:
\batitle{Stochastic processes for computer network traffic modeling}.
\bjtitle{Computer communications}
\bvolume{29}(\bissue{1}),
\bfpage{1}--\blpage{23}
(\byear{2005})
\end{barticle}
\endbibitem

%%% 36
\bibitem[\protect\citeauthoryear{L{\'o}pez et~al.}{2022}]{lopez2022co}
\begin{barticle}
\bauthor{\bsnm{L{\'o}pez}, \binits{{\'A}.L.}},
\bauthor{\bsnm{Manganelli}, \binits{A.-G.}},
\bauthor{\bsnm{Mart{\'\i}n-Rodr{\'\i}guez}, \binits{M.}}:
\batitle{Co-investment deterrence}.
\bjtitle{Economics Letters}
\bvolume{211},
\bfpage{110263}
(\byear{2022})
\end{barticle}
\endbibitem

%%% 37
\bibitem[\protect\citeauthoryear{Li et~al.}{2020}]{li2020data}
\begin{barticle}
\bauthor{\bsnm{Li}, \binits{L.}},
\bauthor{\bsnm{Shi}, \binits{D.}},
\bauthor{\bsnm{Hou}, \binits{R.}},
\bauthor{\bsnm{Li}, \binits{X.}},
\bauthor{\bsnm{Wang}, \binits{J.}},
\bauthor{\bsnm{Li}, \binits{H.}},
\bauthor{\bsnm{Pan}, \binits{M.}}:
\batitle{Data-driven optimization for cooperative edge service provisioning
  with demand uncertainty}.
\bjtitle{IEEE Internet of Things Journal}
\bvolume{8}(\bissue{6}),
\bfpage{4317}--\blpage{4328}
(\byear{2020})
\end{barticle}
\endbibitem

%%% 38
\bibitem[\protect\citeauthoryear{Lin et~al.}{2021}]{lin2021revenue}
\begin{barticle}
\bauthor{\bsnm{Lin}, \binits{X.}},
\bauthor{\bsnm{Zhou}, \binits{J.}},
\bauthor{\bsnm{Zhang}, \binits{L.}},
\bauthor{\bsnm{Zeng}, \binits{Y.}}:
\batitle{Revenue sharing for resource reallocation among project activity
  contractors}.
\bjtitle{Annals of Operations Research}
\bvolume{301},
\bfpage{121}--\blpage{141}
(\byear{2021})
\end{barticle}
\endbibitem

%%% 39
\bibitem[\protect\citeauthoryear{Mohan et~al.}{2020}]{mohan2020pruning}
\begin{bchapter}
\bauthor{\bsnm{Mohan}, \binits{N.}},
\bauthor{\bsnm{Corneo}, \binits{L.}},
\bauthor{\bsnm{Zavodovski}, \binits{A.}},
\bauthor{\bsnm{Bayhan}, \binits{S.}},
\bauthor{\bsnm{Wong}, \binits{W.}},
\bauthor{\bsnm{Kangasharju}, \binits{J.}}:
\bctitle{Pruning edge research with latency shears}.
In: \bbtitle{Proceedings of the 19th ACM Workshop on Hot Topics in Networks},
pp. \bfpage{182}--\blpage{189}
(\byear{2020})
\end{bchapter}
\endbibitem

%%% 40
\bibitem[\protect\citeauthoryear{{Microsoft Azure}}{}]{azureStackEdgePricing}
\begin{botherref}
\oauthor{\bsnm{{Microsoft Azure}}}:
{Pricing - Azure Stack Edge}.
\url{https://azure.microsoft.com/en-us/pricing/details/azure-stack/edge/}.
Accessed: 2025-01-09
\end{botherref}
\endbibitem

%%% 41
\bibitem[\protect\citeauthoryear{Mishura}{2008}]{mishura2008stochastic}
\begin{bbook}
\bauthor{\bsnm{Mishura}, \binits{Y.}}:
\bbtitle{Stochastic Calculus for Fractional Brownian Motion and Related
  Processes}.
\bpublisher{Springer},
\blocation{Berlin}
(\byear{2008})
\end{bbook}
\endbibitem

%%% 42
\bibitem[\protect\citeauthoryear{Maschler et~al.}{1979}]{maschler1979geometric}
\begin{barticle}
\bauthor{\bsnm{Maschler}, \binits{M.}},
\bauthor{\bsnm{Peleg}, \binits{B.}},
\bauthor{\bsnm{Shapley}, \binits{L.S.}}:
\batitle{Geometric properties of the kernel, nucleolus, and related solution
  concepts}.
\bjtitle{Mathematics of operations research}
\bvolume{4}(\bissue{4}),
\bfpage{303}--\blpage{338}
(\byear{1979})
\end{barticle}
\endbibitem

%%% 43
\bibitem[\protect\citeauthoryear{Mandelbrot and
  Van~Ness}{1968}]{mandelbrot1968fractional}
\begin{barticle}
\bauthor{\bsnm{Mandelbrot}, \binits{B.B.}},
\bauthor{\bsnm{Van~Ness}, \binits{J.W.}}:
\batitle{Fractional brownian motions, fractional noises and applications}.
\bjtitle{SIAM review}
\bvolume{10}(\bissue{4}),
\bfpage{422}--\blpage{437}
(\byear{1968})
\end{barticle}
\endbibitem

%%% 44
\bibitem[\protect\citeauthoryear{Nguyen et~al.}{2021}]{nguyen2021two}
\begin{barticle}
\bauthor{\bsnm{Nguyen}, \binits{D.T.}},
\bauthor{\bsnm{Nguyen}, \binits{H.T.}},
\bauthor{\bsnm{Trieu}, \binits{N.}},
\bauthor{\bsnm{Bhargava}, \binits{V.K.}}:
\batitle{Two-stage robust edge service placement and sizing under demand
  uncertainty}.
\bjtitle{IEEE Internet of Things Journal}
\bvolume{9}(\bissue{2}),
\bfpage{1560}--\blpage{1574}
(\byear{2021})
\end{barticle}
\endbibitem

%%% 45
\bibitem[\protect\citeauthoryear{Ong and Nee}{2013}]{ong2013virtual}
\begin{bbook}
\bauthor{\bsnm{Ong}, \binits{S.K.}},
\bauthor{\bsnm{Nee}, \binits{A.Y.C.}}:
\bbtitle{Virtual and Augmented Reality Applications in Manufacturing}.
\bpublisher{Springer},
\blocation{London}
(\byear{2013})
\end{bbook}
\endbibitem

%%% 46
\bibitem[\protect\citeauthoryear{Osborne and Rubinstein}{1994}]{Osborne1994}
\begin{bbook}
\bauthor{\bsnm{Osborne}, \binits{M.J.}},
\bauthor{\bsnm{Rubinstein}, \binits{A.}}:
\bbtitle{A Course in Game Theory}.
\bpublisher{The MIT Press},
\blocation{Cambridge, USA}
(\byear{1994}).
\bcomment{electronic edition}
\end{bbook}
\endbibitem

%%% 47
\bibitem[\protect\citeauthoryear{Patan{\`e}
  et~al.}{2023}]{patane2023coalitional}
\begin{bchapter}
\bauthor{\bsnm{Patan{\`e}}, \binits{R.}},
\bauthor{\bsnm{Araldo}, \binits{A.}},
\bauthor{\bsnm{Chahed}, \binits{T.}},
\bauthor{\bsnm{Kiedanski}, \binits{D.}},
\bauthor{\bsnm{Kofman}, \binits{D.}}:
\bctitle{Coalitional game-theoretical approach to coinvestment with application
  to edge computing}.
In: \bbtitle{2023 IEEE 20th Consumer Communications \& Networking Conference
  (CCNC)},
pp. \bfpage{517}--\blpage{522}
(\byear{2023}).
\bcomment{IEEE}
\end{bchapter}
\endbibitem

%%% 48
\bibitem[\protect\citeauthoryear{Peng et~al.}{2015}]{peng2015random}
\begin{barticle}
\bauthor{\bsnm{Peng}, \binits{Z.}},
\bauthor{\bsnm{Cui}, \binits{D.}},
\bauthor{\bsnm{Zuo}, \binits{J.}},
\bauthor{\bsnm{Li}, \binits{Q.}},
\bauthor{\bsnm{Xu}, \binits{B.}},
\bauthor{\bsnm{Lin}, \binits{W.}}:
\batitle{Random task scheduling scheme based on reinforcement learning in cloud
  computing}.
\bjtitle{Cluster computing}
\bvolume{18},
\bfpage{1595}--\blpage{1607}
(\byear{2015})
\end{barticle}
\endbibitem

%%% 49
\bibitem[\protect\citeauthoryear{Pantazis
  et~al.}{2023}]{pantazis2023distributionally}
\begin{bchapter}
\bauthor{\bsnm{Pantazis}, \binits{G.}},
\bauthor{\bsnm{Franci}, \binits{B.}},
\bauthor{\bsnm{Grammatico}, \binits{S.}},
\bauthor{\bsnm{Margellos}, \binits{K.}}:
\bctitle{Distributionally robust stability of payoff allocations in stochastic
  coalitional games}.
In: \bbtitle{2023 62nd IEEE Conference on Decision and Control (CDC)},
pp. \bfpage{7617}--\blpage{7622}
(\byear{2023}).
\bcomment{IEEE}
\end{bchapter}
\endbibitem

%%% 50
\bibitem[\protect\citeauthoryear{Parekh et~al.}{2022}]{parekh2022review}
\begin{barticle}
\bauthor{\bsnm{Parekh}, \binits{D.}},
\bauthor{\bsnm{Poddar}, \binits{N.}},
\bauthor{\bsnm{Rajpurkar}, \binits{A.}},
\bauthor{\bsnm{Chahal}, \binits{M.}},
\bauthor{\bsnm{Kumar}, \binits{N.}},
\bauthor{\bsnm{Joshi}, \binits{G.P.}},
\bauthor{\bsnm{Cho}, \binits{W.}}:
\batitle{A review on autonomous vehicles: Progress, methods and challenges}.
\bjtitle{Electronics}
\bvolume{11}(\bissue{14}),
\bfpage{2162}
(\byear{2022})
\end{barticle}
\endbibitem

%%% 51
\bibitem[\protect\citeauthoryear{Raja and Grammatico}{2021}]{raja2021payoff}
\begin{barticle}
\bauthor{\bsnm{Raja}, \binits{A.A.}},
\bauthor{\bsnm{Grammatico}, \binits{S.}}:
\batitle{Payoff distribution in robust coalitional games on time-varying
  networks}.
\bjtitle{IEEE Transactions on Control of Network Systems}
\bvolume{9}(\bissue{1}),
\bfpage{511}--\blpage{520}
(\byear{2021})
\end{barticle}
\endbibitem

%%% 52
\bibitem[\protect\citeauthoryear{Reif and Walch}{2008}]{reif2008augmented}
\begin{barticle}
\bauthor{\bsnm{Reif}, \binits{R.}},
\bauthor{\bsnm{Walch}, \binits{D.}}:
\batitle{Augmented \& virtual reality applications in the field of logistics}.
\bjtitle{The Visual Computer}
\bvolume{24},
\bfpage{987}--\blpage{994}
(\byear{2008})
\end{barticle}
\endbibitem

%%% 53
\bibitem[\protect\citeauthoryear{Sakr et~al.}{2025}]{sakr2025co}
\begin{bchapter}
\bauthor{\bsnm{Sakr}, \binits{A.}},
\bauthor{\bsnm{Araldo}, \binits{A.}},
\bauthor{\bsnm{Chahed}, \binits{T.}},
\bauthor{\bsnm{Patan{\`e}}, \binits{R.}},
\bauthor{\bsnm{Kofman}, \binits{D.}}:
\bctitle{Co-investment under uncertainty: Coalitional game formulation and
  application to edge computing}.
In: \bbtitle{IEEE International Conference on Communications (ICC)}
(\byear{2025})
\end{bchapter}
\endbibitem

%%% 54
\bibitem[\protect\citeauthoryear{Satyanarayanan}{2017}]{satyanarayanan2017emergence}
\begin{barticle}
\bauthor{\bsnm{Satyanarayanan}, \binits{M.}}:
\batitle{The emergence of edge computing}.
\bjtitle{Computer}
\bvolume{50}(\bissue{1}),
\bfpage{30}--\blpage{39}
(\byear{2017})
\end{barticle}
\endbibitem

%%% 55
\bibitem[\protect\citeauthoryear{Schmeidler}{1969}]{schmeidler1969nucleolus}
\begin{barticle}
\bauthor{\bsnm{Schmeidler}, \binits{D.}}:
\batitle{The nucleolus of a characteristic function game}.
\bjtitle{SIAM Journal on applied mathematics}
\bvolume{17}(\bissue{6}),
\bfpage{1163}--\blpage{1170}
(\byear{1969})
\end{barticle}
\endbibitem

%%% 56
\bibitem[\protect\citeauthoryear{Shi et~al.}{2016}]{shi2016edge}
\begin{barticle}
\bauthor{\bsnm{Shi}, \binits{W.}},
\bauthor{\bsnm{Cao}, \binits{J.}},
\bauthor{\bsnm{Zhang}, \binits{Q.}},
\bauthor{\bsnm{Li}, \binits{Y.}},
\bauthor{\bsnm{Xu}, \binits{L.}}:
\batitle{Edge computing: Vision and challenges}.
\bjtitle{IEEE internet of things journal}
\bvolume{3}(\bissue{5}),
\bfpage{637}--\blpage{646}
(\byear{2016})
\end{barticle}
\endbibitem

%%% 57
\bibitem[\protect\citeauthoryear{Santoyo-Gonz{\'a}lez and
  Cervell{\'o}-Pastor}{2018}]{santoyo2018edge}
\begin{bchapter}
\bauthor{\bsnm{Santoyo-Gonz{\'a}lez}, \binits{A.}},
\bauthor{\bsnm{Cervell{\'o}-Pastor}, \binits{C.}}:
\bctitle{Edge nodes infrastructure placement parameters for 5g networks}.
In: \bbtitle{2018 IEEE Conference on Standards for Communications and
  Networking (CSCN)},
pp. \bfpage{1}--\blpage{6}
(\byear{2018}).
\bcomment{IEEE}
\end{bchapter}
\endbibitem

%%% 58
\bibitem[\protect\citeauthoryear{Saad et~al.}{2009}]{saad2009coalitional}
\begin{barticle}
\bauthor{\bsnm{Saad}, \binits{W.}},
\bauthor{\bsnm{Han}, \binits{Z.}},
\bauthor{\bsnm{Debbah}, \binits{M.}},
\bauthor{\bsnm{Hjorungnes}, \binits{A.}},
\bauthor{\bsnm{Basar}, \binits{T.}}:
\batitle{Coalitional game theory for communication networks}.
\bjtitle{Ieee signal processing magazine}
\bvolume{26}(\bissue{5}),
\bfpage{77}--\blpage{97}
(\byear{2009})
\end{barticle}
\endbibitem

%%% 59
\bibitem[\protect\citeauthoryear{Simon}{1997}]{simon1997models}
\begin{bbook}
\bauthor{\bsnm{Simon}, \binits{H.A.}}:
\bbtitle{Models of Bounded Rationality: Empirically Grounded Economic Reason}
vol. \bseriesno{3}.
\bpublisher{MIT press},
\blocation{Cambridge, MA}
(\byear{1997})
\end{bbook}
\endbibitem

%%% 60
\bibitem[\protect\citeauthoryear{Shi and Li}{2018}]{shi2018discovering}
\begin{barticle}
\bauthor{\bsnm{Shi}, \binits{H.}},
\bauthor{\bsnm{Li}, \binits{Y.}}:
\batitle{Discovering periodic patterns for large scale mobile traffic data:
  Method and applications}.
\bjtitle{IEEE Transactions on Mobile Computing}
\bvolume{17}(\bissue{10}),
\bfpage{2266}--\blpage{2278}
(\byear{2018})
\end{barticle}
\endbibitem

%%% 61
\bibitem[\protect\citeauthoryear{Union}{2024}]{digitalinfrastructure}
\begin{botherref}
\oauthor{\bsnm{Union}, \binits{E.}}:
How to master europe's digital infrastructure needs.
Technical report,
European Commission
(2024).
Accessed: January 19, 2025.
\url{https://digital-strategy.ec.europa.eu/en/library/white-paper-how-master-europes-digital-infrastructure-needs}
\end{botherref}
\endbibitem

%%% 62
\bibitem[\protect\citeauthoryear{Vela et~al.}{2016}]{vela2016traffic}
\begin{bchapter}
\bauthor{\bsnm{Vela}, \binits{A.P.}},
\bauthor{\bsnm{V{\'\i}a}, \binits{A.}},
\bauthor{\bsnm{Morales}, \binits{F.}},
\bauthor{\bsnm{Ruiz}, \binits{M.}},
\bauthor{\bsnm{Velasco}, \binits{L.}}:
\bctitle{Traffic generation for telecom cloud-based simulation}.
In: \bbtitle{2016 18th International Conference on Transparent Optical Networks
  (ICTON)},
pp. \bfpage{1}--\blpage{4}
(\byear{2016}).
\bcomment{IEEE}
\end{bchapter}
\endbibitem

%%% 63
\bibitem[\protect\citeauthoryear{Xu et~al.}{2017}]{xu2017zenith}
\begin{bchapter}
\bauthor{\bsnm{Xu}, \binits{J.}},
\bauthor{\bsnm{Palanisamy}, \binits{B.}},
\bauthor{\bsnm{Ludwig}, \binits{H.}},
\bauthor{\bsnm{Wang}, \binits{Q.}}:
\bctitle{Zenith: Utility-aware resource allocation for edge computing}.
In: \bbtitle{2017 IEEE International Conference on Edge Computing (EDGE)},
pp. \bfpage{47}--\blpage{54}
(\byear{2017}).
\bcomment{IEEE}
\end{bchapter}
\endbibitem

%%% 64
\bibitem[\protect\citeauthoryear{Xu et~al.}{2020}]{xu2020understanding}
\begin{bchapter}
\bauthor{\bsnm{Xu}, \binits{D.}},
\bauthor{\bsnm{Zhou}, \binits{A.}},
\bauthor{\bsnm{Zhang}, \binits{X.}},
\bauthor{\bsnm{Wang}, \binits{G.}},
\bauthor{\bsnm{Liu}, \binits{X.}},
\bauthor{\bsnm{An}, \binits{C.}},
\bauthor{\bsnm{Shi}, \binits{Y.}},
\bauthor{\bsnm{Liu}, \binits{L.}},
\bauthor{\bsnm{Ma}, \binits{H.}}:
\bctitle{Understanding operational 5g: A first measurement study on its
  coverage, performance and energy consumption}.
In: \bbtitle{Proceedings of the Annual Conference of the ACM Special Interest
  Group on Data Communication on the Applications, Technologies, Architectures,
  and Protocols for Computer Communication},
pp. \bfpage{479}--\blpage{494}
(\byear{2020})
\end{bchapter}
\endbibitem

%%% 65
\bibitem[\protect\citeauthoryear{Zhao et~al.}{2018}]{zhao2018deploying}
\begin{barticle}
\bauthor{\bsnm{Zhao}, \binits{Z.}},
\bauthor{\bsnm{Min}, \binits{G.}},
\bauthor{\bsnm{Gao}, \binits{W.}},
\bauthor{\bsnm{Wu}, \binits{Y.}},
\bauthor{\bsnm{Duan}, \binits{H.}},
\bauthor{\bsnm{Ni}, \binits{Q.}}:
\batitle{Deploying edge computing nodes for large-scale iot: A diversity aware
  approach}.
\bjtitle{IEEE Internet of Things Journal}
\bvolume{5}(\bissue{5}),
\bfpage{3606}--\blpage{3614}
(\bye